\documentclass[twoside,11pt]{article}

\usepackage{blindtext}
\usepackage{jmlr2e}

\usepackage[utf8]{inputenc} % allow utf-8 input
\usepackage[T1]{fontenc}    % use 8-bit T1 fonts
\usepackage{hyperref}       % hyperlinks
\pdfstringdefDisableCommands{
  \def\z@{} % Redefine \z@ to do nothing
   % Redefine \mathsurround to do nothing
   % Redefine \unskip to do nothing
  \def\hbox{} % Redefine \hbox to do nothing
\def\@@italiccorr{} % Ignore \@@italiccorr in PDF strings
  % Add more commands as needed
}
\usepackage{url}            % simple URL typesetting
\usepackage{booktabs}       % professional-quality tables
\usepackage{amsfonts}       % blackboard math symbols
\usepackage{nicefrac}       % compact symbols for 1/2, etc.
\usepackage{microtype}      % microtypography
\usepackage{xcolor}         % colors
\usepackage{dsfont}
\usepackage{mathtools}
\usepackage{amsmath,amssymb,bm}
\usepackage{algorithm}
\usepackage{algcompatible}
\usepackage{graphicx,wrapfig}
\usepackage{textcomp}
\usepackage{stmaryrd}
\usepackage{mathtools}
\usepackage{array,multirow}
\usepackage{empheq}
\usepackage{lipsum}
\usepackage[justification=justified,singlelinecheck=false]{caption}

% Definitions of handy macros can go here

\newcommand{\bxi}{\bm{\xi}}
\newcommand{\bpsi}{\bm{\psi}}
\newcommand{\bmu}{\bm{\mu}}
\newcommand{\bLambda}{\bm{\Lambda}}

\newcommand{\btheta}{\bm{\theta}}
\newcommand{\e}{\mathbb{E}}

\newcommand{\dkl}{D_{\textup{KL}}}

\newcommand{\bS}{\bm{\lambda}}
\newcommand{\bL}{\bm{x}}
\newcommand{\bLX}{\bm{X}}
\newcommand{\bX}{\mathbf{X}}
\newcommand{\wbS}{\widetilde{\bS}}
\newcommand{\wS}{\widetilde{\lambda}}
\newcommand{\wbL}{\widetilde{\bm{x}}}
\newcommand{\wbmu}{\widetilde{\bm{\mu}}}
\newcommand{\wcg}{C_{m \to k}}
\newcommand{\wcnb}{C_{m \to k}^{\textup{NB}}}
\newcommand{\overbar}[1]{\mkern 1.5mu\overline{\mkern-1.5mu#1\mkern-1.5mu}\mkern 1.5mu}
\newcommand{\wcnbd}{\overbar{C}_{m \to k}^{\textup{NB}}}
\newcommand{\wcasl}{C_{m \to k}^{\textup{ASL}}}
\newcommand{\wcasles}{\widehat{\bm{C}}_{m \to k}^{\textup{ASL}}}
\newcommand{\wcnbsles}{\widehat{\bm{C}}_{m \to k}^{\textup{NB}}}

\newcommand{\qed}{$\hfill\blacksquare$}

\DeclareMathOperator{\asc}{\xrightarrow{\textup{a.s.}}}
\DeclareMathOperator{\idc}{\xrightarrow{\textup{dist.}}}
\DeclareMathOperator{\T}{\mathsf{T}}
\DeclareMathOperator*{\argmax}{arg\,max}

\newtheorem{mycorollary}{Corollary}
\newtheorem{mydefinition}{Definition}
\newtheorem{myremark}{Remark}

\usepackage{lastpage}
\jmlrheading{27}{2026}{1-\pageref{LastPage}}{7/23; Revised 4/26}{5/26}{23-0910}{Mert Kayaalp and Ali H. Sayed} \ShortHeadings{title}{Kayaalp and Sayed}

\ShortHeadings{Causality in Social Learning}{Kayaalp and Sayed}
\firstpageno{1}

\begin{document}

\title{Causal Influences over Social Learning Networks}

\author{\name Mert Kayaalp\thanks{Corresponding author. Part of this work was conducted while the author was at EPFL.} \email \{mert.kayaalp\}@idsia.ch \\ 
\addr  Dalle Molle Institute for Artificial Intelligence (IDSIA USI-SUPSI)\\
       CH-6900, Lugano, Switzerland \AND Ali H. Sayed \email \{ali.sayed\}@epfl.ch \\
       \addr \'{E}cole Polytechnique F\'ed\'erale de Lausanne (EPFL)\\
       CH-1015, Lausanne, Switzerland}

\editor{Christophe Giraud}

\maketitle

\begin{abstract}%

This paper investigates causal influences between agents linked by a social graph and interacting over time. In particular, the work examines the dynamics of social learning models and distributed decision-making protocols, and derives expressions that reveal the causal relations between pairs of agents and explain the flow of influence over the network.  The results turn out to be dependent on the graph topology and the level of information that each agent has about the inference problem they are trying to solve. Using these conclusions, the paper proposes an algorithm to rank the overall influence between agents to discover highly influential agents. It also provides a method to learn the necessary model parameters from raw observational data. The results and the proposed algorithm are illustrated by considering both synthetic data and real social media data.  
\end{abstract}

\begin{keywords}%
 social influence, causal effect, diffusion of influence, spillover effects, opinion dynamics, causal ranking, distributed decision making
\end{keywords}

\section{Introduction}

Identifying influential agents in a network is a crucial problem with wide-ranging applications, such as detecting propaganda-sharing accounts \citep{smith2021automatic} or selecting individuals to advertise to \citep{lagree2018algorithms}. However, over any social network, information usually spreads and mixes, leading to ripple effects that make the discovery of influence rather challenging. For example, the information leaving a source agent may be altered and combined with comments/data from other agents along the path until it reaches its destination. Most prior works measure influence through some network topology-based properties such as the eigenvector centrality of an agent \citep{dablander2019node}, or through some descriptive importance factor depending on the problem at hand \citep{kempe2003maximizing,banerjee2013diffusion,valentina2023discovering}. In comparison, this work treats influence as a {\em causal} quantity and approaches it from the perspective of structural causal models \citep{wright1934method,pearl2009causality}. More specifically, influence will be defined as the change in behavior of the network when interventions occur at individual agents. This is a useful method to discard spurious and non-causal associations, unlike other methods based, for example, on the use of Granger causality \citep{granger1969investigating}. Obviously, conducting interventional experiments may not be always feasible over real world social networks. However, with the help of appropriate representative models, one can rely on the use of raw observational data \citep{pearl2018book}.

To that end, this paper considers some common mathematical models for social learning over graphs \citep{degroot1974,acemoglu_2011,jadbabaie_2012,chamley2013,nedic_2017}. These models are widely used to model opinion formation from a behavioral perspective \citep{mossel2017opinion,krishnamurthy2022dynamics}, and their generalization to real-world scenarios is verified by field experiments \citep{chandrasekhar2020testing}. From an engineering perspective, the models can be utilized to design distributed decision-making systems where a group of agents (e.g., robots, sensors) collaboratively make decisions about the environment \citep{djuric2012}. However, most of the research on social learning models has focused on analyzing their convergence and error behavior under different assumptions, such as verifying whether truth learning occurs in the presence of malicious agents supplying misinformation \citep{hare2019,li2019_malicious}, or whether the models can track drifts in the underlying hypothesis \citep{bordignon_2021}. In contrast, this study utilizes these opinion formation models to examine \emph{causal effects} over social graphs. To the best of our knowledge, this appears to be the first study to do so by studying the diffusion of influence over space and time from a causal inference perspective.

\subsection{Contributions} 
\begin{itemize}
    \item Our work proposes a novel causal impact criterion for dynamic social networks in Sec.~\ref{sec:causal_effect_def}. It quantifies how much an agent $m$ affects another agent $k$ for each agent pair \((m,k)\).
    \item We derive closed-form expressions for these impact factors for two social network models: non-Bayesian social learning \citep{jadbabaie_2012} and adaptive social learning \citep{bordignon_2021} in Secs. \ref{sec:main_results_nbsl} and \ref{sec:main_results_asl}, respectively. The non-Bayesian framework \citep{jadbabaie_2012} models network dynamics through a (log)-linear autoregressive process, stated in future expression \eqref{eq:dif_combine_step}, and incorporates in-network interactions as well as possibly hidden out-of-network factors. The adaptive framework \citep{bordignon_2021} additionally introduces temporal weightings that emphasize more recent interactions, as stated in \eqref{eq:LLF_evolution_asl}. Beyond general causal influence expressions, we also analyze useful special cases to illustrate how the causal effects depend on the model parameters. 
    \item Analyzing the influence for every pair \((m,k)\) results in a network causality matrix, which offers various options to rank agents for their overall influence. In Sec.~\ref{sec:causal_ranking}, we propose a particular algorithm, {\sf \small CausalRank}, to rank agents based on their overall influence on the network, while accounting for the fact that more impactful agents should be weighted more. This algorithm leverages the eigenvector centrality of the bipartite causal relations matrix introduced in Sec.~\ref{sec:causal_effect_def}. 
    \item Additionally, we introduce a graph causality learning (GCL) algorithm in Sec.~\ref{sec:causal_discovery}, designed to estimate causal influences using observational data comprised of social interactions and the graph topology underlying the agents.
    \item In Sec.~\ref{sec:computer_sims}, we illustrate our results with synthetic data. In Sec.~\ref{sec:twitter_app}, we apply these findings to real social media data, demonstrating their practical usefulness.
\end{itemize}

\subsection{Challenges for estimating influence over social networks}

\begin{figure}[t]
\centering
\begin{minipage}{0.3\textwidth}
  \includegraphics[width=\linewidth]{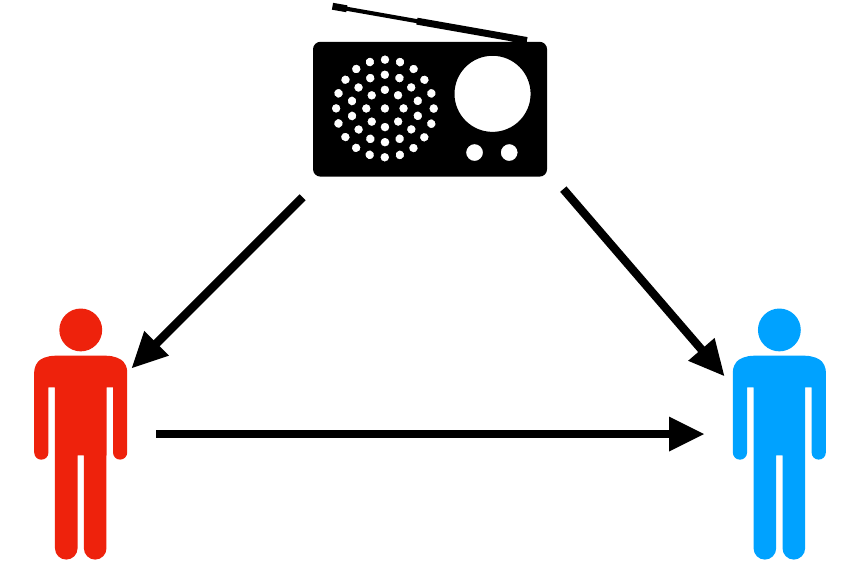}
  %\label{fig:contributions_confounding}
\end{minipage}%
\hfill
\begin{minipage}{0.25\textwidth}
  \includegraphics[width=\linewidth]{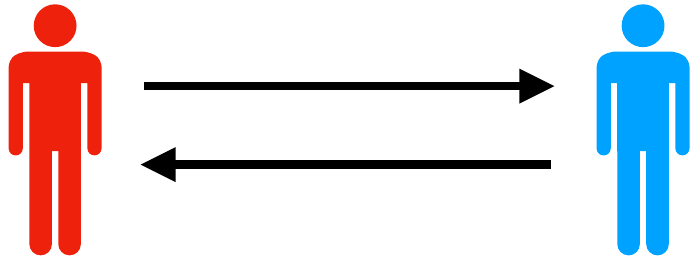}
      %\label{fig:contributions_time_series}
\end{minipage}%
\hfill
\begin{minipage}{0.3\textwidth}
  \includegraphics[width=\linewidth]{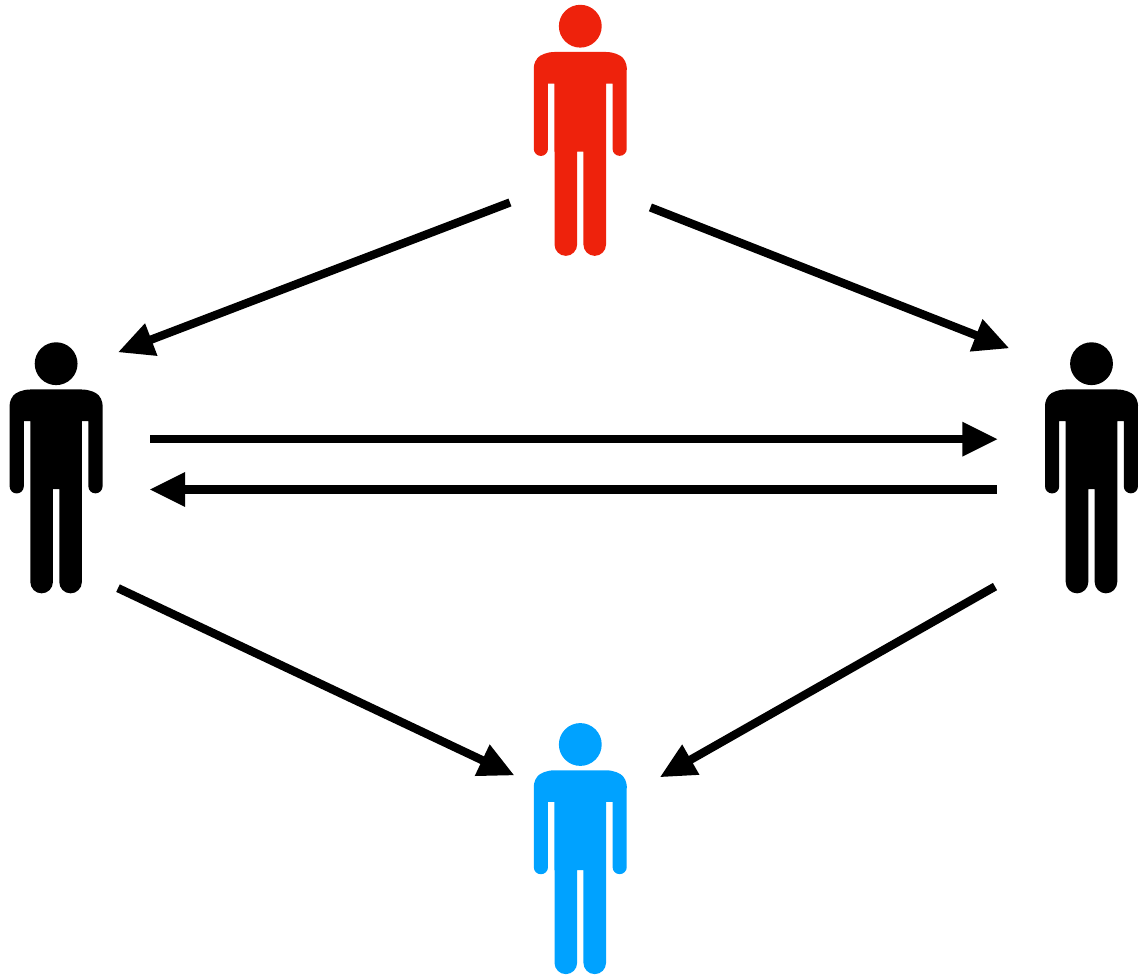}
      %\label{fig:contributions_mixing}
\end{minipage}\caption{Some inherent challenges in estimating the causal influence of one agent on another, represented by the red and blue agents, respectively. \textit{(Left):} There can be confounding factors that influence both parties and induce non-causal correlation. \textit{(Middle):} Relationships within social networks are often bidirectional that are not instantaneous, but rather spread out over time. \textit{(Right):} Information transmitted from the source is processed and potentially changed along the way.}\label{fig:challenges_inf}
\end{figure}

Influence estimation over social networks faces several challenges. \\

\noindent \textbf{Confounding factors.} Understanding influence requires disentangling correlation from causation. For example, over a social network, it is often observed that individuals who are connected tend to have similar (correlated) opinions. However, this does not necessarily imply a causal relationship and there can also exist confounding factors. For instance, individuals may obtain information from similar external sources (say, the same TV channels), or they may be connected to others who share similar preferences (a.k.a. homophily) --- see left panel in Fig.~\ref{fig:challenges_inf}. Similar issues arise in distributed decision-making systems, such as networks of wireless sensors or robots. Devices that communicate with each other are often in spatial proximity, leading to correlated observations. Therefore, accounting for these confounding factors is crucial for discovering {\em true causal} relationships. \\

\noindent \textbf{Temporal dynamics.} Social influence is not a one-time occurrence but rather a continuous process that unfolds over time. Therefore, we adopt a time-series approach to capture this dynamic nature. Unlike directed acyclic graph based models in the literature, we accommodate both cyclic networks and bidirectional links to capture feedback mechanisms --- see middle panel in Fig.~\ref{fig:challenges_inf}. Doing so is necessary in order to discover the propagation of influence over both space and time. \\

\noindent \textbf{Mixing and diffusion of information.} When examining the influence of an agent $m$ on another agent $k$, it is essential to acknowledge that information leaving agent $m$ can undergo alterations and become intertwined with the opinions of other agents in the network before reaching agent $k$ --- see right panel in Fig.~\ref{fig:challenges_inf}. This phenomenon introduces additional complexity to the study of influence propagation over graphs. \\

In this work, the social network models we consider allow us to treat these challenges together in some detail. Specifically, the expressions we derive within a rigorous causal theoretical framework quantify how the instantaneous and direct social effects diffuse over time and network. In doing so, as we seek to understand the total and overall causal effects, we take into account potential hidden confounding factors.

\section{Related Work}

\paragraph{Influence quantification.} \hspace{-0.9em} Most existing works in the literature quantify influence by relying on network topology-based measures (such as degree centrality) \citep{dablander2019node} or by examining task-specific importance factors \citep{kempe2003maximizing, shah2011, banerjee2013diffusion,valentina2023discovering}. However, these measures often lack a causal interpretation and they may fail to eliminate non-causal correlation-inducing factors. Another line of work use randomized experiments in order to identify causal relations, which, in principle, can discard latent confounders \citep{aral2012identifying,bond2012}. Unfortunately, conducting controlled experiments is impractical for many real-world scenarios. In this work, we study social influence from a structural causal model framework \citep{wright1934method,pearl2009causality}, which will enable us to circumvent these issues with the help of useful models for interactions over social graphs. 

\paragraph{Causal influence.} \hspace{-0.9em} While a few previous works have treated social influence as a causal effect, they have some limitations. For instance, \citep{sridhar2022} only considered a single time step of interaction and did not address the time series formulation, which is critical in capturing information diffusing over time. Likewise, \citep{soni2019} utilized Granger causality \citep{granger1969investigating}, which is a predictive tool that only depicts the natural behavior of the system and is not robust against latent confounders \citep{eichler2010granger}. In contrast, our approach is based on the causality principle that manipulating the causes, while keeping other variables constant, changes the effect \citep{woodward2004book,pearl2009causality}. This idea was first applied to general time series in \citep{eichler2010granger}. Here, we apply it to a time series of variables over a graph consisting of interconnected agents. 

\paragraph{Networked time series.} \hspace{-0.9em} Some works such as \citep{lee2023finding,smith2018,smith2021automatic} relied on the Rubin's potential outcomes framework \citep{rubin1974estimating}. Specifically, \citep{smith2018,smith2021automatic} analyzed the causal impact of each agent by \emph{directly} modeling the response of other agents under a binary treatment (i.e., intervention) on a particular agent. Our work is complementary but different in the sense that we apply the procedure of hypothetical interventions on graphical models \citep{pearl2009causality} to the canonical models of social learning networks \citep{jadbabaie_2012,nedic_2017,bordignon_2021}. While doing so, we derive the expressions for the causal impact factors by analyzing the step-by-step dynamics of the social networks, and arrive at results that clarify how causal influences propagate over a graph. 

\paragraph{Social learning.} \hspace{-0.9em} The literature on mathematical models for social learning is rich --- see the surveys \citep{krishnamurthy2022dynamics, mossel2017opinion} and the papers \citep{jadbabaie_2012,bordignon_2021,nedic_2017}. In this study, we focus on the DeGroot type of networked consensus modeling \citep{degroot1974, golub2010naive}, which has been widely applied in social network research. In particular, we examine the non-Bayesian extensions studied by \citep{jadbabaie_2012, molavi2018theory}, which allow agents to interact with the environment and to exchange information within the network. These models and their extension to adaptive agents \citep{bordignon_2021} are briefly reviewed in Sec.~\ref{sec:social_learning_model}. 
Another related line of work in social learning considers stubborn agents \citep{acemoglu2010spread, ghaderi2014, vial2021local, hunter2022optimizing}. 
This literature is primarily concerned with forward modeling. It considers questions like how agents with fixed opinions (e.g., misinformation spreading agents) affect collective behavior, including whether consensus is disrupted \citep{acemoglu2010spread}, what equilibrium or disagreement patterns emerge \citep{ghaderi2014}, how quickly beliefs converge  \citep{vial2021local}, and how outcomes depend on the agents’ locations over the network \citep{hunter2022optimizing}. 
Our objective is different in that we study hypothetical interventions in which an agent is assigned an arbitrary belief in order to identify that agent’s causal influence on the rest of the network. This has fundamentally different goals and leads to distinct formulas and conclusions.

\paragraph{Network interference.} \hspace{-0.9em} At the intersection of causal inference and networks, there also exists a body of work focused on network interference \citep{toulis2013estimation, sussman2017elements, agarwal2022network}. These studies examine causal inference when interventions on certain agents can affect others within the network, and aim to design randomized experiments that take this into account. However, they typically assume that only immediate neighbors of an agent can impact an individual's response. Although our problem setting differs, our analytical contributions may still prove useful for this area of research, since we discover how influence and interference propagate throughout the network. For example, our findings allow us to quantify how an agent's influence diminishes with increasing distance from other agents in the graph. 

\paragraph{Notation.} Random variables are denoted in bold, e.g., \(\bm{x}\). A sequence of random variables \(\{\bm{x}_i\}\) over index \(i\) converging to a random variable \(\bm{x}\) in distribution is denoted by \(\bm{x}_i \idc \bm{x}\). Almost sure convergence is denoted by \(\bm{x}_i \asc \bm{x}\). The \(k\)-th entry of a vector \(v\) is denoted by \(v_k\) or \([v]_k\). We denote the all-ones vector of dimension \(K\times 1\) by \(\mathds{1}_{K}\). For distributions \(p_1\) and \(p_2\), \(\dkl(p_1||p_2)\) denotes their KL divergence.

\section{Social Learning}\label{sec:social_learning_model}

\subsection{Problem setting}

We consider a network of \( K \) agents that are trying to infer the hypothesis that best explains their observations about the world. More formally, agents are trying to learn the true state of nature \( \theta^\circ \in \Theta \) from a finite set of \( H \) hypotheses, \( \Theta = \{\theta_1,\theta_2,\dots,\theta_H \}\). At each time instant \( i \), each agent \( k \) receives a personal and partially informative observation \( \bxi_{k,i}\), which encapsulates all out-of-network information available to \(k\) at time \(i\) and is distributed according to some marginal likelihood function \( L_k (\xi | \theta^\circ)\) dependent on the true state.
The true state of nature can be viewed as an underlying environmental or contextual variable that governs the data-generating process of the observations.
In engineering settings, it may also capture a distributional regime or shift. 
As a recent example, the state of nature may encode whether there is an ``AI bubble''.
People then gather evidence from their local environments, for instance, news reportings, company revenues and spending, user adoption, or their own experience with AI tools, and update their beliefs through interactions with their peers.

The observations \( \bxi_{k,i}\) are assumed i.i.d.\ over time, and the likelihood function  \( L_k (\xi | \theta^\circ)\) is assumed to be time-invariant. From a behavioral perspective of social networks, the observations often represent external sources of information. Fixed distribution assumptions correspond to these sources providing relatively stable information over time. For instance, a TV channel may consistently support Candidate A over Candidate B for an election, justifying the assumption of time-invariant likelihood functions. Therefore, studying causal influences under the fixed distribution model is useful in many situations. As the analysis in the paper shows, even in this case, the arguments are not straightforward and there are many useful insights that can be gained from such studies. We acknowledge that in some applications the distributions can drift with time. We leave the study of this scenario to future work. 

However, the observations are not necessarily independent across the agents. Since spatial independence is not required, our setting does not exclude possible latent confounders in the observation model. Each agent \(k\) knows the likelihood model \( L_k (\xi | \theta)\) for every possible hypothesis \( \theta \in \Theta \). If the distribution \( L_k (\xi | \theta)\) for a hypothesis \(\theta \neq \theta^\circ\) is sufficiently different from \( L_k (\xi | \theta^\circ)\), then it is easier for agent \( k \) to distinguish $\theta^\circ$ from $\theta$. This ``distinguishing power'' can be quantified using the KL divergence between the likelihood models, namely,
\begin{equation}
    d_k (\theta) \triangleq \dkl \Big (L_k(\cdot|\theta^{\circ}) || L_k(\cdot|\theta) \Big).
\end{equation}
The larger this quantity is, the more \emph{informative} agent \(k\)'s observations are for distinguishing a wrong hypothesis $\theta$ from the true hypothesis $\theta^\circ$. 
It can also be interpreted as the degree to which an agent is exposed to the true state of nature.
In order to avoid pathological cases, we assume that \(d_k (\theta) < \infty\) for each agent \(k\) and hypothesis \(\theta\). This condition makes sure that the likelihood functions for different hypotheses share the same support; and no observation alone is sufficient to refute any hypothesis.
\begin{mydefinition}[Global identifiability]\label{def:global_identifiability}
For each wrong hypothesis \(\theta \neq \theta^\circ\), if there exists at least one clear-sighted agent $k^\star$ that can distinguish $\theta$ from the true hypothesis $\theta^\circ$, i.e., \(d_{k^\star} (\theta) >0\), then the true state of the environment is said to be globally identifiable. \qed
\end{mydefinition}
Notice that global identifiability does not require \emph{local} identifiability, which is the ability of each individual agent to distinguish $\theta^\circ$ from any other hypothesis without cooperation with the other agents.

Based on the local observations and on interactions with other agents, each agent \( k \) forms an opinion (i.e., a belief vector), denoted by \( \bmu_{k,i} \), which is a probability mass function defined over the set of hypotheses $\Theta$. Here, the entry \( \bmu_{k,i} (\theta) \) quantifies the confidence \( k \) has about \( \theta \in \Theta \) being the true hypothesis \( \theta^\circ\) at time $i$. We assume that all agents have positive initial beliefs and they do not reject any hypothesis at the start of the learning process, i.e., \(\bmu_{k,0}(\theta)>0\) for each agent \(k\) and \(\theta \in \Theta\). Agents exchange their beliefs with each other according to the communication topology described next. 

\newpage
\subsection{Network topology}

\begin{wrapfigure}[17]{l}{6cm}\vspace{-1em}
\includegraphics[width=6cm]{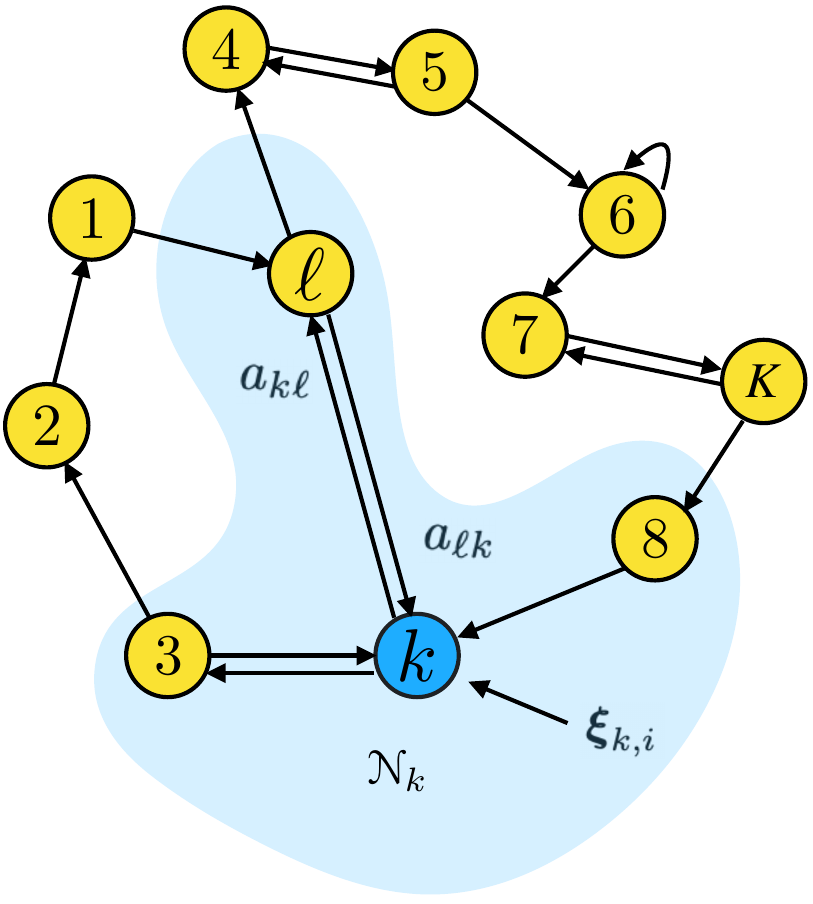}
\vspace{-2.1em}\caption{An illustration of a strongly connected network topology.}\label{fig:network_topology}
\end{wrapfigure} 

The network of agents is assumed to be strongly-connected \citep{sayed_2014}---see Fig.~\ref{fig:network_topology}. This means that there exists a path linking any agent $\ell$ to any other agent $k$, which starts at $\ell$ and ends at $k$. Moreover, there should exist at least one agent $k^\circ$ with a self-loop, i.e., an agent that does not discard its own observations. The entry \( a_{\ell k }\) of the combination matrix \( A=[a_{\ell k}]\) denotes the confidence score that agent \( k \) assigns to the information received from agent \( \ell \). This score is positive if, and only if, agent \( \ell \) is in the neighborhood of agent \( k \). Otherwise it is equal to zero. We denote the neighborhood of agent $k$ with \(\mathcal{N}_k\), i.e., $a_{\ell k} > 0$ if, and only if, $\ell \in \mathcal{N}_k$. The graph underlying the network is directed and hence the combination matrix \( A \) is not necessarily symmetric, i.e., in general \( a_{\ell k} \neq a_{k \ell} \). Nevertheless, the confidence scores that an agent assigns to its neighbors should add up to one. This means that the matrix \( A \) is left-stochastic. Furthermore, since the underlying graph is strongly-connected, $A$ is irreducible and aperiodic. Therefore, in view of the Perron-Frobenius theorem \citep{pillai2005perron}, it satisfies
\begin{align}\label{eq:perron_def}
   A^{\T}\mathds{1}_{K}=\mathds{1}_{K}, \quad \mathds{1}_{K}^{\T}v=1, \quad Av=v,
\end{align}
where \( v \) is called the Perron vector of \( A \) whose entries are all positive and add up to one. Here, the $k$-th entry of the vector $v$ measures the relative centrality of agent $k$ in the network.

\subsection{Non-Bayesian social learning (NBSL)}

In this section, we present the non-Bayesian social learning (NBSL) strategy from \citep{jadbabaie_2012,nedic_2017,lalitha_2018}. In this strategy, based on the observation \( \bxi_{k,i}\) at time instant \( i \), each agent first updates its belief in a \emph{locally Bayesian} fashion to obtain its intermediate belief:
\begin{align}\label{eq:local_bayesian}
    \bpsi_{k,i} (\theta) \propto L_k (\bxi_{k,i} | \theta ) \bmu_{k,i-1}(\theta).
\end{align}
The proportionality sign $\propto$ means that the entries of the resulting vector are normalized to add up to one, as befits a true probability mass function. The motivation for \eqref{eq:local_bayesian} is at least two-fold. From a behavioral point of view, Bayes's rule is used to model human reasoning under uncertainty in neuroscience \citep{friston2005theory} and the social sciences \citep{oaksford2007bayesian,easley2010networks}. From a system design perspective, Bayes's rule is known to be an optimal information processing rule \citep{zellner1988}. In the next step, the intermediate beliefs are shared with other agents, which may average them in a geometric manner to form the updated belief using the confidence scores they assign to their neighbors as follows:
\begin{equation}\label{eq:dif_combine_step}
    \bmu_{k,i}(\theta) \propto \prod_{\ell \in \mathcal{N}_k}  \big ( \bpsi_{\ell,i} (\theta) \big )^{a_{\ell k}}.   
\end{equation}
The combination step \eqref{eq:dif_combine_step} is a non-Bayesian way of combining beliefs and is inspired by the fact that humans are boundedly rational \citep{conlisk1996bounded}. In the above implementation, the agents are combining their neighbors' \emph{instantaneous} opinions, as opposed to behaving in a \emph{fully} Bayesian manner \citep{acemoglu_2011}, which would require global information (such as the graph topology and access to all observations). This requirement makes the fully Bayesian solution NP-hard in general \citep{hkazla2021bayesian}. Although there are variations based on the arithmetic averaging \citep{jadbabaie_2012}, in this work we consider the geometric averaging form described above.

An important quantity for the analysis of the strategy \eqref{eq:local_bayesian}--\eqref{eq:dif_combine_step} is the log-belief ratio vector $\bS_{i}(\theta) \triangleq [\bS_{1,i}(\theta), \dots, \bS_{K,i}(\theta)]^{\T}$ with individual entries defined as
\begin{align}\label{eq:lambda_definition}
\bS_{k,i}(\theta) \triangleq \log\frac{\bmu_{k,i}(\theta^{\circ})}{\bmu_{k,i}(\theta)}.
\end{align}
For an agent $k$, if this quantity is positive for each $\theta \neq \theta^\circ$, then its belief vector is maximized at the true hypothesis and the agent can guess the correct hypothesis. Assuming no additional noise in the belief update and combination steps \eqref{eq:local_bayesian}--\eqref{eq:dif_combine_step}, these equations imply that the vector $\bS_{i}(\theta)$ evolves according to the following linear recursion:
\begin{align}\label{eq:LLF_evolution}
   \bS_{i}(\theta) =  A^{\T}\big (\bS_{i-1}(\theta) + \bL_{i}(\theta) \big ),
\end{align}
where $\bL_{i}(\theta) \triangleq [\bL_{1,i}(\theta), \dots, \bL_{K,i}(\theta)]^{\T}$ is the vector of log-likelihood ratios (LLRs) at time instant \( i \):
\begin{equation}
\bL_{k,i}(\theta) \triangleq \log \frac{L_k(\bxi_{k,i} | \theta^\circ)}{L_k(\bxi_{k,i} | \theta)}.
\end{equation}
\begin{theorem}[Truth learning \citep{nedic_2017,lalitha_2018}]\label{theorem:truth_learning_nbsl}
If agents employ the strategy \eqref{eq:local_bayesian}--\eqref{eq:dif_combine_step}, the asymptotic decay rate of a wrong hypothesis is equal for all agents and given by
\begin{equation}\label{eq:rho_def}
      \frac 1 i \bS_{k,i}(\theta) \asc  \sum_{\ell=1}^K v_\ell d_{\ell} (\theta).
\end{equation}
Moreover, if global identifiability (Def.~\ref{def:global_identifiability}) holds, then all agents learn the truth with full confidence, i.e.,
\begin{equation}
    \bmu_{k,i} (\theta^\circ) \asc 1.
\end{equation}\qed
\end{theorem}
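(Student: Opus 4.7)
The plan is to unroll the linear recursion \eqref{eq:LLF_evolution} into the closed form
\begin{equation*}
\bS_i(\theta) \;=\; (A^{\T})^i \bS_0(\theta) \;+\; \sum_{t=1}^i (A^{\T})^{i-t+1}\bL_t(\theta),
\end{equation*}
and then analyze each piece after dividing by $i$. The first summand is uniformly bounded in $i$: by \eqref{eq:perron_def} the matrix $A^{\T}$ is row-stochastic, hence a contraction in $\ell_\infty$, so $\|(A^{\T})^i \bS_0(\theta)\|_\infty\leq \|\bS_0(\theta)\|_\infty$, and this term vanishes deterministically after division by $i$. All the action is concentrated in the convolution of the combination operator with the i.i.d.\ LLR sequence.

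To unpack the convolution, I would exploit primitivity of $A$, which follows from strong connectivity together with the self-loop at $k^\circ$. Perron--Frobenius then yields the geometric decomposition $(A^{\T})^n = \mathds{1}_K v^{\T} + E_n$, where $\|E_n\|\leq C\rho^n$ for some $C>0$ and $\rho\in(0,1)$. Plugging this into the closed form splits the convolution into a rank-one ``consensus'' part and an exponentially damped residual,
\begin{equation*}
\frac{1}{i}\sum_{t=1}^i (A^{\T})^{i-t+1}\bL_t(\theta) \;=\; \mathds{1}_K\biggl(\frac{1}{i}\sum_{t=1}^i v^{\T}\bL_t(\theta)\biggr) \;+\; \frac{1}{i}\sum_{t=1}^i E_{i-t+1}\bL_t(\theta).
\end{equation*}
Since $\{v^{\T}\bL_t(\theta)\}_{t\geq 1}$ is i.i.d.\ with mean $\e[v^{\T}\bL_1(\theta)] = \sum_\ell v_\ell d_\ell(\theta)$, Kolmogorov's strong law of large numbers yields that the consensus piece converges almost surely to $\mathds{1}_K\sum_\ell v_\ell d_\ell(\theta)$, whose $k$-th coordinate is exactly the limit asserted in \eqref{eq:rho_def}.

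The step I expect to be the main obstacle is showing that the residual $(1/i)\sum_{t} E_{i-t+1}\bL_t(\theta)$ vanishes almost surely, because the LLRs are only assumed to have a finite first moment via $d_k(\theta)<\infty$. My plan is to bound its norm pointwise by $\tfrac{C\rho}{(1-\rho)\,i}\max_{1\leq t\leq i}\|\bL_t(\theta)\|$ using $\|E_n\|\leq C\rho^n$, and then to invoke Borel--Cantelli together with the summability $\sum_i \mathbb{P}(\|\bL_1(\theta)\|>i\epsilon)\leq \e\|\bL_1(\theta)\|/\epsilon$ (the standard regularity $\e\|\bL_1(\theta)\|<\infty$, typically ensured by the non-degeneracy of the likelihoods that already guarantees $d_k(\theta)<\infty$) to conclude $\max_{t\leq i}\|\bL_t(\theta)\|/i\asc 0$. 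This completes \eqref{eq:rho_def}. For the truth-learning consequence, global identifiability supplies a clear-sighted agent $k^\star$ with $d_{k^\star}(\theta)>0$, while primitivity of $A$ forces $v_{k^\star}>0$; hence $\sum_\ell v_\ell d_\ell(\theta)>0$ for every wrong $\theta$. Therefore $\bS_{k,i}(\theta)\asc+\infty$, i.e., $\bmu_{k,i}(\theta)/\bmu_{k,i}(\theta^\circ)\asc 0$ for each $\theta\neq\theta^\circ$, and since $\Theta$ is finite and beliefs sum to one, $\bmu_{k,i}(\theta^\circ)\asc 1$.
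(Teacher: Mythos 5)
The paper does not prove this theorem; it is imported from \cite{nedic_2017,lalitha_2018}. Your argument is correct and is essentially the standard proof from those references: unroll the linear recursion, use primitivity of $A$ (strong connectivity plus the self-loop at $k^\circ$) to split $(A^{\T})^n$ into the rank-one Perron projector $\mathds{1}_K v^{\T}$ plus a geometrically decaying remainder, apply the strong law to the consensus component, and kill the remainder with a Borel--Cantelli bound on $\max_{t\le i}\|\bL_t(\theta)\|/i$. The deduction of $\bmu_{k,i}(\theta^\circ)\asc 1$ from positivity of $v$ and global identifiability, via \eqref{eq:transformation_as} and finiteness of $\Theta$, is also right.

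One point worth making explicit: Kolmogorov's SLLN and your Borel--Cantelli step both require $\e|v^{\T}\bL_1(\theta)|<\infty$, whereas the paper only assumes $d_k(\theta)=\e[\bL_{k,1}(\theta)]<\infty$, and $\bL_{k,1}(\theta)$ is not sign-definite. You wave at this as ``standard regularity,'' but it actually follows from the stated assumption with no extra hypothesis: writing $\bL_{k,1}(\theta)=-\log\bigl(L_k(\bxi|\theta)/L_k(\bxi|\theta^\circ)\bigr)$ and using $(\log x)^+\le (x-1)^+\le x$, one gets $\e[(\bL_{k,1}(\theta))^-]\le \e_{L_k(\cdot|\theta^\circ)}[L_k(\bxi|\theta)/L_k(\bxi|\theta^\circ)]\le 1$, so the negative part is always integrable and $\e|\bL_{k,1}(\theta)|\le d_k(\theta)+2<\infty$. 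With that one-line lemma inserted, your proof is complete.
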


\subsection{Adaptive social learning (ASL)}
The traditional non-Bayesian social learning (NBSL) strategy \eqref{eq:local_bayesian}--\eqref{eq:dif_combine_step} described in the previous section has the drawback that agents do not prioritize new observations against their old observations. In addition to falling short in modelling human behavior, this strategy can be disadvantageous for engineering applications that require adaptation under non-stationary environments. To tackle this issue, the work \citep{bordignon_2021} proposed changing the adaptation step \eqref{eq:local_bayesian} into\footnote{In fact, \citep{bordignon_2021} only considers the special cases of \( \beta = \delta\) and \(\beta = 1 \). However, their results can be adapted to general \(\beta > 0 \) straightforwardly.}
\begin{equation}\label{eq:asl_local_update}
        \bpsi_{k,i} (\theta) \propto L_k^\beta (\bxi_{k,i} | \theta ) \bmu_{k,i-1}^{1-\delta}(\theta),
\end{equation}
where $0 < \delta < 1$ and $\beta > 0 $ are design parameters. In particular, large values of $\delta$ or $\beta$ place more focus on new observations, whereas small values give importance to past beliefs. The modified adaptation step \eqref{eq:asl_local_update} alters the log-belief ratio recursion \eqref{eq:LLF_evolution} to
\begin{align}\label{eq:LLF_evolution_asl}
   \bS_{i}(\theta) =  A^{\T} \big ((1-\delta)\bS_{i-1}(\theta) + \beta \bL_{i}(\theta) \big ).
\end{align}
In contrast to the NBSL case from the previous section where the beliefs converge to the truth almost surely (Theorem~\ref{theorem:truth_learning_nbsl}), in the \emph{adaptive social learning} (ASL) strategy defined by steps \eqref{eq:asl_local_update} and \eqref{eq:dif_combine_step}, the beliefs will have everlasting random fluctuations that are necessary for keeping adaptation alive. The next result states that these random fluctuations have a regular behavior in the limit.
\begin{theorem}[Convergence in distribution \citep{bordignon_2021}]\label{th:asl_conv_dist}
    Under the ASL strategy \eqref{eq:asl_local_update} and \eqref{eq:dif_combine_step}, the log-belief ratios converge in distribution to the following absolutely convergent series:
    \begin{equation}\label{eq:asl_conv_dist}
         \bS_{i}(\theta) \idc \beta \sum_{j=1}^\infty (1-\delta)^{j-1} (A^{\T})^j \bL_{j}(\theta).\vspace{-0.5em}
    \end{equation}\qed
\end{theorem}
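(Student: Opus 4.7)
The plan is to iterate the recursion \eqref{eq:LLF_evolution_asl} backward in time to express $\bS_i(\theta)$ explicitly, and then to invoke the i.i.d.\ property of the observations to identify the resulting finite sum with a partial sum of the target infinite series. Unrolling \eqref{eq:LLF_evolution_asl} from time $i$ down to $0$ gives
\begin{equation}\label{eq:unrolled-proposal}
\bS_i(\theta) = (1-\delta)^i (A^{\T})^i \bS_0(\theta) + \beta \sum_{j=1}^i (1-\delta)^{j-1} (A^{\T})^j \bL_{i-j+1}(\theta).
\end{equation}
The first term is purely deterministic, since the initial beliefs are prescribed, and decays to zero: because $A$ is left-stochastic, $A^{\T}$ is row-stochastic with $\|(A^{\T})^i\|_\infty \le 1$, and combined with $(1-\delta)^i\to 0$ for $0<\delta<1$, the initial-condition contribution vanishes.

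For the stochastic term, I would exploit the i.i.d.\ property of $\{\bL_j(\theta)\}_{j\ge 1}$, whose joint law is invariant under the reversal $j \mapsto i-j+1$, yielding
\begin{equation}
\sum_{j=1}^i (1-\delta)^{j-1} (A^{\T})^j \bL_{i-j+1}(\theta) \ \stackrel{d}{=}\ \sum_{j=1}^i (1-\delta)^{j-1} (A^{\T})^j \bL_{j}(\theta).
\end{equation}
The right-hand side is an honest partial sum of the candidate limit. To upgrade this to almost-sure convergence, I would establish absolute convergence: using $\|(A^{\T})^j\|_\infty \le 1$ and taking expectations entrywise, the tail is controlled by the geometric series $\sum_{j=1}^\infty (1-\delta)^{j-1} \e\,\|\bL_j(\theta)\|_1 = \delta^{-1}\sum_{k=1}^K \e\,|\bL_{k,1}(\theta)|$, from which a Borel--Cantelli / monotone convergence argument delivers almost-sure finiteness of the infinite sum.

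Combining the two pieces via Slutsky's theorem, the deterministic term vanishes while the stochastic partial sum equals in law a sequence converging almost surely to $\beta\sum_{j=1}^\infty (1-\delta)^{j-1}(A^{\T})^j\bL_j(\theta)$, delivering the claimed convergence in distribution. The step I expect to be the main obstacle is the integrability bound $\e\,|\bL_{k,1}(\theta)|<\infty$: the standing assumption only gives $\e\,\bL_{k,1}(\theta) = d_k(\theta)<\infty$, so one must separately control the negative part. The elementary inequality $\log x \le x/e$ applied on the region $\{L_k(\cdot|\theta)>L_k(\cdot|\theta^\circ)\}$ yields $\e[\bL_{k,1}(\theta)^-] \le 1/e$, which then forces $\e[\bL_{k,1}(\theta)^+] \le d_k(\theta)+1/e$, closing the integrability gap. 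Once this is in hand, the remaining steps reduce to routine limiting arguments.
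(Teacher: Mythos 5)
Your proof is correct; note that the paper itself states this theorem as an imported result from \cite{bordignon_2021} and gives no proof of its own, and your argument --- unrolling the recursion, discarding the vanishing initial-condition term, invoking the i.i.d.\ time-reversal $j\mapsto i-j+1$ to identify the partial sum in law with the partial sums of the limit series, and establishing a.s.\ absolute convergence from first-moment bounds --- is precisely the standard route taken in that reference. Your handling of the integrability gap (bounding the negative part of the log-likelihood ratio via $\log x\le x/e$ so that $d_k(\theta)<\infty$ suffices for $\e\,|\bL_{k,1}(\theta)|<\infty$) is the right technical point and is correctly executed.
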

In the sequel, we need the following result to analyze the average causal effect between agents.
\begin{mycorollary}[Expected log-belief ratio in ASL]\label{cor:expected_logbelief}Theorem~\ref{th:asl_conv_dist} implies that the log-belief ratios converge in the mean, i.e.,
\begin{equation}\label{eq:expected_logbelief_asl_idle}
     \lim_{i \to \infty} \e [ \bS_{i}(\theta) ] = \frac{\beta}{1-\delta} \Big ((I-(1-\delta)A^{\T})^{-1} - I \Big ) d(\theta)
\end{equation}
where \( d(\theta) \triangleq [d_1(\theta), d_2(\theta), \dots , d_K(\theta)]^{\T}\) is the vector of network KL divergences.
\end{mycorollary}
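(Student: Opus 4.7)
The plan is to bypass any direct manipulation of the series in Theorem~\ref{th:asl_conv_dist} and instead take expectations of the linear recursion \eqref{eq:LLF_evolution_asl} directly, which immediately yields a deterministic recursion whose limit can be solved in closed form.

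First, I would take the expectation of both sides of \eqref{eq:LLF_evolution_asl}. Because the observations $\bxi_{k,i}$ are i.i.d.\ over time with marginal $L_k(\cdot|\theta^\circ)$, the mean LLR vector is stationary and equal to the network KL divergence vector:
\begin{equation}
\e[\bL_{i}(\theta)] \;=\; d(\theta), \qquad \text{for all } i \ge 1.
\end{equation}
Denoting $m_i \triangleq \e[\bS_i(\theta)]$ and applying linearity of expectation yields the deterministic affine recursion
\begin{equation}
m_i \;=\; (1-\delta)\, A^{\T} m_{i-1} \;+\; \beta\, A^{\T} d(\theta).
\end{equation}

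Next, I would establish convergence of $m_i$. Since $A$ is left-stochastic (equation~\eqref{eq:perron_def}), its spectral radius equals $1$, and hence the spectral radius of $(1-\delta)A^{\T}$ equals $1-\delta < 1$. Consequently, $I-(1-\delta)A^{\T}$ is invertible and the affine map $m \mapsto (1-\delta)A^{\T} m + \beta A^{\T} d(\theta)$ is a contraction (with respect to an appropriate norm). Iterating gives
\begin{equation}
m_i \;=\; \bigl((1-\delta)A^{\T}\bigr)^{i} m_0 \;+\; \beta\sum_{j=0}^{i-1} \bigl((1-\delta)A^{\T}\bigr)^{j} A^{\T} d(\theta),
\end{equation}
and letting $i \to \infty$, the transient term vanishes while the geometric series sums to $(I-(1-\delta)A^{\T})^{-1}$. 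Therefore,
\begin{equation}
\lim_{i\to\infty} m_i \;=\; \beta\,(I-(1-\delta)A^{\T})^{-1}\, A^{\T} d(\theta).
\end{equation}

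Finally, to match the form stated in the corollary, I would use the algebraic identity obtained by writing $(1-\delta)A^{\T} = I - \bigl(I-(1-\delta)A^{\T}\bigr)$ and multiplying on the left by $(I-(1-\delta)A^{\T})^{-1}$:
\begin{equation}
(1-\delta)\,(I-(1-\delta)A^{\T})^{-1} A^{\T} \;=\; (I-(1-\delta)A^{\T})^{-1} - I,
\end{equation}
which, upon dividing by $(1-\delta)$ and multiplying by $\beta d(\theta)$, delivers precisely \eqref{eq:expected_logbelief_asl_idle}.

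I do not expect any serious obstacle: the only subtle point is justifying that $I-(1-\delta)A^{\T}$ is invertible and that the iteration converges, which follows immediately from the left-stochasticity of $A$ combined with $0<\delta<1$. One could alternatively derive the same limit by taking expectations term-by-term in the series of Theorem~\ref{th:asl_conv_dist} and appealing to dominated convergence (using $|\e[\bL_j(\theta)]| = d(\theta)$ componentwise and the geometric factor $(1-\delta)^{j-1}$ for summability), but the recursion-based argument is cleaner and avoids having to upgrade convergence in distribution to convergence of means.
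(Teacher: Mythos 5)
Your proposal is correct, and it takes a genuinely different route from the paper. The paper proves the corollary by starting from the distributional limit in Theorem~\ref{th:asl_conv_dist}: it argues that the series on the right-hand side of \eqref{eq:asl_conv_dist} is uniformly integrable, so convergence in distribution upgrades to convergence of the means, and then it computes the expectation of the limiting series term by term using \(\e[\bL_j(\theta)] = d(\theta)\) and the closed form of the geometric matrix series. You instead bypass Theorem~\ref{th:asl_conv_dist} entirely and take expectations directly in the recursion \eqref{eq:LLF_evolution_asl}, obtaining the deterministic affine iteration \(m_i = (1-\delta)A^{\T}m_{i-1} + \beta A^{\T}d(\theta)\), whose limit you compute via the geometric series and the identity \((1-\delta)(I-(1-\delta)A^{\T})^{-1}A^{\T} = (I-(1-\delta)A^{\T})^{-1} - I\); all the algebra checks out and yields exactly \eqref{eq:expected_logbelief_asl_idle}. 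What your route buys is that it sidesteps the measure-theoretic step of upgrading convergence in distribution to convergence in mean (the uniform-integrability argument the paper only sketches), requiring nothing beyond integrability of the LLRs (\(d_k(\theta)<\infty\)) and the spectral radius bound \(\rho((1-\delta)A^{\T}) = 1-\delta < 1\) from left-stochasticity. What the paper's route buys is consistency with how the intervention case is later handled in Appendix~\ref{appendix:asl_general}, where the same ``take the expectation of the limiting series'' argument is reused verbatim with \(\widetilde{A}\) in place of \(A\). One cosmetic mismatch: the corollary is phrased as a consequence of Theorem~\ref{th:asl_conv_dist}, whereas your derivation never invokes that theorem; this does not affect correctness, but it means your argument proves the stated limit independently rather than as the advertised implication.
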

\begin{proof}
    Since the series on the RHS of \eqref{eq:asl_conv_dist} is uniformly integrable, the expectation on \( \bS_{i}(\theta) \) converges to the expectation of the RHS of \eqref{eq:asl_conv_dist}. The result then follows from the fact that \( \e [\bL_{j}(\theta)] = d(\theta) \) for any time $j$, and from the closed-form expression for the series of absolutely convergent matrices.
\end{proof}\vspace{-2em}

\section{Causal Effects in Social Learning}\label{sec:causal_effect_def}

\begin{wrapfigure}[11]{l}{7cm}
\includegraphics[width=7cm]{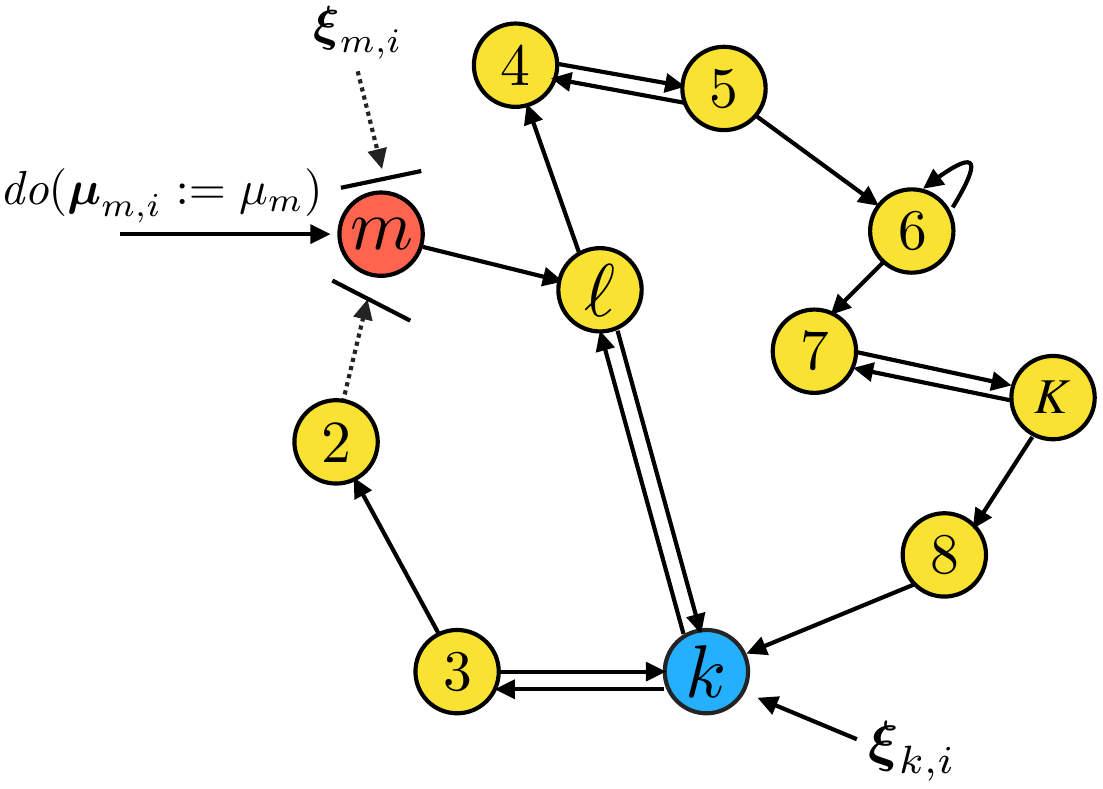}
\caption{An illustration of an atomic and persistent intervention on agent \(m=1\) from Fig.~\ref{fig:network_topology}.}\label{fig:network_intervention}
\end{wrapfigure} 

An intuitive and widely used assertion in defining causality is that manipulation of the causes should result in changes in the effect \citep{woodward2004book}. Based on this principle, interventions on a system, real or hypothetical, have been the primary tool for testing whether a variable causes another \cite[Chap. 1]{pearl2009causality}. In this work, in order to measure the causal influence strength between agents, we rely on the most basic intervention known as \emph{atomic} and \emph{persistent} intervention \cite[Chap. 3]{pearl2009causality}. 
Specifically, in order to measure the social effect of an agent \( m \) on other agents, we analyze the belief evolution of these other agents if the belief of agent \( m \) is fixed to some particular constant belief vector, say, \( \bmu_{m,i}  := \mu_m \) for all time instants --- see Fig.~\ref{fig:network_intervention} for a visual depiction\footnote{We use an atomic intervention since it isolates the causal contribution of a single agent by fixing only that agent's belief and preserving the remaining interaction dynamics of the network. We also use it in a persistent manner since we are interested in the total long term effect rather than a transient one.}. In canonical causality notation, this intervention is denoted by \( \textit{do}(\bmu_{m,i}  := \mu_m ) \) \citep{pearl2009causality}. Since we consider only this intervention in this work and there is no room for ambiguity, we will use the notation that the post-intervention counterparts of the variables in Sec.~\ref{sec:social_learning_model} are topped with the symbol `\( \sim\)'. For example, the log-belief ratio definition from \eqref{eq:lambda_definition} transforms into the following, under the intervention \( \textit{do}(\bmu_{m,i}  := \mu_m ) \):
    \begin{equation}
        \wbS_{k,i}(\theta) \triangleq \log\frac{\wbmu_{k,i}(\theta^{\circ})}{\wbmu_{k,i}(\theta)}.
    \end{equation}
\paragraph{Causal influence strength.} Intuitively, the amount of change in the effect following an intervention on the cause is expected to be related to the causal strength. Therefore, the difference between the post and pre-intervention distributions, or between appropriate functions of these distributions such as expectations, can be used to quantify the causal effect \cite[Chap. 3]{eichler2012causal,pearl2009causality}. In this work, we employ the following definition in order to measure the causal influence of agent \(m \) on agent \(k\) (its connection to established definitions of causality in the literature will be discussed in the sequel):
\begin{equation}\label{eq:cmkgeneral_def}
       \boxed{ \wcg \triangleq \mu_{k,\infty}(\theta^{\circ}) - \widetilde{\mu}_{k,\infty}(\theta^{\circ})}
    \end{equation}
This formula measures the alteration of agent $k$ as a consequence of an intervention on agent $m$. Specifically, it quantifies the magnitude of change of the expected asymptotic belief of agent $k$ on the true state $\theta^\circ$. In general, its value depends on the belief $\mu_m$ of the intervention $\textit{do}(\bmu_{m,i}  := \mu_m )$. Here, we use the following expression for the belief vector, which is explained in the sequel:
\begin{equation}\label{eq:expected_log_belief_trans}
    \mu_{k,\infty}(\theta^{\circ}) \triangleq \dfrac{1}{1 + \!\sum_{\theta \in \Theta \setminus \{ \theta^\circ\}}   \exp \{ - \lambda_{k,\infty}(\theta) \}}.
\end{equation}
This expression is defined in terms of the expected asymptotic log-belief ratio:
\begin{equation}\label{eq:expected_logbelief_ratio}
        \lambda_{k,\infty}(\theta) \triangleq \lim_{i \to \infty} \e [ \bS_{k,i}(\theta) ]
    \end{equation}
The variables topped with the symbol `\( \sim\)' for the intervention case are defined similarly (the existence of the limit for both NBSL and ASL under interventions will be discussed in the sequel). The transformation \eqref{eq:expected_log_belief_trans} is motivated by noting from \eqref{eq:lambda_definition} that:
\begin{equation}
    \exp \big \{\!-\!\bS_{k,i}(\theta)\big\} = \frac{\bmu_{k,i}(\theta)}{\bmu_{k,i}(\theta^{\circ})},
    \end{equation}
which implies
    \begin{equation}
    1 + \!\!\!\! \sum_{\theta \in \Theta \setminus \{ \theta^\circ\}} \!\!\! \exp \big \{\!-\!\bS_{k,i}(\theta)\big\} = \dfrac{1}{\bmu_{k,i}(\theta^{\circ})} \sum_{\theta \in \Theta} \bmu_{k,i}(\theta) 
    = \dfrac{1}{\bmu_{k,i}(\theta^{\circ})},
\end{equation}
which, in turn, yields
\begin{equation}\label{eq:transformation_as}
    \bmu_{k,i}(\theta^{\circ}) = \dfrac{1}{1 + \!\sum_{\theta \in \Theta \setminus \{ \theta^\circ\}}   \exp \{ - \bS_{k,i}(\theta) \}}.
\end{equation}
Here, if we replace log-belief ratio $\bS_{k,i}(\theta)$ with the expected asymptotic log-belief ratio $\lambda_{k,\infty}(\theta)$, we arrive at the definition \eqref{eq:expected_log_belief_trans} for $\mu_{k,\infty}(\theta^{\circ})$. Note that defining $\mu_{k,\infty}(\theta^\circ)$ in terms of the expected log-belief ratios, as opposed to, say, expected beliefs (i.e., \( \lim_{i \to \infty} \e [ \bmu_{k,i}(\theta^\circ) ] \)), will enable us to obtain closed-form expressions for causality in terms of the informativeness of the agents, represented by the entries of \(d(\theta)\), in Sec.~\ref{sec:main_results}. Next, we treat NBSL and ASL separately.

\begin{itemize}
    \item \textbf{Non-Bayesian social learning}. In the idle case (i.e., no intervention) of NBSL, it is known from Theorem~\ref{theorem:truth_learning_nbsl} that under global identifiability, \(\bmu_{k,i}(\theta^\circ) \asc \ 1\) (i.e., \( \bS_{k,i}(\theta) \asc +\infty \) for each \(\theta \neq \theta^\circ\)). Hence, for the NBSL case, the average (i.e., expected) causal influence \eqref{eq:cmkgeneral_def} is given by
    \begin{equation}\label{eq:cmknb_def}
         \wcnb = 1- \widetilde{\mu}_{k,\infty} (\theta^\circ).
    \end{equation}
    This immediately implies that $\wcnb \in [0,1]$, and it gets larger as the post-intervention belief diverges from the truth in expectation. 
    \item \textbf{Adaptive social learning}. In a similar fashion, the causal effect strength for the ASL case is given by the general expression \eqref{eq:cmkgeneral_def}
    \begin{equation}\label{eq:cmkasl_def}
        \wcasl = \mu_{k,\infty}(\theta^\circ)-\widetilde{\mu}_{k,\infty}(\theta^\circ),
    \end{equation}
    where the pre-intervention asymptotic belief \(\mu_{k,\infty}(\theta^\circ)\) can be found by inserting \(\lambda_{k,\infty}(\theta)\) established in Corollary~\ref{cor:expected_logbelief} to \eqref{eq:expected_log_belief_trans}. 
\end{itemize}

\begin{myremark}[Controllability] Notably, the influence $C_{m \to k}$ of agent $m$ on agent $k$ can also be interpreted as the controllability or manipulability of agent $k$ by agent $m$. \qed
\end{myremark}

\section{Theoretical Results}\label{sec:main_results}
In this section, we derive closed-form expressions for \(\widetilde{\lambda}_{k,\infty} \triangleq [\widetilde{\lambda}_{k,\infty}(\theta_1), \dots, \widetilde{\lambda}_{k,\infty}(\theta_H) ]^{\T}\) in terms of the network topology and the informativeness of agents to obtain the causal strength measures \( \wcnb\) and \(\wcasl \). For ease of notation and without loss of generality, we set \( m = 1 \). One can obtain \( \wcnb\) by setting \(\beta = 1\) and \(\delta \to 0\) in \(\wcasl \) due to \eqref{eq:local_bayesian} and \eqref{eq:asl_local_update}. Nevertheless, we first present the analysis for NBSL since it is easier to derive and provides useful insights for ASL. Subsequently, we provide the results for ASL with proofs deferred to the appendix. 
\subsection{Non-Bayesian Social Learning}\label{sec:main_results_nbsl}
 The intervention \( \textit{do}(\bmu_{1,i} := \mu_1 ) \) ceases (or obstructs the use of) all incoming information at agent $1$ from the neighbors \(\mathcal{N}_1\) and the use of the streaming observations \(\bxi_{1,i}\) from the environment. Consequently, we can model this effect by redefining the combination matrix and the LLR vector counterparts under the intervention:
\begin{align}\label{eq:effective_definitions}
\widetilde{A} \triangleq [\widetilde{a}_{\ell k}], \quad \widetilde{a}_{\ell k} \triangleq
    \begin{dcases}
    1, & \quad \ell=k=1 \\ 0,& \quad \ell\neq k=1\\a_{\ell k},& \quad \ell\neq 1,k\neq 1
    \end{dcases}, \quad \wbL_{i}(\theta) \triangleq [0 ,\bL_{2,i}(\theta), \dots, \bL_{K,i}(\theta)]^{\T}.
\end{align}
Observe that the effective combination matrix \( \widetilde{A} \) can be obtained from $A$ as follows:
\begin{align}\label{eq:effective_definitions_block}
    A= \left [ \begin{array}{c|ccc}
   a_{11}  &  & r^{\T} &  \\
   \hline 
   a_{21} &  &  &  \\
   \vdots &  & R & \\
   a_{K1} &  &  & \\
\end{array} \right ] \Longrightarrow \widetilde{A} = \left [ \begin{array}{c|ccc}
   1  & & r^{\T} & \\
   \hline 
   \\
   0  & & R & \\
   \\
\end{array} \right ],
\end{align}
for a \((K-1)\times 1 \) dimensional vector $r$ and a \((K-1)\times(K-1)\) dimensional matrix $R$: \begin{equation}
    r \triangleq [a_{12} \quad a_{13} \quad \dots \quad a_{1K}]^{\T},  \qquad    R \triangleq \left [ \begin{array}{ccc}
   a_{22} & \dots  &  a_{2K}  \\
   \vdots & \ddots & \vdots   \\
   a_{K2} & \dots & a_{KK}  \\
\end{array} \right ] .
\end{equation}
The matrix structure of $\widetilde{A}$ belongs to the class of \textit{reducible} combination matrices \citep{resnick2013adventures,smith2014} that arise in the analysis of \emph{weakly} connected networks \citep{molavi2013reaching,mossel2015strategic,ying2016information}. As opposed to the strongly connected networks where information can flow thoroughly, in weakly connected networks information can flow only in one direction between certain parts of the network. In the current context, this corresponds to the fact that information continues to flow from agent $m$ in the form of its belief vector fixed at $\mu_m$, but no information is flowing into it in the sense that agent $m$ ignores all signals arriving from its neighbors and does not use them to update its local belief. However, in contrast to these prior works that analyze opinion dynamics under weakly connected networks, we are interested in the effect of the intervention on the original \emph{strongly} connected network. This alters the LLR $\bL_i(\theta)$ as well --- see \eqref{eq:effective_definitions}. Similar to the original case in \eqref{eq:LLF_evolution}, we proceed by studying the log-belief ratio evolution that results from using $\widetilde{A}$:
\begin{equation}\label{eq:llf_evolution_intervention}
       \wbS_{i}(\theta) =  \widetilde{A}^{\T}(\wbS_{i-1}(\theta) + \wbL_{i}(\theta)).
\end{equation}
Recursive application of \eqref{eq:llf_evolution_intervention} across \(i\) iterations to the log-belief ratio \(\wbS_{i}(\theta)\) yields
\begin{equation}\label{eq:llr_slln}
     \wbS_{i}(\theta) =   \sum_{j = 1}^{i}  (\widetilde{A}^{i-j+1})^{\T}\wbL_{j}(\theta) +   (\widetilde{A}^i)^{\T}\wbS_{0}(\theta).
\end{equation}
 To study \eqref{eq:llr_slln}, we need to evaluate the powers of the effective combination matrix:
\begin{align}\label{eq:matrix_powers}
      \widetilde{A}^i = \left [ \begin{array}{c|ccc}
   1  & & {r_i^\prime}^{\T} & \\
   \hline 
   \\
   0  & & R^i & \\
   \\
\end{array} \right ], \qquad   \widetilde{A}^\infty \stackrel{(a)}{=} \left [ \begin{array}{c|ccc}
   1  & 1& \dots & 1 \\
   \hline 
   \\
   0  & & 0 & \\
   \\
\end{array} \right ]
\end{align}
where \((a)\) follows from the fact that the spectral radius of the matrix \(R\) is strictly smaller than 1, i.e., \( R \) is a stable matrix \cite[Lemma 1]{ying2016information}. For each time $i$ and $0<j\leq i$, observe that
\begin{subequations}
    \begin{equation}
     (\widetilde{A}^{i-j+1})^{\T}\wbL_{j}(\theta) \stackrel{\eqref{eq:effective_definitions},\eqref{eq:matrix_powers}}{=}  \left [ \begin{array}{c|ccc}
   1  & & 0 & \\
   \hline 
   \\
   {r_{i-j+1}^\prime}  & & {R^{i-j+1}}^{\T} & \\
   \\
\end{array} \right ] \begin{bmatrix}
0 \\
\bL_{2,j}(\theta) \\
\vdots \\
\bL_{K,j}(\theta)
\end{bmatrix}  
\end{equation}
\begin{equation}\label{eq:verification_exp_1}
   \Longrightarrow [(\widetilde{A}^{i-j+1})^{\T}\wbL_{j}(\theta)]_1  \:\: = \:\: 0
\end{equation}
\end{subequations}
where $r_i^\prime$ denotes the accumulated upper-right block in the power $\widetilde{A}^i$, and
\begin{subequations}
\begin{equation}
    (\widetilde{A}^i)^{\T}\wbS_{0}(\theta) \stackrel{\eqref{eq:effective_definitions},\eqref{eq:matrix_powers}}{=} \left [ \begin{array}{c|ccc}
   1  & & 0 & \\
   \hline 
   \\
    {r_i^\prime}  & & {R^i}^{\T} & \\
   \\
\end{array} \right ] \begin{bmatrix}
\log\dfrac{\mu_{1}(\theta^{\circ})}{\mu_{1}(\theta)}. \\
\log\dfrac{\bmu_{2,0}(\theta^{\circ})}{\bmu_{2,0}(\theta)}. \\
\vdots \\
\log\dfrac{\bmu_{K,0}(\theta^{\circ})}{\bmu_{K,0}(\theta)}. 
\end{bmatrix}  
\end{equation}\begin{equation}\label{eq:verification_exp_2}
    \Longrightarrow [(\widetilde{A}^i)^{\T}\wbS_{0}(\theta)]_1  \:\: = \:\:  \log\dfrac{\mu_{1}(\theta^{\circ})}{\mu_{1}(\theta)}.
\end{equation}  
\end{subequations}
Inserting these into \eqref{eq:llr_slln} verifies that the intervention is fixed as intended for all time instants, since
\begin{subequations}
    \begin{equation}
    [\wbS_{i}(\theta)]_1 \stackrel{\eqref{eq:llr_slln}}{=}   \sum_{j = 1}^{i}  [(\widetilde{A}^{i-j+1})^{\T}\wbL_{j}(\theta)]_1 +   [(\widetilde{A}^i)^{\T}\wbS_{0}(\theta)]_1. 
    \end{equation}
    \begin{equation}\label{eq:fixed_intervention_nbsl}
     \wbS_{1,i}(\theta) \stackrel{\eqref{eq:verification_exp_1},\eqref{eq:verification_exp_2}}{=} \log\frac{\mu_{1}(\theta^{\circ})}{\mu_{1}(\theta)}, \quad  \wbmu_{1,i}(\theta) = \mu_1(\theta).
    \end{equation}
\end{subequations}
Moreover, if we take the expectation of both sides of \eqref{eq:llr_slln}, we get
\begin{align}
    \e [\wbS_{i}(\theta)] &=  \sum_{j = 1}^{i}  (\widetilde{A}^{i-j+1})^{\T} \e[\wbL_{j}(\theta)] +   (\widetilde{A}^i)^{\T}\e[\wbS_{0}(\theta)] \notag \\
    &= \sum_{j = 1}^{i}  (\widetilde{A}^{i-j+1})^{\T} \widetilde{d}(\theta) +   (\widetilde{A}^i)^{\T}\e[\wbS_{0}(\theta)].
\end{align}
where we use the definition $\widetilde{d}(\theta) \triangleq [0, d_2(\theta), \dots , d_K(\theta)]^{\T}$. Hence, in the limit (the existence is guaranteed by the finiteness of LLRs and positive initial beliefs), it holds that
\begin{align}\label{eq:expl_limit_exp}
    \lim_{i \to \infty} \e [\wbS_{i}(\theta)] &= \lim_{i \to \infty} \sum_{j = 1}^{i}  (\widetilde{A}^{i-j+1})^{\T} \widetilde{d}(\theta) +   (\widetilde{A}^{\infty})^{\T}\e[\wbS_{0}(\theta)] \notag \\
    &= \sum_{j = 1}^{\infty}  (\widetilde{A}^{j})^{\T} \widetilde{d}(\theta) +   (\widetilde{A}^{\infty})^{\T}\e[\wbS_{0}(\theta)].
\end{align}
If we incorporate \eqref{eq:matrix_powers} into \eqref{eq:expl_limit_exp}, this implies for the log-belief ratios of all agents except agent \(m =1\) that
\begin{align}\label{eq:general_lambda_temp}
    \wS_{-m,\infty}(\theta) \triangleq [\wS_{2,\infty}(\theta), \dots, \wS_{K,\infty}(\theta)]^{\T} = \sum_{j = 1}^{\infty}  ({R}^{j})^{\T} d_{-m}(\theta) + \Big (\log\frac{\mu_{m}(\theta^{\circ})}{\mu_{m}(\theta)} \Big) \mathds{1}_{K-1}
\end{align}
where \( d_{-m}(\theta) \) is the \((K-1)\times1 \) dimensional vector of local KL divergences of the remaining agents, i.e., \( d_{-m}(\theta)~\triangleq~\text{col} \{ d_{\ell}(\theta) \}_{\ell=2}^{K} \). Since \(R\) is a stable matrix \cite[Lemma 1]{ying2016information}, Eq.~\eqref{eq:general_lambda_temp} can alternatively be written as
\begin{align}\label{eq:general_lambda}
    \boxed{\wS_{-m,\infty}(\theta) =\big ((I-R^{\T})^{-1}-I \big ) d_{-m}(\theta) +  \Big (\log\frac{\mu_{m}(\theta^{\circ})}{\mu_{m}(\theta)} \Big) \mathds{1}_{K-1}}
\end{align}
The causal influence of agent \( m=1 \) on agent \( k \) can now be calculated by inserting \( \wS_{k,\infty}(\theta) \) into \eqref{eq:expected_log_belief_trans} to find $\widetilde{\mu}_{k,\infty}(\theta)$, which is the input for \eqref{eq:cmknb_def} that yields $\wcnb$. More specifically, if we incorporate \eqref{eq:general_lambda} into \eqref{eq:expected_log_belief_trans}, we get
\begin{equation}
    \widetilde{\mu}_{k,\infty}(\theta^{\circ}) = \dfrac{1}{1 + \!\sum\limits_{\theta \in \Theta \setminus \{ \theta^\circ\}} \dfrac{\mu_{m}(\theta)}{\mu_{m}(\theta^\circ)}  \exp \Big \{ - \Big [\big ((I-R^{\T})^{-1}-I \big ) d_{-m}(\theta) \Big]_k \Big \}}
\end{equation}
which, by \eqref{eq:cmknb_def}, implies
\begin{equation}\label{eq:general_lambda_transf}
         \wcnb = 1-\dfrac{1}{1 + \!\sum\limits_{\theta \in \Theta \setminus \{ \theta^\circ\}} \dfrac{\mu_{m}(\theta)}{\mu_{m}(\theta^\circ)}  \exp \Big \{ - \Big [\big ((I-R^{\T})^{-1}-I \big ) d_{-m}(\theta) \Big]_k \Big \}}.
\end{equation}
Equation \eqref{eq:general_lambda} is a general result which shows that \(\wcnb\) is a function of \((i)\) the combination weights (via $R$), and \((ii)\) the individual informativeness of each agent (via $d_{-m}(\theta)$). 
\begin{myremark}[Generalization to sub-networks]
Note that expressions \eqref{eq:general_lambda}-\eqref{eq:general_lambda_transf} can generalize to the influence of a sub-network with multiple agents rather than an individual agent $m$. In this case, upon intervening on all agents within the sub-network, the effective combination matrix becomes
\begin{align}
    \widetilde{A} = \left [ \begin{array}{c|ccc}
   I  & & r^{\T} & \\
   \hline 
   \\
   0  & & R & \\
   \\
\end{array} \right ]
\end{align}
where the first entry in \eqref{eq:effective_definitions_block} is replaced by an identity matrix. Namely, if the sub-network under consideration has size $s$, the identity, $r$ and $R$ matrices would be of dimensions $s \times s$, $(K-s) \times s$, and $(K-s) \times (K-s)$, respectively. Similarly, \( d_{-m}(\theta) \) can be replaced with local KL divergences of the agents that do not belong to the treated sub-network. \qed
\end{myremark}
\begin{myremark}[Finite-time spread of influence] Expressions \eqref{eq:general_lambda}-\eqref{eq:general_lambda_transf} reveal how the total \textit{overall} effects depend on direct \textit{instantaneous} effects in temporal networked interactions. For the spread of direct effects in a finite time instant $n$, we can modify \eqref{eq:general_lambda_temp} and \eqref{eq:general_lambda} as
\begin{align}
\wS_{-m,n}(\theta) \triangleq [\wS_{2,n}(\theta), \dots, \wS_{K,n}(\theta)]^{\T} = \sum_{j = 1}^{n}  ({R}^{j})^{\T} d_{-m}(\theta) + \Big (\log\frac{\mu_{m}(\theta^{\circ})}{\mu_{m}(\theta)} \Big) \mathds{1}_{K-1}
\end{align}
due to \eqref{eq:llr_slln}, and consequently,
\begin{align}
    \wS_{-m,n}(\theta) = \left((I-R^{\T})^{-1} (I-R^{n+1 \T}) - I \right) d_{-m}(\theta) + \Big (\log\frac{\mu_{m}(\theta^{\circ})}{\mu_{m}(\theta)} \Big) \mathds{1}_{K-1}. 
\end{align}\qed
\end{myremark}
For ease of the presentation, we continue to describe the total causal effects of a single agent, even though our results can be extended to sub-network influence and finite time analysis in a straightforward manner as discussed in the above remarks. 

In \eqref{eq:general_lambda_transf}, \(\wcnb\) represents a dose-response curve, assuming different values for different intervention strengths (i.e., dose) \(\mu_m\). This is a typical situation in the context of continuous-valued interventions. In some applications, however, it proves beneficial to encapsulate the causal effect value with a single number. For this purpose, we may set $\mu_m$ to be uniform across all hypotheses, i.e., $\mu_m(\theta) = 1/H, \forall \theta \in \Theta$. This method of summarizing the causal effect is denoted as follows:
\begin{equation}\label{eq:wcnbd_definition}
    \wcnbd \triangleq \wcnb \Big |_{\mu_m(\theta) = 1/H} 
\end{equation}
In Appendix A, we show that this choice effectively parallels the process of determining the average causal \emph{derivative} effect \cite[Chap. 6]{peters2017elements}, which is a method commonly used in the literature for summarizing the causal effect. Basically, it quantifies the extent of change in agent $k$ in response to an infinitesimal variation in the intervention strength $\mu_m$.

In the next section, we study two special network topologies that help illustrate the dependencies of the causal effect more explicitly.

\subsubsection{Special cases}

\paragraph{Fully-connected and federated architectures.} In this example, we consider a fully-connected network (see Fig.~\ref{fig:fully_connected}) with a rank-one combination matrix and Perron vector \( v\), i.e.,
\begin{align}\label{eq:fully_connected_a}
    A = \left [ \begin{array}{cccc}
   v_{1}  & v_{1} &   \cdots  & v_{1} \\
   v_{2}  & v_{2}   &  \cdots  & v_{2} \\
   \vdots  & \vdots   &  \ddots  & \vdots \\
   v_{K} & v_{K}   &  \cdots  & v_{K} \\
\end{array} \right ] = v\mathds{1}_{K}^{\T}.
\end{align}
Note that in terms of performance, this system is equivalent to a federated architecture in which \( (i) \) agents send their beliefs to a fusion center after local adaptation, \((ii)\) the center averages the received beliefs in a weighted manner, and \((iii)\) then broadcasts the combined belief to all agents --- see Fig.~\ref{fig:federated}.
\begin{figure}
\centering
\begin{minipage}{.5\textwidth}
  \centering
  \includegraphics[width=.9\linewidth]{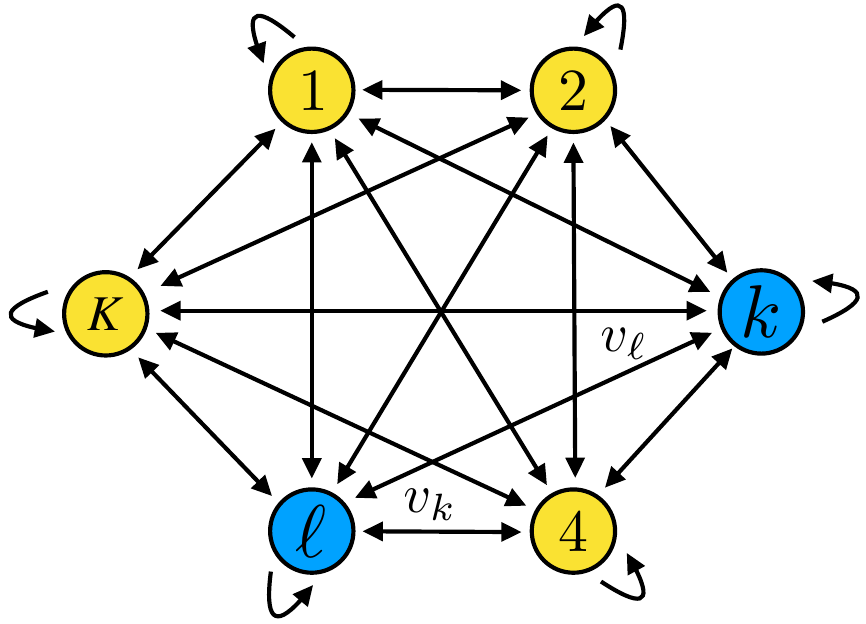}
  \caption{A fully connected network topology.}
  \label{fig:fully_connected}
\end{minipage}%
\hspace{0.2em}
\begin{minipage}{.47\textwidth}
  \centering
  \includegraphics[width=.97\linewidth]{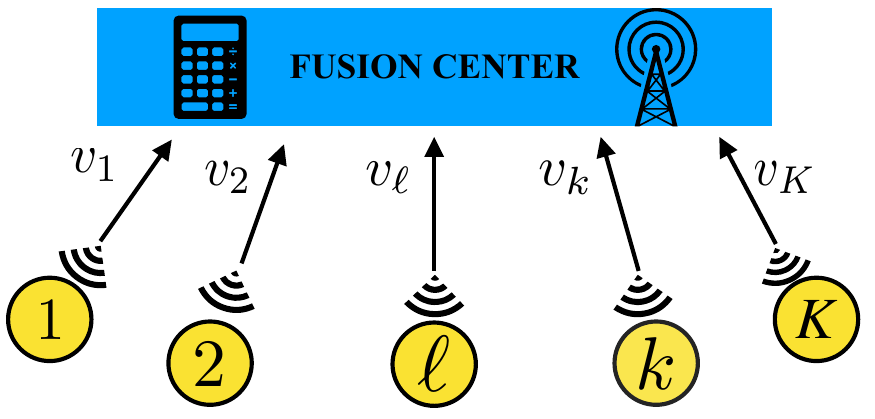}
  \caption{A federated architecture. The server broadcasts the weighted average of agents' beliefs back to them at each iteration. In terms of performance, this system is equivalent to the fully connected architecture of Fig.~\ref{fig:fully_connected}.}
  \label{fig:federated}
\end{minipage}
\end{figure}
Under intervention \( \textit{do}(\bmu_{1,i} := \mu_1 ) \), we have
\begin{align}\label{eq:fully_connected_intervention_a}
    \widetilde{A} = \left [ \begin{array}{c|ccc}
   1  & v_{1} &   \cdots  & v_{1} \\
   \hline 
   0  & v_{2}   &  \cdots  & v_{2} \\
   \vdots  & \vdots   &  \ddots  & \vdots \\
   0 & v_{K}  &  \cdots  & v_{K} \\
\end{array} \right ] \Longrightarrow   R = \left [ \begin{array}{ccc}
      v_{2}   &  \cdots  & v_{2} \\
      \vdots   &  \ddots  & \vdots \\
     v_{K}  &  \cdots  & v_{K} 
\end{array} \right ] = v_{-m}\mathds{1}_{K-1}^{\T}
\end{align}
where \( v_{-m}~\triangleq~\text{col} \{ v_{\ell}\}_{\ell=2}^{K}  \) is a \( (K-1) \times 1\) dimensional vector consisting of all Perron entries except for agent \(m=1\). Observe that
\begin{equation}
    R^2=v_{-m}\mathds{1}_{K-1}^{\T}v_{-m}\mathds{1}_{K-1}^{\T} \stackrel{(a)}{=}(1-v_1)v_{-m}\mathds{1}_{K-1}^{\T},
\end{equation}
where \( (a) \) follows from the fact that \(\mathds{1}_{K}^{\T}v=1\) (Eq. \eqref{eq:perron_def}). Repeating the same arguments, it holds that
\begin{equation}\label{eq:R_i_expression_fully}
    R^i = (1-v_1)^{i-1}v_{-m}\mathds{1}_{K-1}^{\T}.
\end{equation}
Therefore, 
\begin{equation}
    (I-R^{\T})^{-1}-I = \sum_{i=1}^{\infty} (R^{\T})^i = \mathds{1}_{K-1}v_{-m}^{\T} \sum_{i=1}^{\infty} (1-v_1)^{i-1} = \frac{1}{v_1} \mathds{1}_{K-1}v_{-m}^{\T}.
\end{equation}
Inserting this into \eqref{eq:general_lambda}, we arrive at the following expression for each agent \(k \neq m\):
\begin{equation}\label{eq:perron_sc_lambda}
    \boxed{\wS_{k,\infty}(\theta) = \frac{1}{v_m} \sum_{\ell=2}^K v_{\ell} d_{\ell} (\theta) +  \log \frac{\mu_{m}(\theta^{\circ})}{\mu_{m}(\theta)}}
\end{equation}
Combining \eqref{eq:expected_log_belief_trans} and \eqref{eq:cmknb_def} with \eqref{eq:perron_sc_lambda} yields the causal effect:
\begin{align}\label{eq:perron_sc_cmknb}
    \wcnb \stackrel{\eqref{eq:expected_log_belief_trans}, \eqref{eq:cmknb_def}}{=} &1-\dfrac{1}{1 + \!\sum_{\theta \in \Theta \setminus \{ \theta^\circ\}}   \exp \{ - \widetilde{\lambda}_{k,\infty}(\theta) \}} \notag \\
     \stackrel{\eqref{eq:perron_sc_lambda}}{=} \:\:\: & 1-\dfrac{1}{1 + \!\sum\limits_{\theta \in \Theta \setminus \{ \theta^\circ\}} \dfrac{\mu_{m}(\theta)}{\mu_{m}(\theta^\circ)}  \exp \Big \{ - \dfrac{1}{v_m} \sum\limits_{\ell \neq m} v_{\ell} d_{\ell} (\theta) \Big \}}.
\end{align}
The effect of agent \(m\) on all other agents is the same, which is expected due to the symmetric nature of this special example. Furthermore, observe that the causal effect \( \wcnb \) decreases with increasing \( \wS_{k,\infty}(\theta)\). On that account, from \eqref{eq:perron_sc_lambda} and \eqref{eq:perron_sc_cmknb} it can be seen that:
\begin{itemize}
    \item Increasing the network centrality of agent \(m=1\) (i.e., increasing $v_m$) decreases \(\wS_{k,\infty}(\theta)\), and in turn increases the causal effect \( \wcnb\). Therefore, an agent has more effect on other agents if it has a higher network centrality. In particular, if
    \begin{equation}
        v_m \to 0 \Longrightarrow \wS_{k,\infty}(\theta) \to +\infty \Longrightarrow \wcnb \to 0,
    \end{equation}
    which means that an agent with negligible network centrality has no causal effect on other agents.
    \item Increasing network centrality and informativeness of the other agents $\ell \neq m$ (i.e., increasing $v_{\ell}$ and $d_{\ell}(\theta)$) increases \( \wS_{k,\infty}(\theta)\), and in turn decreases the causal effect \( \wcnb \). In particular, if the most informative agents are equipped with the highest network centrality, then $\sum_{\ell=2}^K v_{\ell} d_{\ell} (\theta)$ is large and it is harder for agent \(m=1\) to control other agents.
    \item If the fixed belief on the true hypothesis \( \mu_m (\theta^\circ)\) decreases, then \( \wS_{k,\infty}(\theta)\) decreases and the causal effect \( \wcnb \) increases. This suggests that the further from the truth the information an agent supplies, the more effect that agent will have on other agents. In other words, agents supplying misinformation have more effect on the rest of the network. Specifically, observe that if the rest of the agents have a low informativeness average, i.e., if
    \begin{equation}
        \sum_{\ell=2}^K v_{\ell} d_{\ell} (\theta) \approx 0  \Longrightarrow \wcnb \approx 1-\mu_m(\theta^\circ).
    \end{equation}
    Therefore, the causal effect is proportional to the difference from the truth. It is maximized (i.e., $\wcnb = 1$) when the fixed belief assigns 0 to the true hypothesis.
\end{itemize}

\begin{wrapfigure}{l}{6.5cm}
\includegraphics[width=6.5cm]{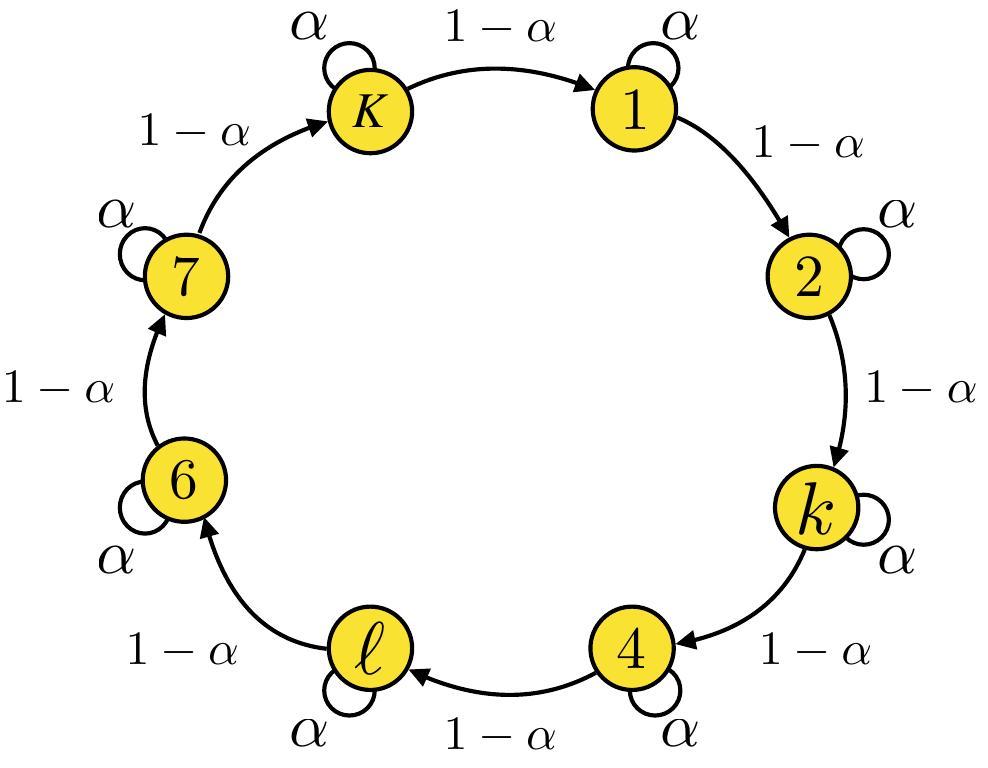}
\vspace{-1.5em}\caption{A unidirectional ring.}\label{fig:ring}
\end{wrapfigure} 
\paragraph{Ring architecture.} In this example, we consider a unidirectional ring network where each agent has a self-confidence of \(\alpha\), and assigns a confidence of \(1-\alpha\) to the preceding agent in the ring --- see Fig.~\ref{fig:ring}. Agents are indexed such that agent \(k+1\) receives (or uses) information from agent \(k\) only. The combination matrix has the form:

\begin{align}\label{eq:ring_A_nbsl}
    A = \left [ \begin{array}{ccccc}
   \alpha  & 1-\alpha & 0 &  \cdots  & 0 \\
   0  & \alpha & 1-\alpha   &  \cdots  & 0 \\
   \vdots  & 0 & \alpha   &  \cdots  & \vdots \\
   0  & \vdots & \vdots  &  \ddots  & 1-\alpha \\
    1-\alpha & 0 & 0  &  \cdots  & \alpha \\
\end{array} \right ].
\end{align}
Under intervention \( \textit{do}(\bmu_{1,i} := \mu_1 ) \), we have
\begin{align}\label{eq:ring_R_nbsl}
    \widetilde{A} = \left [ \begin{array}{c|cccc}
   1  & 1-\alpha & 0 &  \cdots  & 0 \\
   \hline 
   0  & \alpha & 1-\alpha   &  \cdots  & 0 \\
   0  & 0 & \alpha   &  \cdots  & \vdots \\
   \vdots  & \vdots & \vdots  &  \ddots  & 1-\alpha \\
   0 & 0 & 0  &  \cdots  & \alpha \\
\end{array} \right ] \Longrightarrow R = \left [ \begin{array}{cccc}
    \alpha & 1-\alpha   &  \cdots  & 0 \\
    0 & \alpha   &  \cdots  & \vdots \\
    \vdots & \vdots  &  \ddots  & 1-\alpha \\
    0 & 0  &  \cdots  & \alpha \\
\end{array} \right ].
\end{align}
As a result,
\begin{align}
    (I-R^{\T})= (1-\alpha)\left [ \begin{array}{cccc}
    1 & 0   &  \cdots  & 0 \\
    -1 & 1   &  \cdots  & 0 \\
    \vdots & \vdots  &  \ddots  & \vdots \\
    0 & \cdots  &  -1  & 1 \\
\end{array} \right ] \Longrightarrow (I-R^{\T})^{-1}= \frac{1}{1-\alpha} \left [ \begin{array}{cccc}
    1 & 0   &  \cdots  & 0 \\
    1 & 1   &  \cdots  & 0 \\
    \vdots & \vdots  &  \ddots  & \vdots \\
    1 & 1  &  \cdots  & 1 \\
\end{array} \right ],
\end{align}
which implies
\begin{align}
    ((I-R^{\T})^{-1}-I)d_{-m}(\theta)= \frac{1}{1-\alpha} \Big [\alpha d_2(\theta), d_2(\theta)+\alpha d_3(\theta), \dots, \sum_{\ell=2}^{K-1} d_{\ell}(\theta)+\alpha d_{K}(\theta) \Big ]^{\T}.
\end{align}
Consequently, for agent $k$,
\begin{align}\label{eq:ring_sc_nb_lambda}
    \boxed{\wS_{k,\infty}(\theta) = \frac{1}{1-\alpha}\sum_{\ell=2}^{k-1} d_{\ell} (\theta) + \frac{\alpha}{1-\alpha} d_{k} (\theta)  + \log\frac{\mu_{m}(\theta^{\circ})}{\mu_{m}(\theta)}}
\end{align}
and, in addition, by definitions \eqref{eq:expected_log_belief_trans} and \eqref{eq:cmknb_def},
\begin{equation}\label{eq:ring_sc_nb_cmknb}
    \wcnb  = 1-\dfrac{1}{1 + \!\sum\limits_{\theta \in \Theta \setminus \{ \theta^\circ\}} \dfrac{\mu_{m}(\theta)}{\mu_{m}(\theta^\circ)}  \exp \Big \{ - \dfrac{1}{1-\alpha}\sum\limits_{\ell=2}^{k-1} d_{\ell} (\theta) - \dfrac{\alpha}{1-\alpha} d_{k} (\theta)  \Big \}}.
\end{equation}
As stated before, the causal effect $\wcnb$ decreases with increasing $\wS_{k,\infty}(\theta)$. Therefore, the following remarks for \eqref{eq:ring_sc_nb_lambda} and \eqref{eq:ring_sc_nb_cmknb} are in place:
\begin{itemize}
    \item  Since the KL divergence \(d_{\ell}(\theta)\) is non-negative, $\wS_{k,\infty}(\theta)$ is monotonically increasing along the path \(k=2~\to~\dots~\to~k=K\). Therefore, the causal effect of agent \(m=1\) is monotonically decreasing along the same path: the closer agent $m$ is to an agent, the higher its effect on that agent. This is intuitive because the effect that agent \(m\) has on agent \(k+1\) is transferred via agent \( k \) in the ring structure. The difference between the causal effects of agent \(m\) on agents $k$ and $k+1$ is proportional to the increase in $\wS_{k,\infty}(\theta)$, that is, 
    \begin{equation}
        \wS_{k+1,\infty}(\theta)-\wS_{k,\infty}(\theta) = d_k(\theta) + \frac{\alpha}{1-\alpha} d_{k+1} (\theta).
    \end{equation}
    This means that informative agents with high KL divergence on the path between agent \(m\) and agent \(k\) reduce the causal effect \(\wcnb\). In other words, the sphere of influence of an agent \(m\) is bigger if there are no other informative agents in the vicinity.
    \item For the immediate follower of agent \(m=1\), it follows that
    \begin{equation}
            \wS_{2,\infty}(\theta) = \frac{\alpha}{1-\alpha} d_{2} (\theta)  + \log\frac{\mu_{1}(\theta^{\circ})}{\mu_{1}(\theta)}.
    \end{equation}
    If agent \(2\) is not sufficiently informative itself, i.e., \(d_2(\theta)\) is small, then \(\wS_{2,\infty}(\theta)\) gets smaller and \( C_{1 \to 2}^{\textup{NB}}\) gets higher. In other words, an agent is more controllable if it is not knowledgeable.
    \item The limiting average \(\wS_{k,\infty}(\theta)\) increases with increasing \( \alpha\). Therefore, if agents are more self-confident, the causal strength is smaller and agents are less controllable by other agents.
\end{itemize}

\subsection{Adaptive Social Learning}\label{sec:main_results_asl}
Similar to the modification in the NBSL case, the log-belief recursion \eqref{eq:LLF_evolution_asl} in ASL is modified as follows under intervention \( \textit{do}(\bmu_{1,i} := \mu_1 ) \):
\begin{equation}\label{eq:llf_evolution_intervention_asl}
       \wbS_{i}(\theta) =  \widetilde{A}^{\T}((1-\delta)\wbS_{i-1}(\theta) +  \beta \wbL_{i}(\theta)).
\end{equation}
The effective combination matrix \(\widetilde{A}\) continues to be given by \eqref{eq:effective_definitions}. However, the effective LLR is now given by
\begin{equation}\label{eq:llr_asl_intervention}
    \wbL_{i}(\theta) \triangleq \Big [\frac{\delta}{\beta}\log\frac{\mu_{1}(\theta^{\circ})}{\mu_{1}(\theta)} ,\bL_{2,i}(\theta), \dots, \bL_{K,i}(\theta)\Big ]^{\T},
\end{equation}
where the first entry is different than the NBSL case. This is to compensate for the presence of the parameters \( \delta\) and \(\beta\). Observe from \eqref{eq:llf_evolution_intervention_asl} that  \(\widetilde{A}\) from \eqref{eq:effective_definitions} and \(\wbL_{i}(\theta)\) from \eqref{eq:llr_asl_intervention} verify \( \wbS_{1,i}(\theta) = \log\frac{\mu_{1}(\theta^{\circ})}{\mu_{1}(\theta)} \) for all time instants, i.e., the intervention is fixed, by similar arguments to \eqref{eq:fixed_intervention_nbsl}. In Appendix~\ref{appendix:asl_general}, we derive the following expression for the limiting log-belief ratio expectations for the rest of the network:
\begin{equation}\label{eq:lambda_m_infty_asl}
   \boxed{ \widetilde{\lambda}_{-m,\infty}(\theta)\!\!=\!\Big (\! \log\frac{\mu_{m}(\theta^{\circ})}{\mu_{m}(\theta)} \Big )(I-(1-\delta)R^{\T})^{-1} r + \frac{\beta}{1-\delta} \Big ((I-(1-\delta)R^{\T})^{-1}\!-\!I\! \Big)  d_{-m} (\theta)}
\end{equation}
The causal effect \(\wcasl\) can be calculated by inserting post-intervention expression \eqref{eq:lambda_m_infty_asl} and pre-intervention expression \eqref{eq:expected_logbelief_asl_idle} into the definitions \eqref{eq:expected_log_belief_trans} and \eqref{eq:cmkasl_def}. Notice from \eqref{eq:lambda_m_infty_asl} that similar to the NBSL case, the causal effect depends on the informativeness of agents, the network topology, and the strength of intervention via \(d_{-m}(\theta), R,\) and \(\mu_m\), respectively. In fact, if we set \(\delta = 0 \) and \(\beta = 1\), Eq. \eqref{eq:lambda_m_infty_asl} reduces to the NBSL expression \eqref{eq:general_lambda} as expected. This is because the left-stochastic nature of $A$ implies that
\begin{equation}
       r + R^{\T} \mathds{1}_{K-1} = \mathds{1}_{K-1} \Longleftrightarrow (I-R^{\T})^{-1} r = \mathds{1}_{K-1}.
\end{equation}
In addition, the causal effects in ASL are affected by the importance weighting parameters \(\delta\) and \(\beta\), as well as by the vector \(r\) that represents the confidence weights other agents assign to agent \(m\). In \eqref{eq:general_lambda}, the entries of \(r\) implicitly influence the causal effect via \(R\): the column-wise summation of the entries of \(r\) and \(R\) results in 1 at all columns due to the left-stochastic nature of \(A\). In comparison, in ASL, both \(r\) and \(R\) impact \(\wcasl\) explicitly. If we take a closer look at the terms in \eqref{eq:lambda_m_infty_asl}, we can see that:
\begin{itemize}
    \item The vector that scales the intervened log-belief ratio \( \log\frac{\mu_{m}(\theta^{\circ})}{\mu_{m}(\theta)}\) can be expanded as 
    \begin{equation}
     (I-(1-\delta)R^{\T})^{-1} r = r+(1-\delta)R^{\T}r+ (1-\delta)^2R^{2\T}r+ (1-\delta)^3R^{3\T}r +\dots  
    \end{equation}
    On the RHS of this equation, the first $r$ represents the scaling of the information transferred from agent \(m\) to the rest of the network \emph{directly}. Namely, for an agent \(k\), the scaling of the direct information is \(a_{mk}\) if \(m\) is an immediate neighbor (\(m \in \mathcal{N}_k\)); 0 if it is not (\(m \notin \mathcal{N}_k\)). On the other hand, the second term $(1-\delta)R^{\T}r$ describes the scaling of the information transferred from agent \(m\) to the rest of the network, which is then mixed with the other agents \(\forall k \neq m\) (via \(R^{\T}\)) and ``forgotten'' (i.e., lose its recency) for one time instant by a factor of \((1-\delta\)). The consecutive terms over time follow from the same scaling argument.
    \item In a similar manner, we can express the matrix that scales the vector of individual informativeness in the rest of the network \(d_{-m}(\theta)\) as: 
    \begin{equation}
    \frac{1}{1-\delta} \Big ((I-(1-\delta)R^{\T})^{-1}\!-\!I \Big ) = R^{\T}+ (1-\delta)R^{2\T}+(1-\delta)^2R^{3\T}+\dots 
    \end{equation}
    Since there is no outgoing link from the rest of the network to agent \(m=1\) under the intervention, the terms in this expression only depend on the combination matrix \(R\). When new information arrives to the remaining agents, it is first mixed among these agents (corresponding to the first term \(R^{\T}\) on RHS), and then in the next iteration, it is mixed again but also gets forgotten due to the time discount factor \(\delta\) (corresponding to the second term \((1-\delta)R^{2\T}\) on RHS), and so on.
    \item   Remember from \eqref{eq:asl_local_update} that \(\beta>0\) scales the likelihood of observations, reflecting the weight agents place on their own observations originating from out-of-network sources. As a result, notice that in \eqref{eq:lambda_m_infty_asl}, \(\beta\) scales the individual informativeness \(d_{\ell}(\theta), \forall \ell \neq m\). In other words, it amplifies the effect of self observations on the state of nature.
\end{itemize}
Next, we analyze the special cases introduced in the NBSL case under ASL framework.

\subsubsection{Special cases}

\paragraph{Fully-connected and federated architectures.} 
In Appendix~\ref{appendix:asl_fully_connected}, we prove that the additional \(\delta\) and \(\beta\) parameters introduced for the ASL change the NBSL expression \eqref{eq:perron_sc_lambda} to 
\begin{equation}\label{eq:asl_lambda_fully_connected}
        \boxed{\wS_{k,\infty}(\theta) = \frac{1}{1-(1-\delta)(1-v_m)} \Bigg (\beta \sum_{\ell=2}^K v_{\ell} d_{\ell} (\theta) +  v_m \log \frac{\mu_{m}(\theta^{\circ})}{\mu_{m}(\theta)} \Bigg )}
\end{equation}
Notice that as \(\delta \to 0 \) and \(\beta \to 1\), \eqref{eq:asl_lambda_fully_connected} recovers \eqref{eq:perron_sc_lambda} as expected, and the following remarks from the NBSL case continue to hold here: \((i)\) the influence of an agent \(m=1\) is identical for each agent \(k \neq 1\) due to symmetry in the network topology, \((ii)\) increasing the network centrality of agent \(m\) increases its causal influence, and \((iii)\) increasing the network centrality and informativeness of the rest of the agents \(\ell \neq 1\) decreases the causal effect of \(m=1\). Moreover, since causal effect $\wcasl$ is a monotonic decreasing function of $\wS_{k,\infty}(\theta)$ by \eqref{eq:cmkasl_def}, Eq.~\eqref{eq:asl_lambda_fully_connected} also implies the following conclusions:

 \begin{itemize}
\item As stated after \eqref{eq:lambda_m_infty_asl}, \(\beta\) scales the informativeness of agents. Accordingly, \eqref{eq:asl_lambda_fully_connected} reveals that the causal effect is decreasing with increasing \(\beta\). This is justifiable because as agents have greater reliance on their own observations about the state of nature, they are less influenced by other agents in the network. It is worth mentioning that in \eqref{eq:asl_lambda_fully_connected}, the intervened log-belief ratio \(\log \frac{\mu_{m}(\theta^{\circ})}{\mu_{m}(\theta)}\) behaves as ``pseudo-informativeness''. It is scaled with the Perron entry \(v_m\) of agent \(m=1\), similar to how the rest of the agents' informativeness is scaled with their own Perron entries. The difference is that the other agents' informativeness are based on their log-likelihood ratios averaged with respect to the true distribution, whereas the intervened log-belief ratio can be arbitrary, possibly supplying misinformation.
 
 \item In the special case of the rest of the agents having no informativeness (i.e., \(d_{\ell}(\theta)=0, \: \forall \ell \neq m\)), the limiting mean log-belief ratio vector in \eqref{eq:asl_lambda_fully_connected} turns into
 \begin{equation}\label{eq:asl_fully_connected_zero_inform}
      \wS_{k,\infty}(\theta) = \frac{1}{1+\delta \Big (\dfrac{1}{v_m} -1 \Big )}  \log \frac{\mu_{m}(\theta^{\circ})}{\mu_{m}(\theta)}. 
 \end{equation}
This is in contast to the NBSL case, where, \(\wS_{k,\infty}(\theta) =\log \frac{\mu_{m}(\theta^{\circ})}{\mu_{m}(\theta)}\). In other words, in steady-state of NBSL, the average beliefs of all agents become equal to the intervened fixed belief \(\mu_m\), implying full controllability. In ASL, however, the controllability is reduced by a factor of \((1+\delta(1/v_m-1)) \geq 1\) as shown in \eqref{eq:asl_fully_connected_zero_inform}. In particular, increasing the forgetting factor \(\delta\) decreases controllability, especially when the network centrality of ``controlling'' agent \(m=1\) is small. However, if agent \(m\) is highly central, i.e., \(v_m \to 1\), then the forgetting factor \(\delta\) has negligible effect on controllability.

\item Considering the general case \eqref{eq:asl_lambda_fully_connected}, note that unlike \(\beta\) which only affects non-intervened observations, \(\delta\) affects both intervened beliefs and non-intervened observations. Thus, to fully understand the impact of \(\delta\) on the overall causal effect, we must consider the exact values of the relevant parameters. 

\end{itemize}

\paragraph{Ring architecture.}
In Appendix~\ref{appendix:asl_ring}, we prove that the additional \(\delta\) and \(\beta\) parameters introduced for the ASL change the NBSL expression \eqref{eq:ring_sc_nb_lambda} to 
\begin{empheq}[box=\fbox]{align}\label{eq:asl_lambda_ring}
     \wS_{k,\infty}(\theta) = \Big (\!\log\frac{\mu_{m}(\theta^{\circ})}{\mu_{m}(\theta)} \Big ) \dfrac{(1-\delta)^{k-2}(1-\alpha)^{k-1}}{(1-(1-\delta)\alpha)^{k-1}} \! &+ \dfrac{\beta\alpha}{1-(1-\delta)\alpha}  d_k (\theta) \notag \\& + \frac{\beta}{1-\delta} \sum_{\ell=2}^{k-1} \dfrac{(1-\delta)^{k-\ell}(1-\alpha)^{k-\ell}}{(1-(1-\delta)\alpha)^{k-\ell+1}} d_{\ell} (\theta)
\end{empheq}
from which we can make the following observations:
\begin{itemize}
    \item As \(\delta \to 0\) and \(\beta \to 1\), the NBSL expression \eqref{eq:ring_sc_nb_lambda} for ring architectures is recovered. Similar to the earlier expressions, the causal effect decreases with increasing informativeness of agents along the path between \(m=1\) and \(k\) \((\ell=2, \dots, k-1)\) (the last term on RHS), and also decreases with increasing informativeness of agent \(k\). Informativeness is scaled by \(\beta\) as before.

    \item Recall that in NBSL, if the rest of the agents have no informativeness, it holds that \(\wS_{k,\infty}(\theta) =\log \frac{\mu_{m}(\theta^{\circ})}{\mu_{m}(\theta)}\). In other words, agent \(m=1\) can fully control other agents' beliefs. Instead, in ASL, if \(d_{\ell}(\theta)=0, \: \forall \ell \neq 1 \), it holds that
    \begin{align}
         \wS_{k,\infty}(\theta) &=  \dfrac{(1-\delta)^{k-2}(1-\alpha)^{k-1}}{(1-(1-\delta)\alpha)^{k-1}} \Big ( \log\frac{\mu_{m}(\theta^{\circ})}{\mu_{m}(\theta)} \Big ) \notag \\
         &= \dfrac{1}{1-\delta} \left (1- \dfrac{1}{\dfrac{1-\alpha}{\delta}+\alpha} \right )^{k-1}  \Big ( \log\frac{\mu_{m}(\theta^{\circ})}{\mu_{m}(\theta)} \Big ).
    \end{align}
     Observe that as the agent index \(k\) increases, the controllability decays at each hop by a factor of 
     \begin{equation}
         \dfrac{\wS_{k+1,\infty}(\theta)}{\wS_{k,\infty}(\theta)} =    1- \dfrac{1}{\dfrac{1-\alpha}{\delta}+\alpha} \quad \in [0,1].
     \end{equation}
    The decrease is higher when \(\delta\) is higher because the information from agent \(m=1\) gets ``partially forgotten'' at each hop as \(k\) (i.e., the distance to agent 1) increases. However, in general, the informativeness of agents along the path is not 0, and they have a shadowing effect on agent $m$'s influence, as argued before. The forgetting factor \(\delta\) decreases this shadowing effect as well, particularly for agents far from agent \(k\). 
\end{itemize}

\section{Causal Ranking of Agents}\label{sec:causal_ranking}

In the previous sections, we examined the bipartite influence between agents, that is, how much an agent \(m\) affects another agent \(k\) in the network. By calculating this influence for any pair of agents \((m,k)\), we can construct a \(K \times K\) influence matrix \(C\) with entries \([C]_{mk}=C_{m\to k}\). One is often interested in the overall influence of agent \(m\) on the network rather than its effect on individual agents. To that end, in this section, we describe a procedure to use \(C\) for ranking and quantifying the agents' cumulative effect over the network. 

Since \(C\) is constructed from intervened belief dependent entries \(\wcg\), an ordering based on \(C\) would be valid for a particular intervention. For an intervention dose independent ranking of agents, one can consider the matrix \(\overbar{C}\), which is formed with dose independent causal effects $\overbar{C}_{m\to k}$:
\begin{equation}\label{eq:general_cmkb_definition}
    \overbar{C}_{m\to k} \triangleq C_{m \to k} \Big |_{\mu_m(\theta) = 1/H}
\end{equation}
where we extend the definition \eqref{eq:wcnbd_definition} for the NBSL case to the general case. For simplicity of the presentation, in the sequel, we focus on the NBSL case, even though our arguments keep holding for a general $\overbar{C}$ as well as an ordering based on intervention dose dependent matrix \(C\), too.
First, note that the causal effect for the NBSL case is given by
\begin{align}\label{eq:cmknb_derivative_causalrank}
    \wcnb &\stackrel{\eqref{eq:expected_log_belief_trans}, \eqref{eq:cmknb_def}}{=} 1-\dfrac{1}{1 + \!\sum_{\theta \in \Theta \setminus \{ \theta^\circ\}}   \exp \{ - \widetilde{\lambda}_{k,\infty}(\theta) \}} \notag \\   
    &  \: \: \:  \stackrel{\eqref{eq:general_lambda}}{=} 1-\dfrac{1}{1 + \!\sum\limits_{\theta \in \Theta \setminus \{ \theta^\circ\}} \dfrac{\mu_{m}(\theta)}{\mu_{m}(\theta^\circ)}  \exp \Big \{ - \Big [\big ((I-R^{\T})^{-1}-I \big ) d_{-m}(\theta) \Big]_k \Big \}}.
\end{align}
Here, setting \(\mu_m(\theta) = 1/ H\) for any \(\theta \in \Theta\) based on \eqref{eq:general_cmkb_definition} yields
\begin{equation}\label{eq:causalrank_cmknbd}
    \wcnbd = 1-\dfrac{1}{1 + \!\sum\limits_{\theta \in \Theta \setminus \{ \theta^\circ\}}   \exp \Big \{ - \Big [\big ((I-R^{\T})^{-1}-I \big ) d_{-m}(\theta) \Big]_k \Big \}}.
\end{equation}
Since all KL divergences are assumed to be finite ($d_k (\theta) < \infty$) and the strongly connected graph assumption in Sec.~\ref{sec:social_learning_model} implies $\rho (R) < 1$ \cite[Lemma 1]{ying2016information},
\begin{equation}
     \Big \|((I-R^{\T})^{-1}-I \big ) d_{-m}(\theta) \Big \|_{\infty} < \infty.
\end{equation}
Incorporating this into \eqref{eq:causalrank_cmknbd} implies that $\wcnbd > 0, \: \forall m \neq k$. Furthermore, regarding the diagonal elements of \(\overbar{C}\), it holds by definition that an intervention on agent \(m\) implies
\begin{equation}
    \wS_{m,\infty} (\theta) = \log\frac{\mu_{m}(\theta^{\circ})}{\mu_{m}(\theta)}
\end{equation}
As a result, if we set $\mu_m(\theta) = 1 / H , \forall \theta \in \Theta$
\begin{align}
    \overbar{C}_{m \to m}^{\textup{NB}} &\stackrel{\eqref{eq:expected_log_belief_trans}, \eqref{eq:cmknb_def}}{=} 1-\dfrac{1}{1 + \!\sum_{\theta \in \Theta \setminus \{ \theta^\circ\}}   \exp \{ - \widetilde{\lambda}_{m,\infty}(\theta) \}} \notag \\ & \quad= \quad 1-\dfrac{1}{1 + \!\sum_{\theta \in \Theta \setminus \{ \theta^\circ\}}   \exp \{ 0 \}} \notag \\ & \quad =  \quad 1 - \dfrac{1}{H} \notag \\
    & \quad  >  \quad 0
\end{align}

Consequently, all entries of \(\overbar{C}\) are positive, which implies that \(\overbar{C}\) is a primitive matrix. Therefore, according to Perron's theorem \citep{pillai2005perron,easley2010networks}, \(\overbar{C}\) has a unique, real and positive eigenvalue \(\rho \) that dominates all other eigenvalues in magnitude. Moreover, the eigenvector \(q\) corresponding to \(\rho\) is unique up to a scaling and all its entries are positive, i.e.,
\begin{equation}\label{eq:causal_perron_def}
   \overbar{C} q = \rho q, \quad q_k > 0, \quad \forall k=1,\dots, K.
\end{equation}
The entry \(q_k\) is a measure of agent \(k\)'s overall influence over the network. The agents can be ranked with respect to these entries. We name the resulting algorithm {\sf \small CausalRank} which is summarized in Algorithm~\ref{alg:causal_rank}. Importantly, the vector \(q\) --- which is the output of Alg.~\ref{alg:causal_rank} --- differs from the network centrality eigenvector \(v\) in general. While \(v\) is determined solely by the combination matrix \(A\) (see \eqref{eq:perron_def}), as shown in previous sections, causal influences and hence \(q\) depend on the informativeness of agents as well.
\begin{algorithm}[]
\begin{algorithmic}[1] 
\State \textbf{Input}: a network of \(K\) agents with indices \( \{ 1,2,\dots, K\}\), combination matrix \(A\), set \(\Theta\) of \(H\) hypotheses, informativeness vector \(d(\theta)\)
\State \textbf{Initialize}: \(K \times K\)-dimensional influence matrix \(\overbar{C}\)
\FOR{each agent $m = 1,2,\dots, K$}
\FOR{each agent $k = 1,2,\dots, K$}
\IF{$m=k$}
\State set \([\overbar{C}]_{mm} := 1- \dfrac{1}{H}\)
\ELSE
\State compute the causal effect \(\overbar{C}_{m \to k}\) with \eqref{eq:causalrank_cmknbd}
\State set \([\overbar{C}]_{mk} := \overbar{C}_{m \to k}\)
\ENDIF
\ENDFOR
\ENDFOR
\State find the largest eigenvalue \(\rho\) of \(\overbar{C}\)
\State \textbf{Output}: the eigenvector \(q\) satisfying \( \overbar{C} q = \rho q\)
\caption{{\sf \small CausalRank} Algorithm}\label{alg:causal_rank}
\end{algorithmic}
\end{algorithm}
More specifically, \eqref{eq:causal_perron_def} computes a \emph{causal} eigenvector centrality that attributes higher importance to exerting influence on agents who are themselves influential. A possible alternative approach (which we call \emph{average influence ranking} (AIR)) can treat all agents with equal regard in the averaging process by assigning the following ranking score to each agent $m$:
\begin{equation}
    \text{AIR}(m) = \frac{1}{K-1} \sum_{k \neq m} \overbar{C}_{m \to k}
\end{equation}
In contrast, rather than employing a simple averaging, {\sf \small CausalRank} seeks the equilibrium vector by assigning significant weights to those agents that have a higher influence on other influential agents. This concept bears resemblance to other methodologies based on eigenvector centrality, such as the PageRank algorithm \citep{brin1998}. While ranking websites, PageRank gives preferential treatment to links from more central websites.

It is also worth mentioning that {\sf \small CausalRank} is distinct from the causal ordering methods for directed acyclic graphical models \citep{peters2017elements} since we are dealing with cyclic graphs with bidirectional links due to our time-series setting. Furthermore, {\sf \small CausalRank} is not only useful for ranking, but also provides information on the strength of agents' overall influence on others. 

\section{Causal Discovery from Observational Data}\label{sec:causal_discovery}

In Sec.~\ref{sec:main_results}, we derived the closed-form expressions \eqref{eq:general_lambda} and \eqref{eq:lambda_m_infty_asl} for the steady-state equilibrium of the network under interventions, which necessitate knowledge of the combination matrix $A$ and the informativeness of agents $d(\theta)$. In practice, these parameters might not be readily available. The work \citep{valentina2023discovering} introduced the Graph Social Learning (GSL) algorithm, which can be used to recover $A$ and $d(\theta)$ using a sequence of publicly shared intermediate beliefs (a.k.a. \emph{actions}) $\{\bpsi_{k,i}\}$ in the \emph{observational} setting of the ASL algorithm. Using observational data only can be especially useful in social network contexts where conducting experiments is not feasible. Nonetheless, \citep{valentina2023discovering} acknowledge that the algorithm may not perform well in real-world scenarios, mainly due to the limitations of the social learning model in accurately describing the real world. However, in many applications, some information about the underlying combination matrix $A$ may already be available. For instance, in Twitter, the publicly available adjacency matrix can provide information about which user follows which other users. 

Taking these aspects into account, in this section, we propose an algorithm that utilizes the adjacency matrix and a temporal sequence of publicly shared intermediate beliefs $\{\bpsi_{k,i}\}$ to estimate bipartite causal effects for both NBSL and ASL algorithms. Specifically, we leverage the graph of user connections to estimate combination matrix weights by using existing methods in the literature (e.g., averaging rule). Then, using the estimated combination matrix, we estimate the informativeness of agents by using belief update recursions. By inserting these to the closed-form expressions \eqref{eq:general_lambda} and \eqref{eq:lambda_m_infty_asl}, we estimate the causal effects. To that end, observe that the intermediate log-belief ratios evolve based on a linear recursion due to \eqref{eq:asl_local_update}:
\begin{equation}\label{eq:gsp_recursion_linear}
    \bLambda_i = (1-\delta)A^{\T}\bLambda_{i-1} + \beta \bLX_i
\end{equation}
where we are now defining the following \(K \times H\) matrices over all agents \(k \in \{1, \dots, K\}\) and hypotheses \(\theta_j \in \Theta\):
\begin{equation}
    [\bLambda_i]_{kj} \triangleq \log\frac{\bpsi_{k,i}(\widehat{\btheta}^\circ)}{\bpsi_{k,i}(\theta_j)},  \quad  [\bLX_i]_{kj} \triangleq \log \frac{L_k(\bxi_{k,i} | \widehat{\btheta}^\circ )}{L_k(\bxi_{k,i} | \theta_j)} .
\end{equation}
Here, \(\widehat{\btheta}^\circ\) is an estimate for the latent state of nature \(\theta^\circ\) computed as follows after some time $M$:
\begin{equation}\label{eq:theta_circ_estimation}
   \widehat{\btheta}^\circ \triangleq \argmax_{\theta \in \Theta} \sum_{k=1}^K \bpsi_{k,M} (\theta).
\end{equation}
The rationale behind \eqref{eq:theta_circ_estimation} is that under proper assumptions we know from Theorems~\ref{theorem:truth_learning_nbsl} and \ref{th:asl_conv_dist} that agents learn the true hypothesis with more confidence as \(M\) grows. Our goal is to infer the true combination matrix \(A\) and informativeness vector \(d(\theta)\) for each hypothesis from a sequence of \(M+1\) matrices \(\{\bLambda_M, \bLambda_{M-1},\dots, \bLambda_0 \}\) and the adjacency matrix of the agents.

We can estimate $A$ by using existing procedures in the literature for forming combination matrices from adjacency matrices, e.g., by using the averaging or relative degree rules. For instance, the averaging rule assigns the same weight to all neighbors of an agent, i.e.,
\begin{equation}\label{eq:averaging_rule}
    [\:\widehat{A}\:]_{\ell k} = \begin{dcases} 
      \dfrac{1}{|\mathcal{N}_k|}, & \text{if there is a link from \(\ell\) to \(k\) (i.e., $\ell \in \mathcal{N}_k$)} \\
      0, & \text{otherwise}
   \end{dcases}
\end{equation}
After forming the combination matrix estimate $\widehat{A}$, we can insert it into \eqref{eq:gsp_recursion_linear} and average over available $M$ samples to estimate the average log-likelihood ratios that correspond to the informativeness of agents using
\begin{equation}\label{eq:llr_estimation_M}
    \widehat{\bm{D}} =  \frac{1}{\beta M} \sum_{i=1}^M \! \Big (\bLambda_i - (1-\delta) \widehat{A}^{\T} \bLambda_{i-1} \Big) .
\end{equation}
Then, one can replace $A$ and $d_k (\theta_j)$ with $\widehat{A}$ and $[\widehat{\bm{D}}]_{k j}$ in Sec.~\ref{sec:main_results} to obtain the causal effect estimate $\widehat{\bm{C}}_{m \to k}$. The complete procedure is summarized in Alg.~\ref{alg:social_causal_learning}. Essentially, it combines our causality results in Sec.~\ref{sec:main_results} with a straightforward adjustment to the GSL algorithm from \citep{valentina2023discovering}.

 \begin{algorithm}[]
 \caption{Graph Causality Learning (GCL)}
 \begin{algorithmic}[1]
    \State \textbf{Input}: a sequence of shared beliefs $ \{\bpsi_{k,i}\}$ for $M+1$ time instants, the graph topology of agents
    \State \textbf{Parameters}: for NBSL $\delta = 0, \beta = 1$, for ASL $\delta \in (0,1), \beta > 0$ 
    \State set true state of nature estimate:
\(
   \widehat{\btheta}^\circ = \argmax\limits_{\theta \in \Theta} \sum\limits_{k=1}^K \bpsi_{k,M} (\theta)
\)
    \State form left-stochastic combination matrix estimate $\widehat{A}$ from input adjacency matrix, e.g., by \eqref{eq:averaging_rule}
    \FOR{$ i = 0,1,\dots,M $} 
    \State  for each agent \(k\) and hypothesis \(\theta_j\), set the entry:\vspace{-1em}
    \begin{equation}
         [\bLambda_i]_{kj} = \log\frac{\bpsi_{k,i}(\widehat{\btheta}^\circ)}{\bpsi_{k,i}(\theta_j)}
    \end{equation}\vspace{-1.5em}
        \ENDFOR
    \State estimate the informativeness:\vspace{-1em}
    \begin{equation}
        \widehat{\bm{D}}=  \frac{1}{\beta M} \sum_{i=1}^M  \Big (\bLambda_i - (1-\delta) \widehat{A}^{\T} \bLambda_{i-1} \Big) 
    \end{equation}\vspace{-1em}
    \State for any given agent pair $(m,k)$, compute approximate causal effect \( \widehat{\bm{C}}_{m \to k} \)  by replacing $A$ and $d_k (\theta_j)$ with $\widehat{A}$ and $[\widehat{\bm{D}}]_{k j}$ in the original expression \eqref{eq:cmkgeneral_def} for $\wcg$ (which also  requires using \eqref{eq:general_lambda} for NBSL and \eqref{eq:lambda_m_infty_asl} for ASL)
  \State \textbf{Output}: \( \widehat{\bm{C}}_{m \to k} \) 
 \end{algorithmic} \label{alg:social_causal_learning}
 \end{algorithm}

The graph causality learning (GCL) algorithm (Alg.~\ref{alg:social_causal_learning}) only requires a sequence of shared intermediate beliefs (actions) and the knowledge of adjacency matrix. This enhances its practicality and makes it advantageous in terms of privacy for scenarios where only limited information is publicly accessible. For example, in a network of Twitter users, shared beliefs (opinions) in the form of tweets (posts) and the knowledge of who follows whom can usually be accessed by all users, while the external exposure to information (e.g., from mass media channels distinct from Twitter) may not be available. Therefore, the GCL algorithm can be useful for analyzing social media content while respecting privacy. In the next result, we provide a performance bound on the GCL algorithm.

\begin{theorem}[Causal influence estimation]\label{th:inform_learning}
    For sufficiently small combination matrix estimation errors and $\delta$ values, the error in causal influence estimation decreases with increasing number of samples $M$ in expectation, namely, 
    \begin{equation}
    \e \: \Big |  \wcg - \widehat{\bm{C}}_{m \to k}  \Big | =  O(1/\sqrt{M})
\end{equation}
for both NBSL and ASL under any intervention strength $\mu_m$.

\end{theorem}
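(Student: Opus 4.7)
The plan is to combine a perturbation analysis of the closed-form causal-effect formula with a concentration argument for the informativeness estimate $\widehat{\bm{D}}$. First, I would verify that the map $(A, d(\theta)) \mapsto \wcg$, obtained by composing \eqref{eq:general_lambda} (for NBSL) or \eqref{eq:lambda_m_infty_asl} (for ASL) with the sigmoidal transformation \eqref{eq:expected_log_belief_trans}, is locally Lipschitz at the true parameters. The ``sufficiently small combination matrix estimation errors'' hypothesis ensures that the perturbed submatrix $\widehat{R}$ retains spectral radius strictly below one, so that $(I-(1-\delta)\widehat{R}^{\T})^{-1}$ exists and remains uniformly bounded; and since \eqref{eq:expected_log_belief_trans} is Lipschitz in the log-belief-ratio vector, a first-order expansion yields
\begin{equation}
\bigl|\wcg - \widehat{\bm{C}}_{m \to k}\bigr| \;\leq\; L_A\,\|\widehat{A}-A\| \;+\; L_D\,\|\widehat{\bm{D}}-d(\theta)\|
\end{equation}
for finite constants $L_A, L_D$ depending on the Perron structure, on $\delta$, $\beta$, and on the intervention dose $\mu_m$.

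Second, I would analyze $\widehat{\bm{D}} - d(\theta)$ by substituting the ASL/NBSL recursion \eqref{eq:gsp_recursion_linear} into the definition \eqref{eq:llr_estimation_M} and splitting off the perturbation $\widehat{A}-A$:
\begin{equation}
\widehat{\bm{D}} - d(\theta) \;=\; \frac{1}{M}\sum_{i=1}^M \bigl(\bLX_i - d(\theta)\bigr) \;-\; \frac{1-\delta}{\beta M}\,(\widehat{A}-A)^{\T}\sum_{i=1}^M \bLambda_{i-1}.
\end{equation}
The sampling term is a centered average of i.i.d.\ log-likelihood-ratio matrices whose mean-squared deviation is of order $1/M$ by the law of large numbers. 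The model-mismatch term is bounded by $\|\widehat{A}-A\|$ times the time-averaged magnitude of $\bLambda_{i-1}$, which is $O(1)$ in $L^1$ in the ASL case by Theorem~\ref{th:asl_conv_dist} together with the uniform integrability of the absolutely convergent series in \eqref{eq:asl_conv_dist}; in the NBSL case $\bLambda_i$ grows linearly in $i$ (Theorem~\ref{theorem:truth_learning_nbsl}), and the joint smallness of $\delta$ and $\|\widehat{A}-A\|$ must be invoked to absorb this growth into the $O(1/M)$ budget. A short intermediate step handles the random plug-in $\widehat{\btheta}^\circ$: by truth learning, $\Pr\{\widehat{\btheta}^\circ \neq \theta^\circ\}$ vanishes quickly in $M$, and on its complement the remainder of the derivation proceeds with $\theta^\circ$ itself.

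Assembling these pieces through the Lipschitz inequality produces the claimed rate. The principal obstacle is delivering an $O(1/M)$ bound on $\e|\cdot|$ rather than the $O(1/\sqrt{M})$ rate that a naive $L^1$ law of large numbers provides: closing this gap will require working with the mean-squared error (which is genuinely $O(1/M)$ for centered i.i.d.\ averages) or exploiting telescoping/martingale cancellations in $\sum_i(\bLX_i-d(\theta))$ before invoking Jensen's inequality at the very last step. A secondary difficulty is ensuring the Lipschitz constants $L_A, L_D$ stay uniformly bounded across the perturbed neighborhood, which is precisely what the small-perturbation and small-$\delta$ hypotheses purchase by keeping $\rho(\widehat{R})<1$ bounded away from unity.
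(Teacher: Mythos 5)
Your proposal follows essentially the same route as the paper's Appendix~E: the identical decomposition $\widehat{\bm{D}}-D=\frac{1}{M}\sum_i(\bLX_i-D)-\frac{1-\delta}{\beta M}E^{\T}\sum_i\bLambda_{i-1}$ with $E=\widehat{A}-A$, the same Neumann-series perturbation of $(I-(1-\delta)R^{\T})^{-1}$ under the smallness hypotheses that keep $\rho(R)$ bounded away from one, the same treatment of the plug-in $\widehat{\btheta}^{\circ}$ via a vanishing error probability, and the same Lipschitz step for the belief transformation (which the paper makes explicit by showing $\|\nabla g(\lambda)\|_1\le 1$ and applying the mean-value theorem with H\"older's inequality, yielding constant $1$ in the $\infty$-norm). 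The one obstacle you flag --- that a mean-squared bound of order $1/M$ only yields an $O(1/\sqrt{M})$ first-moment bound via Jensen --- is real but is not resolved differently in the paper either: the paper bounds $\e\|\widehat{\bm{D}}-D\|_{\textup{F}}^2\le\frac{1}{M}\textup{Tr}(\mathcal{R})+o(1/M)$ and then directly asserts the same $O(1/M)$ bound for $\e\|d(\theta)-\widehat{\bm{d}}(\theta)\|_{\infty}$, so your honest identification of this gap is, if anything, more careful than the published argument.
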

\begin{proof}
    See Appendix~\ref{appendix:inform_learning}.
\end{proof}

We show the practical usefulness of Alg.~\ref{alg:social_causal_learning} by means of a real-world application to social media data in Sec.~\ref{sec:twitter_app}. Furthermore, a detailed analysis of the time complexity of the algorithms discussed in this paper is provided in Appendix~\ref{appendix:complexity}.

\section{Computer Simulations}\label{sec:computer_sims}

For our numerical simulations, we first study a network of $K=11$ agents, interconnected with the strongly connected graph topology in Fig.~\ref{fig:simulations_network}.
\begin{figure}[t]
    \begin{minipage}[b]{0.49\textwidth}
        \centering
        \includegraphics[width=\textwidth]{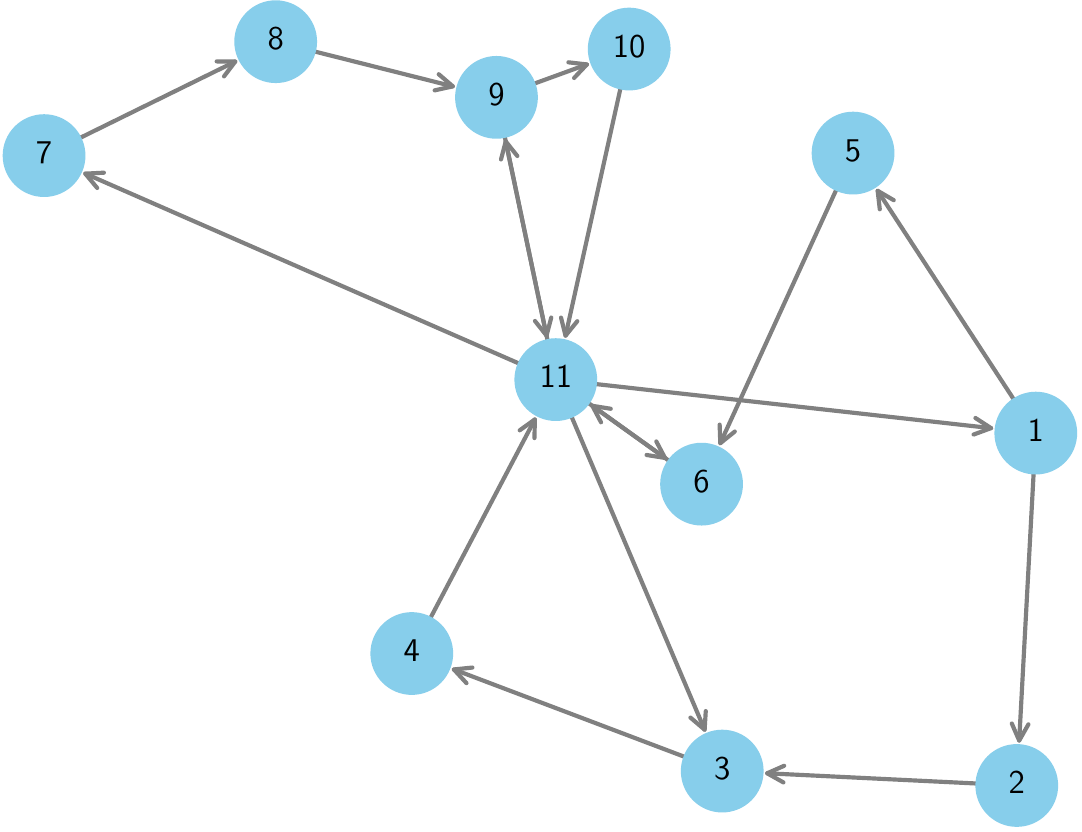}\vspace{0.7cm}
        \caption{The strongly connected network architecture used in Sec.~\ref{sec:computer_sims}. Each agent also has a self-loop, which is omitted for visual simplicity.}
        \label{fig:simulations_network}
    \end{minipage}
    \hfill
    \begin{minipage}[b]{0.47\textwidth}
        \centering
        \begin{tabular}{|p{1.8cm}|p{1.8cm}|p{1.8cm}|}
            \hline
            Agent & $\nu_k$ & $d_k (\theta^\prime)$ \\
            \hline
            1 & 0.8 & 0.32 \\
            2 & 0.6 & 0.18 \\
            3 & 0.2 & 0.02 \\
            4 & 0.6 & 0.18 \\
            5 & 0 & 0 \\
            6 & 0 & 0 \\
            7 & 0.4 & 0.08 \\
            8 & 0.4 & 0.08 \\
            9 & 0.2 & 0.02 \\
            10 & 0.6 & 0.18 \\
            11 & 0.8 & 0.32 \\
            \hline
        \end{tabular}
        \captionof{table}{Mean $\nu_k$ of the observations under alternative hypothesis and the corresponding informativeness levels $d_k (\theta^\prime)$ for each agent $k$.}
        \label{tab:my_label}
    \end{minipage}
\end{figure}
The agents observe data drawn from a Gaussian distribution and aim to distinguish the true state $\theta^\circ$ from $H=2$ possible hypotheses. Under the true state, each agent $k$ observes data that follows a Gaussian distribution with zero mean and unit variance, expressed as:
\begin{equation}
L_k (\xi | \theta^\circ) = \frac{1}{\sqrt{2\pi}} \exp \Big \{-\dfrac{1}{2}\xi^2 \Big \}.
\end{equation}
Under the alternative hypothesis $\theta^\prime \neq \theta^\circ$, we assume that the data still has unit variance for all agents, but the mean vector $\nu_k$ changes as shown in Table~\ref{tab:my_label}. Therefore, the informativeness of each agent, which is equal to the KL divergence between $L_k (\xi | \theta^\circ)$ and $L_k (\xi | \theta^\prime)$, is given in Table~\ref{tab:my_label} and is calculated as follows:
\begin{align}
d_k (\theta^\prime)  = \dkl (L_k (\xi |\theta^\circ) || L_k (\xi | \theta^\prime) )  = \frac{1}{2} \nu_k^2.
\end{align}
Notably, agents 5 and 6 have no informativeness, that is, they are not able to learn the truth without cooperating with the other agents. Initially, we assume that the agents observe spatially independent data. In other words, the covariance matrix is an identity matrix.

\begin{figure}
  \centering
  \includegraphics[width=.9\linewidth]{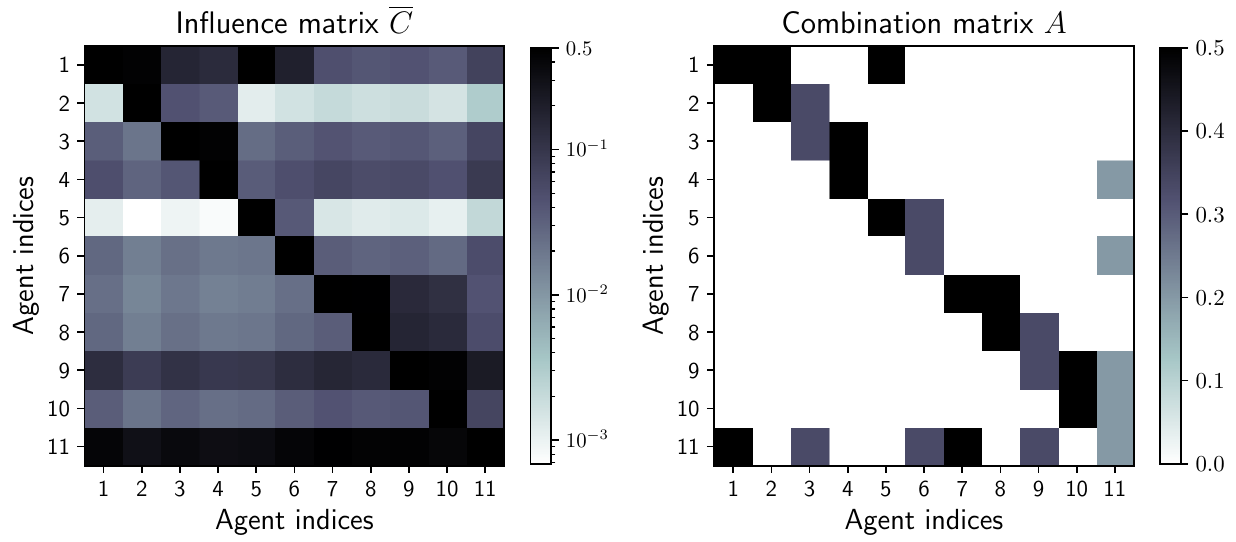}
  \caption{(\emph{Left}): Bipartite causal influence matrix. (\emph{Right}): Combination matrix corresponding to the network topology in Fig.~\ref{fig:simulations_network} formed with averaging rule.}
  \label{fig:inf_comb}
\end{figure}

We start with the NBSL case ($\delta = 0, \beta = 1$). The right panel in Fig.~\ref{fig:inf_comb} shows the combination matrix that is derived from the averaging rule applied to the graph topology in Fig.~\ref{fig:simulations_network}. Notice that the averaging rule generates a matrix whose entries are constant column-wise. The left panel in Fig.~\ref{fig:inf_comb} shows the matrix of bipartite causal effects where the entry in $m$-th row $k$-th column represents $\wcnbd$ (see \eqref{eq:causalrank_cmknbd} for the explicit formula).

Upon comparing the two heat maps in Fig.~\ref{fig:inf_comb}, it becomes apparent that the combination matrix entries do not reveal the causal relationships directly. For example, despite the absence of a direct connection in the combination matrix (as indicated by 0 entries), agent 11 exerts significant influence on agents 2 and 8. This phenomenon highlights the importance of taking the ripple effects over a network into account. Furthermore, the influence of agent 1 on agent 5 is notably high. Given the zero informativeness of agent 5, this finding aligns with our expectations, as low-informativeness agents are easier to control (remember the discussion in Sec.~\ref{sec:main_results_nbsl}). Agent 5 being a low-informativeness agent also facilitates the propagation of influence from agent 1 to agent 6 via agent 5. Intriguingly, despite the absence of a direct connection between agents 1 and 6, this indirect influence is more substantial than the influence of agent 5 on agent 6. This shows that mixing of information over a network necessitates an understanding of causal influence beyond local interactions.

\begin{figure}
  \centering
  \includegraphics[width=.65\linewidth]{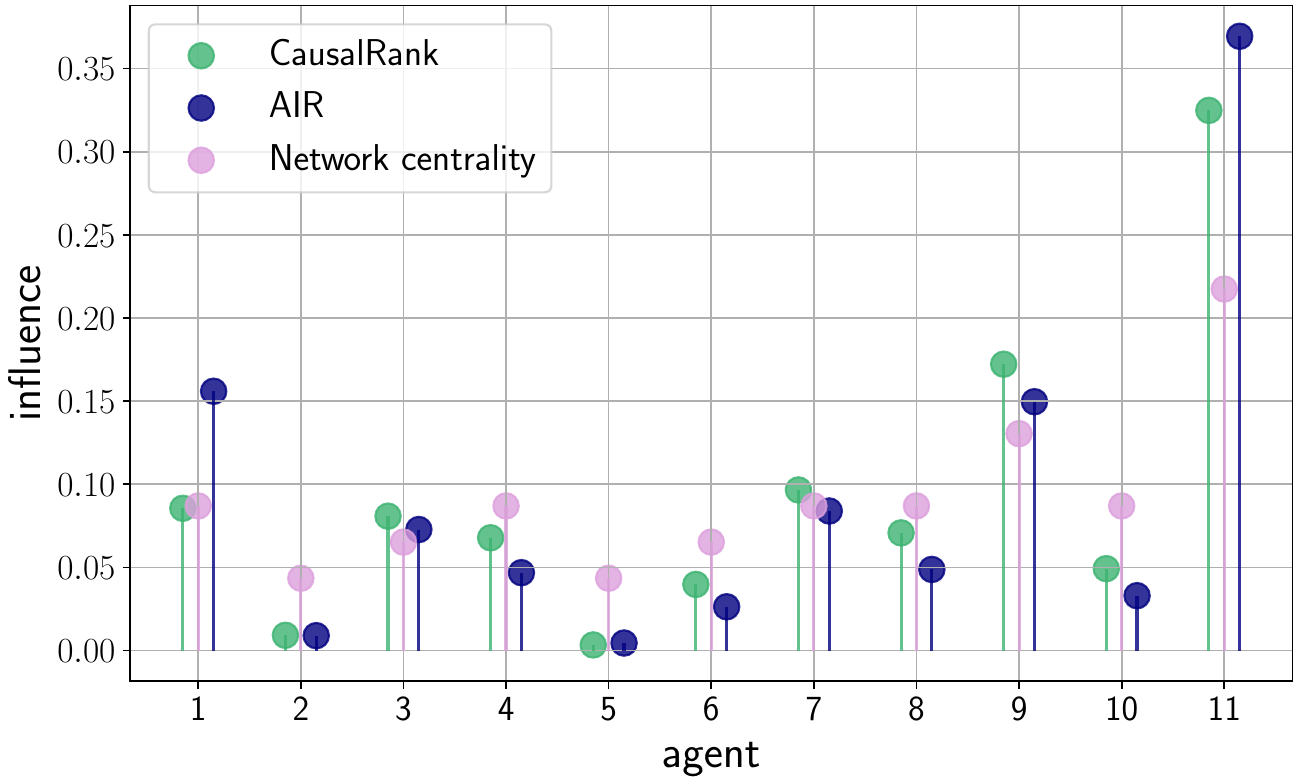}
  \caption{Ranking of agents based on their causal influence (for both {\sf \small CausalRank} and AIR) and their network-topology based eigenvector centrality. All ranking scores are normalized to sum up to one.}
  \label{fig:numerical_ranking}
\end{figure}

Next, in Fig.~\ref{fig:numerical_ranking}, by using the matrices in Fig.~\ref{fig:inf_comb}, we compare the overall influences of agents using three methods: {\sf \small CausalRank}, AIR, and network eigenvector centrality. Notably, the {\sf \small CausalRank} and AIR metrics yield similar results as they both use the bipartite causal relations matrix for causal ranking. For instance, agents 2 and 5 possess relatively low rankings in both of these metrics. The network eigenvector centrality, on the other hand, only relies on the combination matrix, and often deviates from these two metrics. Specifically, it assigns relatively higher scores to agents 2 and 5, and a comparatively lower score to agent 11. Moreover, an interesting distinction between AIR and {\sf \small CausalRank} becomes apparent when considering the case of agent 9. We can see from the causal influence matrix in Fig.~\ref{fig:inf_comb} that agent 9 has a substantial impact on agent 11 --- the most influential agent (see Fig.~\ref{fig:numerical_ranking}). Consequently, agent 9's {\sf \small CausalRank} score surpasses its AIR score. This can be attributed to {\sf \small CausalRank}'s consideration of the significance of influencing agent 11. Unlike AIR, which assigns uniform weights, {\sf \small CausalRank} assigns a higher weight to influences on more influential agents.

\begin{figure}
  \centering
  \includegraphics[width=.6\linewidth]{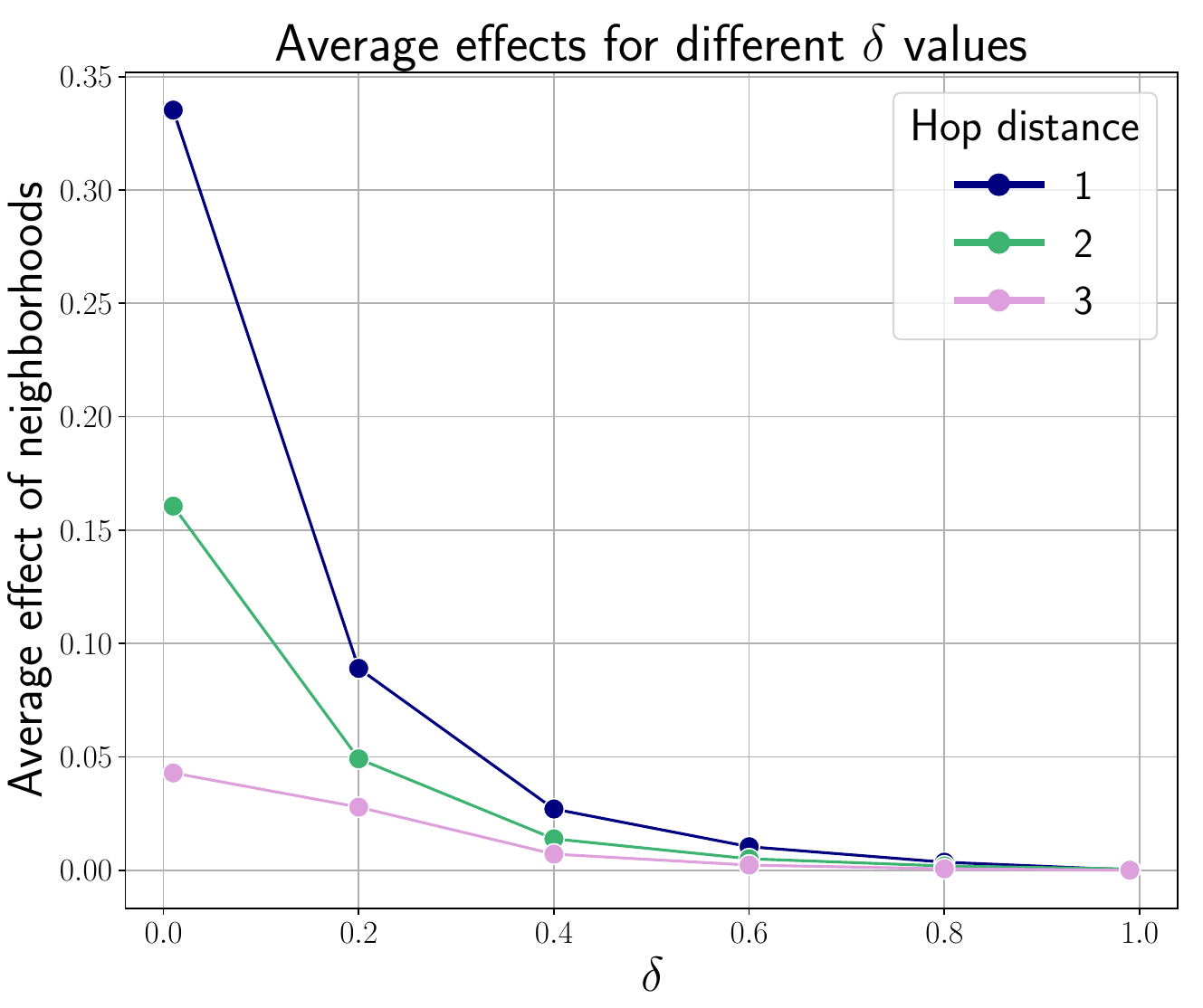}
  \caption{Average influence on agent $k=4$ from its neighborhood of 1,2 and 3 hop distances with respect to changing values of forgetting factor $\delta$.}
  \label{fig:different_delta}
\end{figure}

\begin{figure}
  \centering
  \includegraphics[width=.85\linewidth]{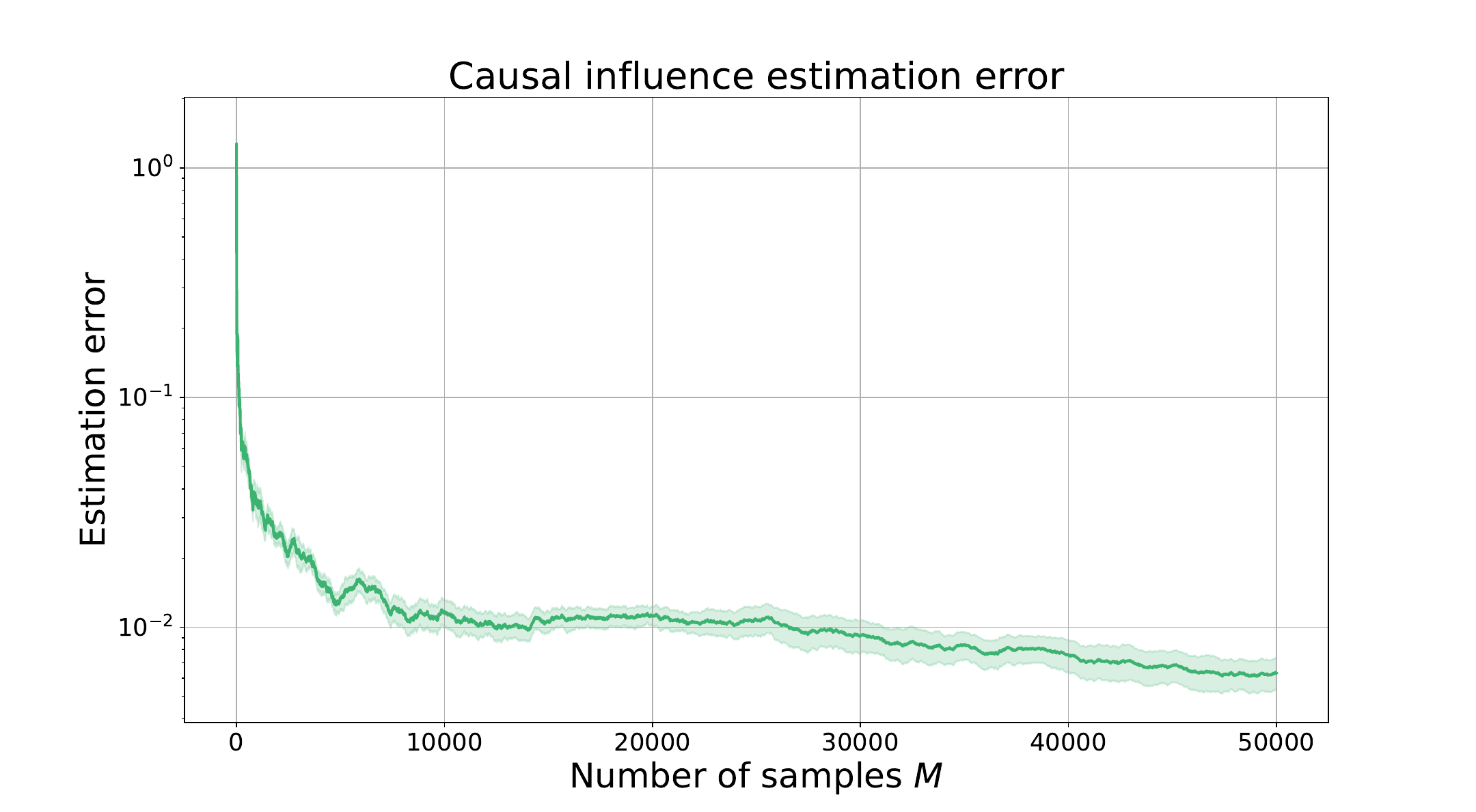}
  \caption{Causal influence estimation error with respect to increasing number of time samples $M$. Averaged over $10$ independent experiments.}
  \label{fig:causal_num_error_log}
\end{figure}

\begin{figure}
  \centering
  \includegraphics[width=.8\linewidth]{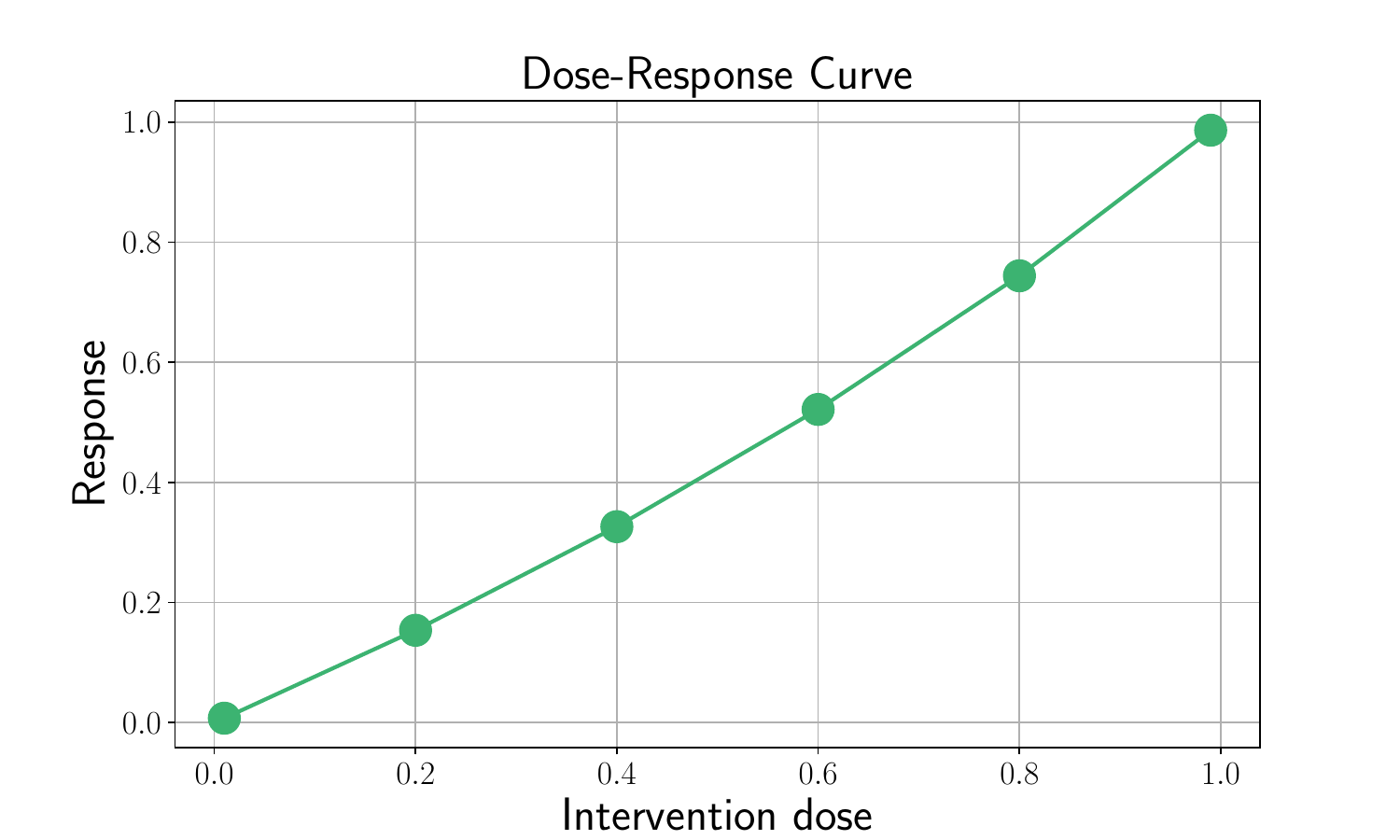}
  \caption{$\wcnb$ value from \eqref{eq:general_lambda_transf} for $m=11$ and $k=6$ with respect to increasing intervention strength $\mu_m(\theta)$ for $\theta \neq \theta^\circ$.}
  \label{fig:dose_response}
\end{figure}

To gain insights into the influence of the forgetting factor $\delta$ in the ASL case, we focus our attention on agent $k=4$. In Fig.~\ref{fig:different_delta}, we present the average influence exerted by agent 4's neighbors that are 1, 2, and 3 hops away. It is clear from Fig.~\ref{fig:different_delta} that the influence of distant agents diminishes with increasing $\delta$. This is because increasing $\delta$ increases the significance of recent observations, and since information from distant agents loses its recency by the time it arrives at agent 4, this implies assigning less importance to information from those distant agents.

In the simulations conducted so far, we have considered dose-independent causal effects $\overbar{C}_{m\to k}$. Figure~\ref{fig:dose_response} demonstrates the dependency of $\wcg$ on the intervened belief $\mu_m (\theta)$ for $\theta \neq \theta^\circ$, as described in equation~\eqref{eq:general_lambda_transf}. We use $m=11$ and $k=6$ for this plot, which illustrates the theoretical result presented in equation~\eqref{eq:general_lambda_transf}.

\begin{figure}
  \centering
  \includegraphics[width=.65\linewidth]{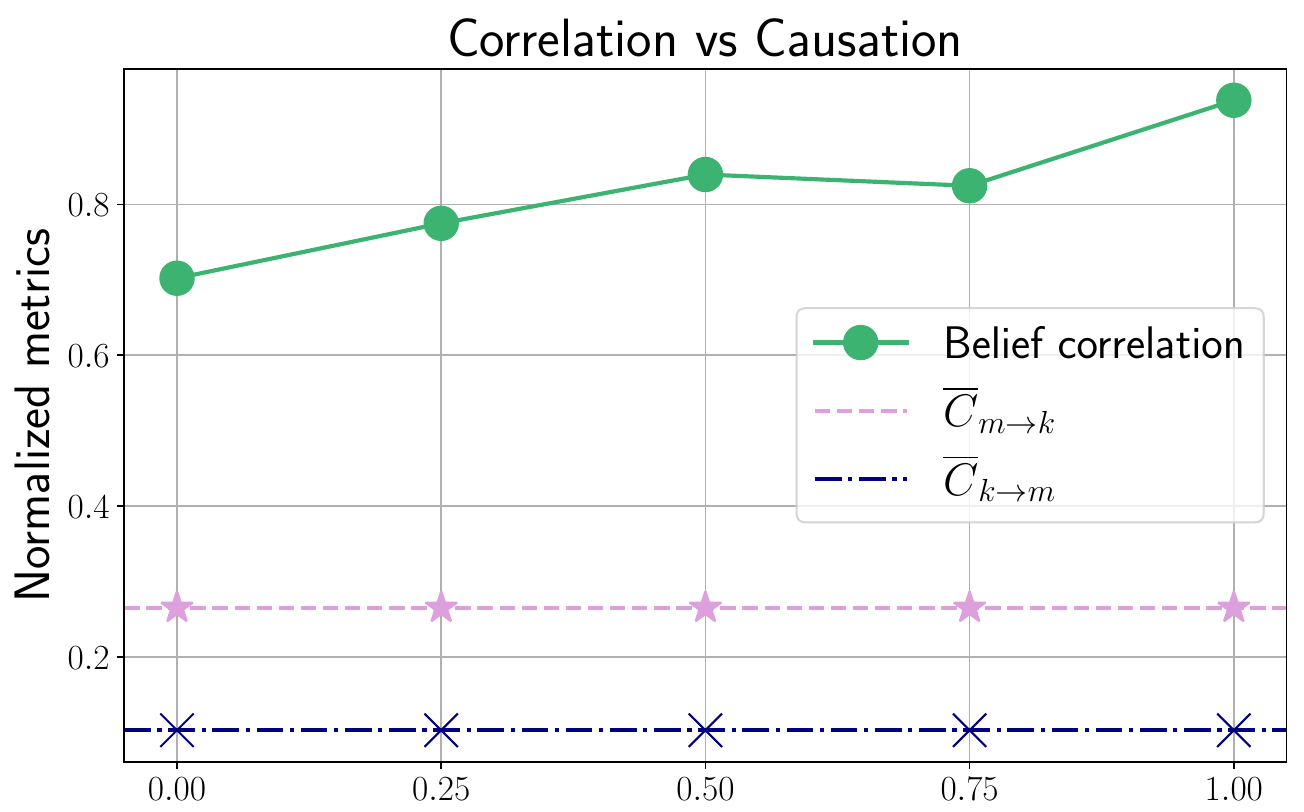}
  \caption{Correlation and causation between agents 6 and 11 with respect to varying dependence between the streaming observations they receive.}
  \label{fig:correlation_vs_causation_num}
\end{figure}

Then, we fix $\delta = 0.1$, and use the GCL algorithm (Alg.~\ref{alg:social_causal_learning}) in order to estimate the causal effects using observational data (shared beliefs) as described in Sec.~\ref{sec:causal_discovery}. The norm disagreement of the causal influence matrix formed with estimates and the true causal influences, averaged over 10 Monte Carlo simulations, is given in Fig.~\ref{fig:causal_num_error_log}. Observe that the error is decreasing as the number of samples $M$ increases, which supports Theorem~\ref{th:inform_learning}. 

Next, we illustrate the distinction between causality and correlation by again considering agents $m=11$ and $k=6$. The joint distribution of their data is changed by introducing varying levels of correlation to the observations that these agents are receiving. Fig.~\ref{fig:correlation_vs_causation_num} shows that as the correlation in data increases, the correlation of the asymptotic beliefs of these agents also changes. However, the causal effects (both the effect of agent 6 on 11 and that of agent 11 on 6) remain constant. This shows that external observations can act as a correlation inducing confounding factor. Yet, our method maintains consistent results, which shows its robustness against non-causal factors.

\begin{figure}
  \centering
  \includegraphics[width=.68\linewidth]{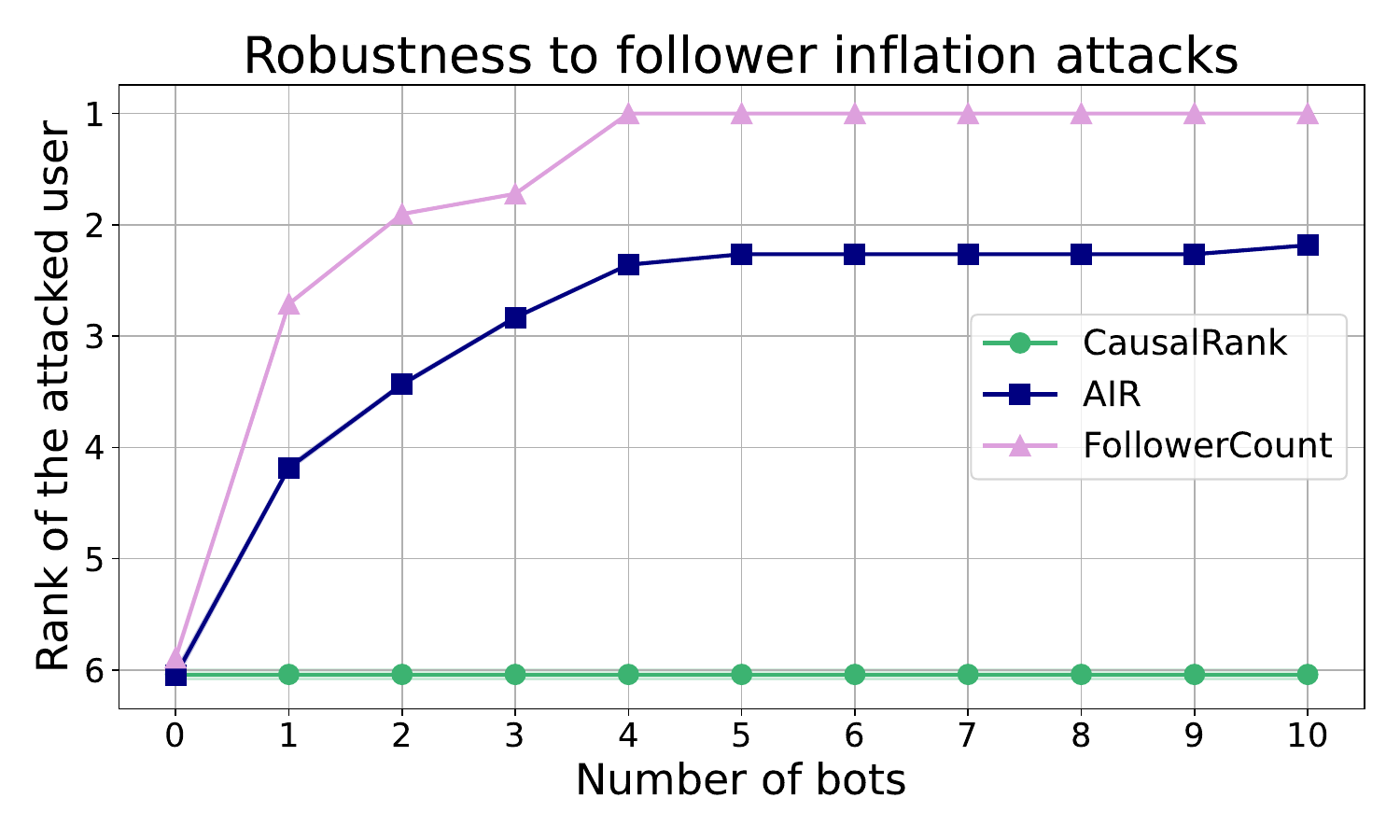}
    \caption{Influence rank of an attacked agent whose followers are artificially inflated with bot accounts. The attacked agents are randomly selected, and the experiment is repeated $5000$ times.}
  \label{fig:bot_attack_raw_rank}
\end{figure}

In order to demonstrate the practical usefulness of {\sf \small CausalRank}, we compare the robustness of the considered ranking methods against follower-inflation attacks. Such attacks are practically relevant in social networks because visible popularity metrics can be monetized through sponsorships, advertising and other reputation-based opportunities, which creates incentives to purchase fake indicators of social media influence.
We use the same network topology and informativeness levels as in Fig.~\ref{fig:simulations_network} and Table~\ref{tab:my_label}, but augment the network by introducing additional agents (bot accounts) that act as artificial followers of a randomly selected (attacked) agent.
These bots are non-informative and are designed only to increase the apparent popularity of the attacked agent without contributing meaningful influence over the other agents.
Fig.~\ref{fig:bot_attack_raw_rank} shows the average rank of the attacked agent as the number of bots increases. 
The reported values are averaged over 5000 independent repetitions.
As expected, the ranking based on the number of followers is highly sensitive to this type of manipulation, while AIR is also noticeably affected since it rewards influence over a larger number of agents irrespective of their importance.
In contrast, the average rank produced by {\sf \small CausalRank} remains close to the theoretical expectation of $6$ for $K=11$ honest agents.
We provide additional experiments in Appendix~\ref{appendix:additional_experiments} with networks of different architectures and sizes to show that this effect continues to hold across those settings.

\begin{figure}
  \centering
  \includegraphics[width=.68\linewidth]{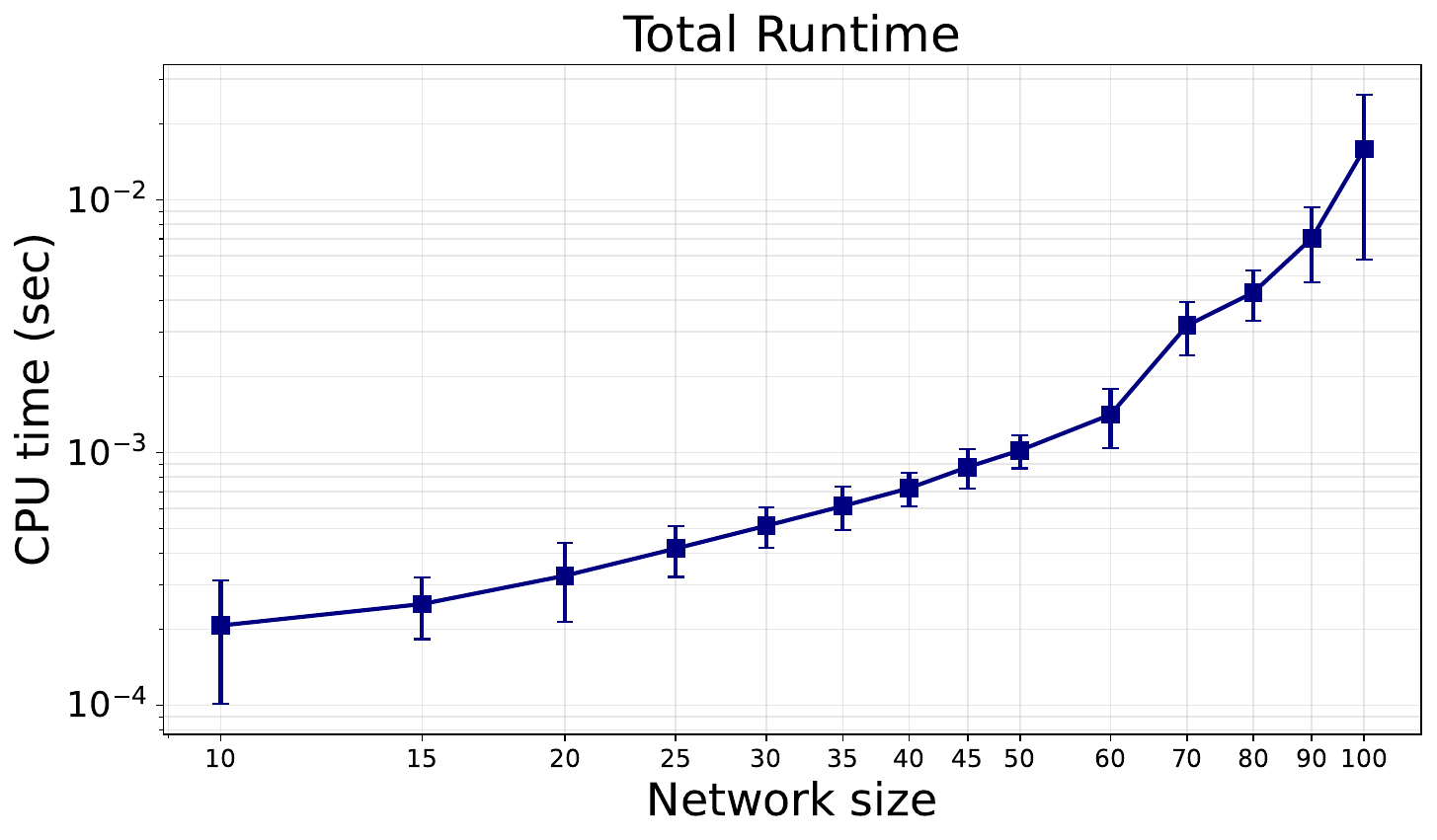}
    \caption{CPU-time to compute influence matrix $C$ and apply {\sf \small CausalRank} for increasing network sizes.}
  \label{fig:total_runtime_different_sizes}
\end{figure}

Finally, we benchmark the computational time required to calculate the causal influence matrix $C$ and {\sf \small CausalRank} as a function of network size $K$ for NBSL. For each value of $K$ between $10$ and $100$, we generate $20$ independent experiments, each on a distinct strongly-connected Erd\H{o}s--Renyi graph. In Fig.~\ref{fig:total_runtime_different_sizes}, we report the total CPU time required for this computation in seconds, averaged over the independent experiments. As shown in Appendix~\ref{appendix:complexity}, the influence matrix calculation can be accelerated by reusing common matrix operations across different agents in the network. We provide additional plots illustrating this effect and also benchmark the CPU time for ASL in Appendix~\ref{appendix:additional_experiments}.

\section{Application to Social Media Data}\label{sec:twitter_app}

In this section, we use the GCL algorithm (Alg.~\ref{alg:social_causal_learning}) on real-world data to assess the influence of Twitter users. Our approach distinguishes itself from prior works \citep{twitter_quercia2011mood,twitter_influence_measure,smith2021automatic,valentina2023discovering}, which typically rely on some descriptive statistics to measure influence in Twitter. More specifically, in our approach,
\begin{itemize}
    \item All input requirements are publicly available, i.e., publicly shared posts (tweets) by users and the information of who follows whom. This offers a significant advantage in terms of privacy, as we do not require any private feature about users.
    \item Going beyond providing a mere ranking of influential users, we also quantify the bipartite causal relations.
    \item We leverage natural language processing tools to extract meaningful information from the content of users' posts to form belief inputs, rather than relying on traditional simpler metrics such as the posting frequency. In contrast to the binary treatment approach, adopting the continuous treatment metric in \eqref{eq:cmkgeneral_def} for our causality definition enables us to allow for the use of continuous variables such as opinions and sentiment scores.
\end{itemize}
 Note that we utilize Alg.~\ref{alg:social_causal_learning} for the NBSL model ($\delta = 0, \beta = 1$) as-is, without employing additional techniques to enhance its accuracy for real-world modeling. Our intention is to demonstrate the practical usefulness of our algorithm rather than striving to develop the most advanced practical algorithm available. 

\begin{wrapfigure}[20]{L}{7cm}
\includegraphics[width=7cm]{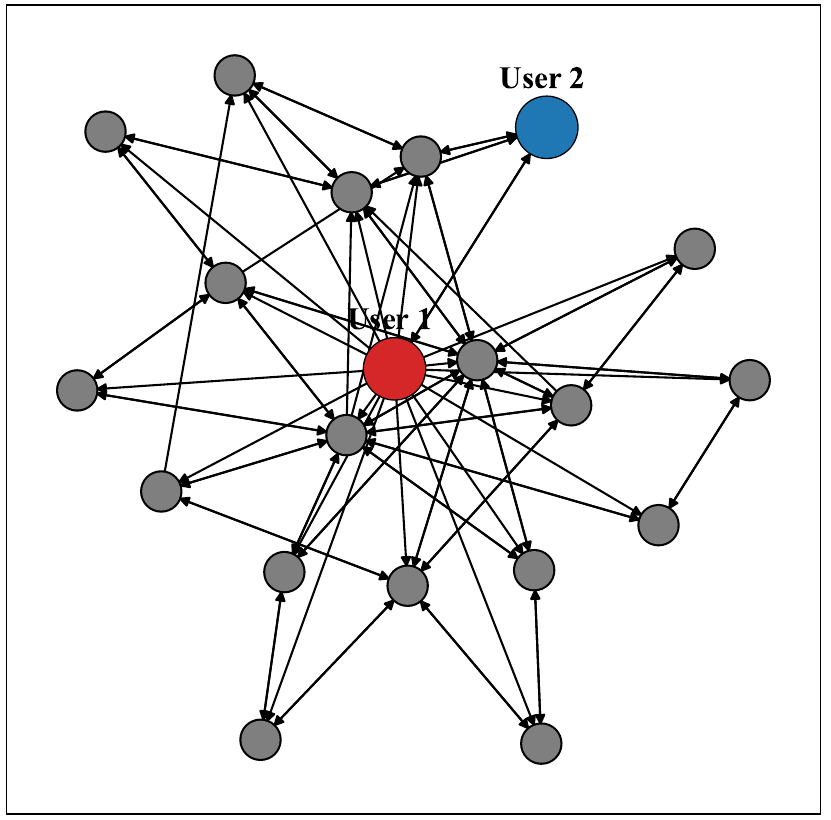}
\vspace{-2em}\caption{Sampled Twitter sub-network with $K=20$ users. An arrow from a user $k$ to $\ell$ means that user $\ell$ is following user $k$. The connectivity density of the graph is approximately $0.24$.}\label{fig:twitter_network}
\end{wrapfigure} 
\noindent \textbf{Network structure.} Performance evaluation of the influence estimation algorithms in real-world social networks is challenging due to the absence of ground truth regarding influential users. There is also no ground truth reference for the confidence scores assigned by users to one another (i.e., combination weights) or for the information the users are obtaining from out-of-network resources (i.e., informativeness levels). Therefore, we utilize the framework from \citep{valentina2023discovering}, namely, a sub-network consisting of $K=20$ Twitter users, as illustrated in Fig.~\ref{fig:twitter_network}. Notably, this sub-network incorporates Elon Musk (User 1), a public figure with 140 million followers across Twitter, who is reportedly influential on cryptocurrency prices \citep{lennart_bitcoin_musk}. All users within the sub-network actively share posts related to cryptocurrencies and bitcoin-related topics. Furthermore, we highlight another user (User 2), who has 1,167 followers across Twitter, is notable for being followed by User 1, as depicted in Fig.~\ref{fig:twitter_network}. Importantly, the sub-network exhibits a strong connectivity among its members.

\noindent \textbf{Opinion processing.} The Twitter API is leveraged in order to collect the posts (tweets) of users between 01.01.2017 and 01.05.2022 relevant to crypto-currency discussions, using query keywords such as ``coin'', ``bitcoin'', or ``crypto-currency''. To quantify the contextual information of these posts to form the input beliefs, sentiment analysis based on neural language models \citep{loureiro2022timelms} is utilized. We refer to Fig.~\ref{fig:tweets_table} for some illustrative examples. The sentiment scores obtained through natural language processing ranges from 0 to 1, signifying the degree of positive attitude towards Bitcoin. These scores correspond to the beliefs of the agents on the hypothesis of ``Bitcoin is good/useful''. We consider two hypotheses, i.e., $H=2$, where the counter-hypothesis is ``Bitcoin is bad/harmful''. 

We then integrated these beliefs obtained from users' tweets, along with the sub-network topology of who follows whom, into Alg.~\ref{alg:social_causal_learning}. Specifically, we employed NBSL modeling, i.e., $\delta=0$ and $\beta=1$. We binned the sequence of data to days, that is, each $i$ in our algorithms corresponds to one day. Combination weights were estimated using an averaging rule on the sub-network topology. 

\begin{figure}
  \centering
  \includegraphics[width=.55\linewidth]{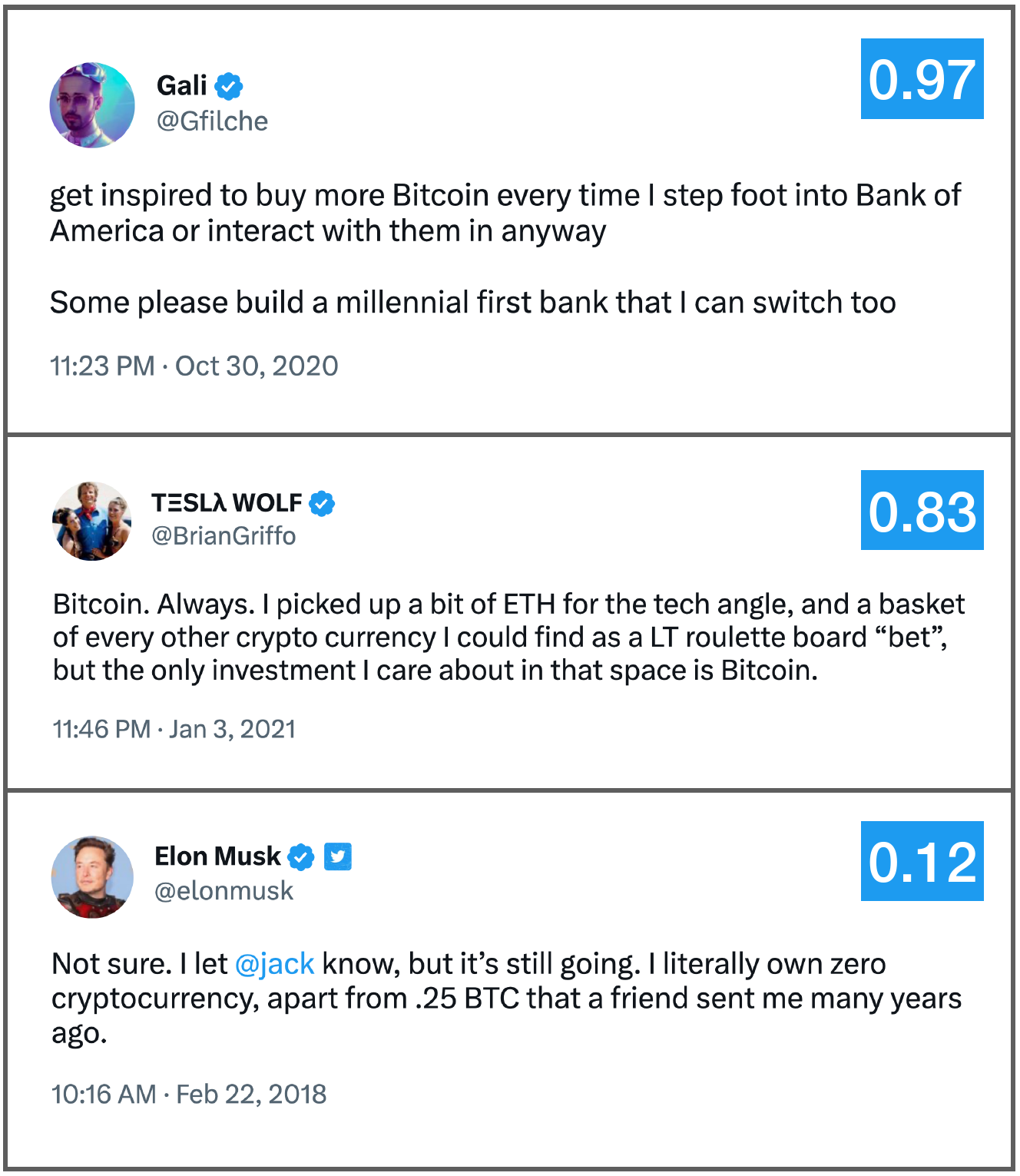}
  \caption{Sample tweets from the users in the sub-network. The number on the upper RHS quantifies the positive attitude towards Bitcoin.}
  \label{fig:tweets_table}
\end{figure}

\begin{figure}
  \centering
  \includegraphics[width=.95\linewidth]{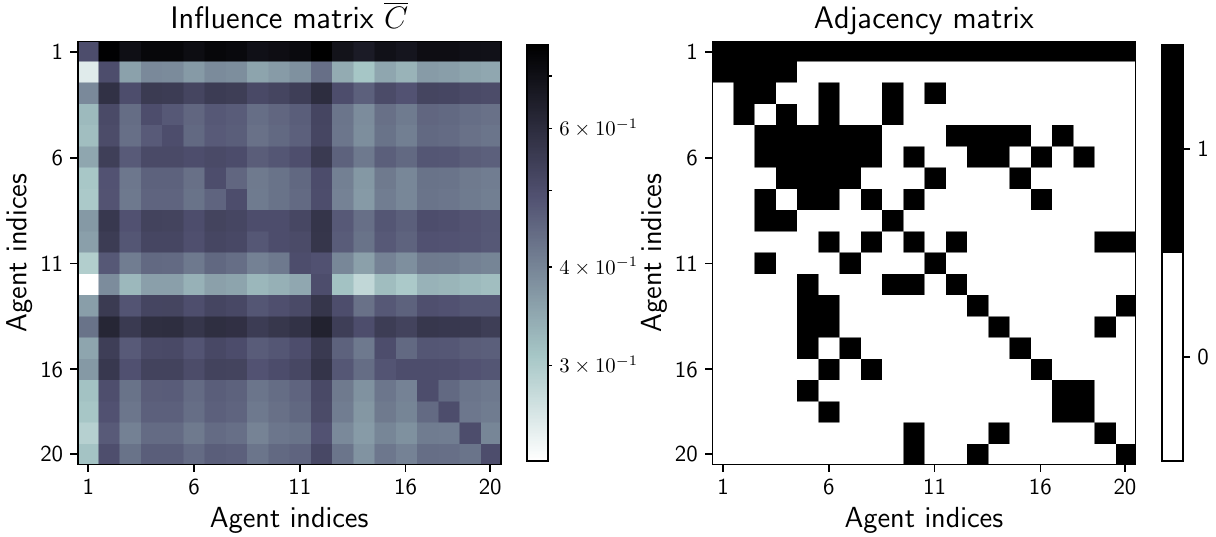}
  \caption{(\emph{Left}): Bipartite causal influence matrix. (\emph{Right}): Adjacency matrix corresponding to the sub-network topology in Fig.~\ref{fig:twitter_network}.}
  \label{fig:inf_adj}
\end{figure}

\noindent \textbf{Bipartite causality.} Inserting the observational input into Alg.~\ref{alg:social_causal_learning}, the resulting average causal derivative effect matrix is shown in Fig.~\ref{fig:inf_adj} in the form of a heat map. To facilitate comparison, we also include the adjacency matrix, which describes the connections between users. In these plots, the indices 1 and 2 correspond to User 1 and User 2, respectively.

Upon observing the heat map, it is evident that User 1 holds significant influence over all other users, as indicated by the high values in the 1th row, which aligns with our expectations. However, notice that the adjacency matrix does not precisely mirror the causal relationships. For instance, User 2 is followed by User 1, yet their influence on User 1, as depicted in the heat map, is relatively low. On the other hand, User 14 exerts a substantial impact on User 2, despite not being directly followed by User 2. This fact may arise from the fact that User 14 holds one of the highest influences on User 1 among all the users in this particular sub-network. These observations highlight the fact that the nature of influence dynamics within real-world social networks cannot simply be explained with direct follower relations.

\begin{figure}
  \centering
  \includegraphics[width=.7\linewidth]{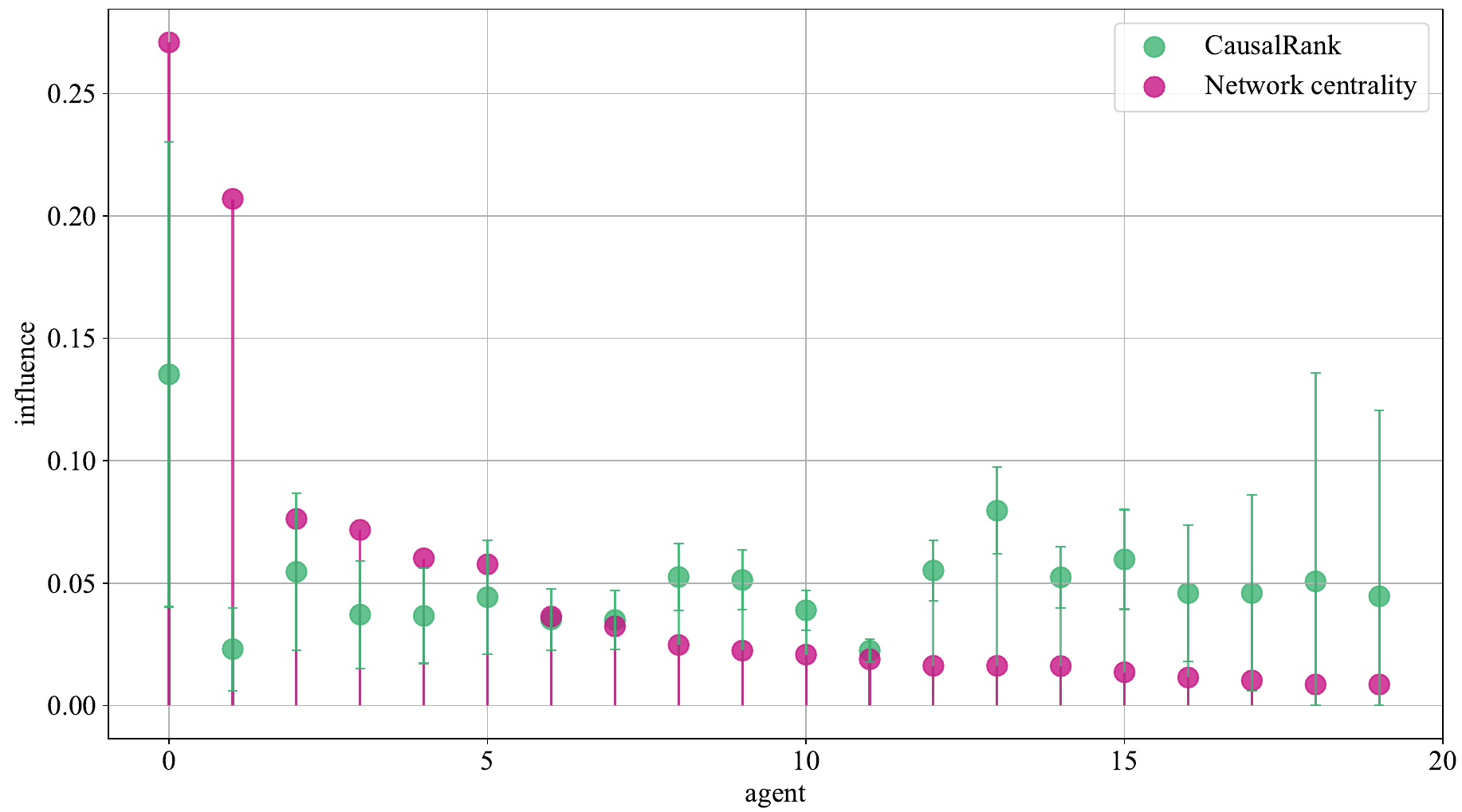}
  \caption{Ranking of agents based on their causal influence and their corresponding network eigenvector centrality. Both ranking scores are summing up to one.}
  \label{fig:ranking}
\end{figure}

\noindent \textbf{Causal impact ranking.} Once the influence matrix is determined, we apply the {\sf \small CausalRank} algorithm (Alg.~\ref{alg:causal_rank}) to rank the agents based on their overall influence within the sub-network. The resulting plot is depicted in Fig.~\ref{fig:ranking}. Notably, User 1 emerges as the most influential agent, aligning with our initial expectations.

However, an intriguing observation can be made regarding User 2. Despite having a high eigenvector centrality, their causal impact score appears relatively small. This phenomenon arises because the causal effect is not solely determined by network centrality but also takes into account the informativeness of the agents. For instance, if a user primarily retweets (reposts) what their neighbors are tweeting, such users tend to possess low informativeness, decreasing their causal impact score. Thus, even though User 2 may have a high centrality within the considered sub-network primarily due to being followed by User 1, their causal influence on their neighbors is low and does not propagate to other users, leading to a relatively small causal impact.

\section{Conclusion}

In this study, we analyzed causal influences among agents that are connected over a network and whose interactions occur over time. Using social learning models, we derived expressions for the causal relationships between pairs of agents. These expressions offer key insights into the diffusion of influences across a social network. We also proposed the {\sf \small CausalRank} algorithm for ranking the overall influence of agents, which allows discovering highly influential actors within a network. Furthermore, to enhance the practical usage of our results, we proposed the graph causality learning algorithm (GCL) that learns the necessary model parameters from raw observational data in order to estimate the causal effects. We demonstrated how GCL can be applied in practice through an application to real social media data.

The social learning models we considered in this work are useful for both modeling opinion formation over social networks as well as for designing multi-agent decision-making systems. Therefore, potential applications range from the analysis of human social networks, such as those on social media platforms, to cooperative decision-making processes of socially intelligent machines like networks of drones or sensors. In addition to these, our results can be useful for applications that involve time-series networked interactions, since they provide insights on the diffusion of influence across graphs.

A key challenge in our experimental evaluation was the lack of definite ground truth for influential agents in real-world social networks.
To address this issue, we conducted our social media experiment on data in which at least the most influential agent was readily identifiable.
We also attempted to design controlled experiments using LLMs, whose capabilities have advanced rapidly since the initial submission of this work.
However, our exploratory results indicated that their inherent biases remain too strong to support fully reliable evaluations of the algorithms at this stage.
We therefore leave this direction for future work.

\section*{Acknowledgments}

The authors thank the anonymous reviewers and the action editor for their insightful and constructive feedback.

\appendix

\section{Connection of \eqref{eq:wcnbd_definition} to Average Causal Derivative Effect }\label{appendix:derivative_effect}

In this appendix, we demonstrate how definition \eqref{eq:wcnbd_definition} for the causal effect summary $\wcnbd$ can also be interpreted as the increment in causal effect $\wcnb$ after an infinitesimal change in the intervention strength $\mu_m(\theta)$. In the literature, this is referred to as the average causal \emph{derivative} effect, as it computes the derivative in causal effect with respect to the intervention strength \cite[Chap. 6]{peters2017elements}. However, a simple derivative of \(\wcnb\) with respect to \(\mu_m\) fails to produce a dose-independent summary, given that \(\wcnb\) is not a linear function of \(\mu_m\), which can be seen from \eqref{eq:cmknb_derivative_causalrank}. Nonetheless, we introduce the function
\begin{equation}\label{eq:fcmknb_def}
    f(\wcnb) \triangleq \dfrac{\wcnb}{1-\wcnb} \stackrel{\eqref{eq:cmknb_derivative_causalrank}}{=}  \!\sum\limits_{\theta \in \Theta \setminus \{ \theta^\circ\}} \dfrac{\mu_{m}(\theta)}{\mu_{m}(\theta^\circ)}  \exp \Big \{ - \Big [\big ((I-R^{\T})^{-1}-I \big ) d_{-m}(\theta) \Big]_k \Big \},
\end{equation}
and notice that $f(\wcnb) \in [0,\infty)$ is a linear function of the intervened belief ratio vector $s_m$, defined as
    \begin{equation}\label{eq:s_m_definition}
        s_m(\theta) \triangleq \frac{\mu_{m}(\theta)}{\mu_{m}(\theta^\circ)}, \quad  s_{m} \triangleq [s_{m}(\theta_1), \dots, s_{m}(\theta_H) ]^{\T}.
    \end{equation}
Here, the vector $s_m$ quantifies the amount of relative misinformation the intervention $\mu_m$ produces. Therefore, the gradient 
\begin{equation}\label{eq:derivative_nbsl_def}
         \nabla_{s_m} f(\wcnb) = \!\!\!\sum_{\theta \in \Theta \setminus \{\theta^\circ\}} \! \frac{\partial f(\wcnb)}{\partial s_m(\theta)} \stackrel{\eqref{eq:fcmknb_def}}{=} \!\!\!\sum\limits_{\theta \in \Theta \setminus \{ \theta^\circ\}} \!\!\! \exp \Big \{ - \Big [\big ((I-R^{\T})^{-1}-I \big ) d_{-m}(\theta) \Big]_k \Big \}
    \end{equation}
is independent of the intervention dose $\mu_m$, and satisfies
\begin{equation}
    \nabla_{s_m} f(\wcnb) = f(\wcnbd).
\end{equation}
In other words, to find the gradient of $f(\wcnb)$ with respect to $s_m$, setting 
\begin{equation}
    \dfrac{\mu_{m}(\theta)}{\mu_{m}(\theta^\circ)} = 1  \Longrightarrow \mu_{m}(\theta) = \frac{1}{H}, \: \forall \theta \in \Theta
\end{equation}
in \eqref{eq:fcmknb_def} as it was done in \eqref{eq:wcnbd_definition} for finding $\wcnbd$, is sufficient. The reason we are interested in $\nabla_{s_m} f(\wcnb)$ is the following: Notice from \eqref{eq:fcmknb_def} that $f(\wcnb)$ is a monotonic increasing function of $\wcnb$. Also, $s_m (\theta)$ is clearly a monotonic function of $\mu_m (\theta)$ --- see \eqref{eq:s_m_definition}. Therefore, $\nabla_{s_m} f(\wcnb)$ can be considered as some proxy for the derivative of the causal effect $\wcnb$ with respect to $\mu_m (\theta)$. Consequently, we conclude that setting $\mu_m(\theta)$ as a uniform belief in $\wcnb$ effectively parallels finding the average causal derivative effect.
It is also important to note here that the average causal derivative effect is the analogue of the binary treatment causal effect. For binary interventions, the causal effect is given by the difference between the outcomes under the two possible interventions. When the intervention varies continuously, the corresponding quantity is the derivative with respect to intervention strength, which captures the infinitesimal change in the causal effect.

\section{Proof of \eqref{eq:lambda_m_infty_asl}}\label{appendix:asl_general}
Theorem~\ref{th:asl_conv_dist}, which establishes convergence in distribution for the pre-intervention ASL case does not require the strongly connected graph assumption. In fact, it holds as long as the observations that the agents receive are i.i.d.\ over time. This condition is satisfied under the post-intervention case as well. Hence, the log-beliefs under intervention converge in distribution, i.e., 
\begin{equation}\label{eq:asl_conv_dist_inter}
         \wbS_{i}(\theta) \idc \beta \sum_{j=1}^\infty (1-\delta)^{j-1} (\widetilde{A}^{\T})^j \wbL_{j}(\theta).
\end{equation}
Then, following the same arguments from Corollary~\ref{cor:expected_logbelief}, the limiting expectation becomes
\begin{equation}\label{eq:limiting_exp_asl_int}
    \widetilde{\lambda}_{\infty}(\theta) = \frac{\beta}{1-\delta} \Big ((I-(1-\delta)\widetilde{A}^{\T})^{-1} - I \Big ) \widetilde{d}(\theta)
\end{equation}
where we define the vector of expected LLRs and the vector of limiting log-belief ratio expectations from across the network:
\begin{equation}
     \widetilde{d}(\theta) \triangleq \Big [\frac \delta \beta \log\frac{\mu_{1}(\theta^{\circ})}{\mu_{1}(\theta)}, d_2(\theta), \dots , d_K(\theta) \Big ]^{\T}, \:  \widetilde{\lambda}_{\infty}(\theta) \triangleq \Big [\log\frac{\mu_{1}(\theta^{\circ})}{\mu_{1}(\theta)}, \lambda_{2,\infty}(\theta), \dots, \lambda_{K,\infty}(\theta) \Big ]^{\T}.
\end{equation}
Using the block matrix form of \(\widetilde{A}\) from \eqref{eq:effective_definitions_block}, we have
\begin{equation}
    I-(1-\delta)\widetilde{A}^{\T} = \left [ \begin{array}{cc}
   \delta  & 0 \\
   -(1-\delta)r  & I-(1-\delta)R^{\T}  \\
\end{array} \right ],
\end{equation}
which implies
\begin{align}
    (I-(1-\delta)\widetilde{A}^{\T})^{-1} = \left [ \begin{array}{cc}
   \dfrac{1}{\delta}  &  0 \\
   \\
   \dfrac{1-\delta}{\delta}(I-(1-\delta)R^{\T})^{-1} r  & (I-(1-\delta)R^{\T})^{-1}  \\
\end{array} \right ].
\end{align}
Inserting this into \eqref{eq:limiting_exp_asl_int}, we arrive at
\begin{align}
    \widetilde{\lambda}_{\infty}(\theta) &= \frac{\beta}{1-\delta} \left [ \begin{array}{cc}
   \dfrac{1-\delta}{\delta}  &  0 \\
   \\
   \dfrac{1-\delta}{\delta}(I-(1-\delta)R^{\T})^{-1} r  & (I-(1-\delta)R^{\T})^{-1}-I  \\
\end{array} \right ] \left [ \begin{array}{c}
     \dfrac{\delta}{\beta}\log\dfrac{\mu_{1}(\theta^{\circ})}{\mu_{1}(\theta)}\\
     \\
     d_{-m} (\theta) 
     \\
\end{array} \right ] 
\end{align}
which again verifies that \(\widetilde{\lambda}_{1,\infty}(\theta) =   \log\dfrac{\mu_{1}(\theta^{\circ})}{\mu_{1}(\theta)} \) and proves relation \eqref{eq:lambda_m_infty_asl}.

\newpage
\section{Proof of \eqref{eq:asl_lambda_fully_connected}}\label{appendix:asl_fully_connected}

Recall from \eqref{eq:fully_connected_a}--\eqref{eq:fully_connected_intervention_a} that for this fully connected network:
\begin{equation}
    A=v\mathds{1}_{K}^{\T}, \quad R= v_{-m}\mathds{1}_{K-1}^{\T}, \quad r=v_1 \mathds{1}_{K-1}.
\end{equation}
As a result, it holds that
\begin{align}\label{eq:fully_asl_inverse}
    (I-(1-\delta)R^{\T})^{-1} &= \sum_{i=0}^{\infty} (1-\delta)^{i}  (R^{\T})^{i} \notag \\
    &\stackrel{\eqref{eq:R_i_expression_fully}}{=} I + (1-\delta) \mathds{1}_{K-1}v_{-m}^{\T} + (1-\delta)^2 (1-v_1) \mathds{1}_{K-1}v_{-m}^{\T} + \dots \notag \\
    &= I+ (1-\delta)  \Big ( \sum_{i=0}^\infty (1-\delta)^i (1-v_1)^i \Big ) \mathds{1}_{K-1}v_{-m}^{\T} \notag \\
    &= I + \frac{1-\delta}{1-(1-\delta)(1-v_1)} \mathds{1}_{K-1}v_{-m}^{\T}.
\end{align}
This implies that
\begin{align}\label{eq:asl_fully_matrix_1}
    (I-(1-\delta)R^{\T})^{-1} r &= \Big ( I + \frac{1-\delta}{1-(1-\delta)(1-v_1)} \mathds{1}_{K-1}v_{-m}^{\T} \Big ) v_1 \mathds{1}_{K-1} \notag \\
    &\stackrel{(a)}{=} v_1 \Big (\mathds{1}_{K-1} + \frac{1-\delta}{1-(1-\delta)(1-v_1)} (1-v_1)  \mathds{1}_{K-1} \Big) \notag \\
    &= \frac{v_1}{1-(1-\delta)(1-v_1)} \mathds{1}_{K-1}
\end{align}
where \( (a) \) follows from the fact that \( \sum_{\ell=2}^{K} v_{\ell} = 1 - v_1\). Eq.~\eqref{eq:fully_asl_inverse} also implies that
\begin{equation}\label{eq:asl_fully_matrix_2}
    \Big ((I-(1-\delta)R^{\T})^{-1}-I \Big )  d_{-m} (\theta) = \frac{1-\delta}{1-(1-\delta)(1-v_1)} \Big ( \sum_{\ell=2}^K v_{\ell} d_{\ell} (\theta)\Big )\mathds{1}_{K-1} 
\end{equation}
Incorporating \eqref{eq:asl_fully_matrix_1} and \eqref{eq:asl_fully_matrix_2} into \eqref{eq:lambda_m_infty_asl} concludes the proof.

\section{Proof of \eqref{eq:asl_lambda_ring}}\label{appendix:asl_ring}

Recall expressions \eqref{eq:ring_A_nbsl} and \eqref{eq:ring_R_nbsl} for \(A\) and \(R\) in the ring special case. Also observe that
\begin{equation}
    r = \Big [1-\alpha, 0, \dots, 0 \Big ]^{\T}.
\end{equation}
Accordingly, it holds that
\begin{equation}
    (I-(1-\delta)R^{\T}) = \left [ \begin{array}{ccccc}
    1-(1-\delta)\alpha & 0   &  \cdots  & 0 & 0 \\
    -(1-\delta)(1-\alpha) & 1-(1-\delta)\alpha   &  \cdots & 0 & 0 \\
    \vdots & \vdots  &  \ddots & \vdots&  \vdots \\
    0 & 0  &  \cdots   &1-(1-\delta)\alpha  & 0 \\
    0 & 0  &  \cdots   &-(1-\delta)(1-\alpha) & 1-(1-\delta)\alpha \\
\end{array} \right ]
\end{equation}
The inverse of this Toeplitz matrix is given by the following lower diagonal matrix:
\begin{align}
    &(I-(1-\delta)R^{\T})^{-1} = \notag \\& \left [ \begin{array}{ccccc}
    \dfrac{1}{1-(1-\delta)\alpha} & 0   &  \cdots  & 0 & 0 \\[12pt]
    \dfrac{(1-\delta)(1-\alpha)}{(1-(1-\delta)\alpha)^2} & \dfrac{1}{1-(1-\delta)\alpha}   &  \cdots & 0 & 0 \\[12pt]
    \vdots & \vdots  &  \ddots & \vdots&  \vdots \\[12pt]
    \dfrac{(1-\delta)^{K-3}(1-\alpha)^{K-3}}{(1-(1-\delta)\alpha)^{K-2}} & \dfrac{(1-\delta)^{K-4}(1-\alpha)^{K-4}}{(1-(1-\delta)\alpha)^{K-3}}  &  \cdots   &\dfrac{1}{1-(1-\delta)\alpha}  & 0 \\[12pt]
    \dfrac{(1-\delta)^{K-2}(1-\alpha)^{K-2}}{(1-(1-\delta)\alpha)^{K-1}} & \dfrac{(1-\delta)^{K-3}(1-\alpha)^{K-3}}{(1-(1-\delta)\alpha)^{K-2}}  &  \cdots   &\dfrac{(1-\delta)(1-\alpha)}{(1-(1-\delta)\alpha)^2} & \dfrac{1}{1-(1-\delta)\alpha} \\[12pt]
\end{array} \right ] .
\end{align}
Then, the matrix-vector products in the general formula \eqref{eq:lambda_m_infty_asl} become
\begin{equation}
     (I-(1-\delta)R^{\T})^{-1}r = \Bigg [ \dfrac{1-\alpha}{1-(1-\delta)\alpha}, \dfrac{(1-\delta)(1-\alpha)^2}{(1-(1-\delta)\alpha)^2}, \dots,  \dfrac{(1-\delta)^{K-2}(1-\alpha)^{K-1}}{(1-(1-\delta)\alpha)^{K-1}}\Bigg ]^{\T}
\end{equation}
and
\begin{equation}
   \Big ((I-(1-\delta)R^{\T})^{-1}-I \Big )  d_{-m} (\theta) = \left [ \begin{array}{c}
   \dfrac{\alpha (1-\delta)}{1-(1-\delta)\alpha} d_2(\theta) \\[12pt]
   \dfrac{\alpha (1-\delta)}{1-(1-\delta)\alpha} d_3(\theta) + \dfrac{(1-\alpha) (1-\delta)}{(1-(1-\delta)\alpha)^2} d_2(\theta) \\[12pt]
   \vdots \\[12pt]
   \dfrac{\alpha (1-\delta)}{1-(1-\delta)\alpha} d_K(\theta)+ \dots + \dfrac{(1-\alpha)^{K-2} (1-\delta)^{K-2}}{(1-(1-\delta)\alpha)^{K-1}} d_2(\theta)
   \end{array} \right]
\end{equation}
Inserting these into the general expression \eqref{eq:lambda_m_infty_asl} concludes the proof.

\section{Proof of Theorem~\ref{th:inform_learning}}\label{appendix:inform_learning}

If we denote the error in estimating the combination matrix by $E \triangleq \widehat{A}-A$, then combining \eqref{eq:gsp_recursion_linear} and \eqref{eq:llr_estimation_M} yields:
\begin{equation}
    \widehat{\bm{D}} = \dfrac{1}{\beta M} \sum_{i=1}^M \Big (\beta \bX_i - (1-\delta) E^{\T} \bLambda_{i-1} \Big) .
\end{equation}
Therefore, the error in informativeness can be decomposed as
\begin{align}\label{eq:inform_proof_decomposition}
    \e \|\widehat{\bm{D}}- D\|_\textup{F}^2 &= \frac{1}{\beta^2 M^2} \e \Big \| \sum_{i=1}^M \beta (\bX_i - D) - (1-\delta) E^{\T} \bLambda_{i-1}  \Big \|_\textup{F}^2 \notag \\
    &= \frac{1}{M^2}  \e \Big \| \sum_{i=1}^M  (\bX_i - D) \Big \|_\textup{F}^2 + \frac{(1-\delta)^2}{\beta^2 M^2} \e \Big \|\sum_{i=1}^M E^{\T} \bLambda_{i-1} \Big \|_\textup{F}^2 \notag \\
    & \qquad \qquad - \frac{2(1-\delta)}{\beta M^2} \e \Bigg [\textup{Tr} \Big (\sum_{i=1}^M (\bX_i - D) \Big ) \Big (\sum_{i=1}^M ( E^{\T} \bLambda_{i-1}) \Big )^{\T} \Bigg ]
\end{align}
where $\|\cdot\|_\textup{F}$ denotes the Frobenius norm and $D$ denotes the informativeness matrix with entries corresponding to $[D]_{kj} \triangleq d_k (\theta_j)$. Here, $(i)$ because of the i.i.d.\ assumption on data over time, and $(ii)$ under sufficiently small $\delta$ values that keep the probability of the event `$\widehat{\btheta}^\circ \neq \theta^\circ$' sufficiently small with increasing $M$ (see Theorem~\ref{th:asl_conv_dist} and also refer to \citep{bordignon_2021}), the first term satisfies
\begin{align}
     \frac{1}{M^2}  \e \Big \| \sum_{i=1}^M  (\bX_i - D) \Big \|_\textup{F}^2 &= \frac{1}{M} \e \big \|   \bX_i - D \big \|_\textup{F}^2 + \frac{2}{M^2} \sum_{i=1}^M \sum_{j < i} \e \textup{Tr} \Big [ \Big ( \bX_i - D \Big ) \Big (\bX_j - D \Big )^{\T} \Big ] \notag \\
     & = \frac{1}{M} \textup{Tr} (\mathcal{R}),
\end{align}
where we also have defined the covariance matrix of the log-likelihood ratios $\mathcal{R}$:
    \begin{equation}
        \mathcal{R} \triangleq \e \Big ( (\bX_i - D) (\bX_i -D)^{\T} \Big ) .
    \end{equation}
Moreover, expanding recursion \eqref{eq:gsp_recursion_linear}, it holds that
\begin{equation}
    \sum_{i=1}^M \bLambda_{i} = \sum_{i=1}^M (1-\delta)^i (A^i)^{\T} \bLambda_0 + \beta \sum_{i=1}^M \sum_{j=0}^{i-1} (1-\delta)^j (A^j)^{\T} \bX_{i-j}.
\end{equation}
Furthermore, by the triangle inequality, it follows that
\begin{align}
    \Big \|\sum_{i=1}^M \bLambda_{i} \Big \|_{\textup{F}} &\leq \Big \| \sum_{i=1}^M (1-\delta)^i (A^i)^{\T} \bLambda_0 \Big \|_\textup{F} + \beta \Big \| \sum_{i=1}^M \sum_{j=0}^{i-1} (1-\delta)^j (A^j)^{\T} \bX_{i-j} \Big \|_\textup{F} \notag \\
    &= O \Big (\min \Big \{M^2,\frac{M}{\delta} \Big \} \Big ) .
\end{align}
Hence, the second term in \eqref{eq:inform_proof_decomposition} satisfies
\begin{align}
     \frac{(1-\delta)^2}{\beta^2 M^2} \e \Big \|\sum_{i=1}^M E^{\T} \bLambda_{i-1} \Big \|_\textup{F}^2 = O \Bigg ( \left \|E \right \|^2 \min \Big \{M^2, \frac{1}{\delta^2} \Big\} \Bigg )
\end{align}
and the third term in \eqref{eq:inform_proof_decomposition} satisfies
\begin{align}
   \frac{(1-\delta)}{\beta M^2} \e \Bigg [\textup{Tr} \Big (\sum_{i=1}^M (\bX_i - D) \Big ) \Big (\sum_{i=1}^M ( E^{\T} \bLambda_{i-1}) \Big )^{\T} \Bigg ] = O \Bigg ( \dfrac{\left \|E \right \|}{M} \min \Big \{M, \frac{1}{\delta} \Big\} \Bigg )
\end{align}
Consequently, for sufficiently small combination matrix errors, namely,
\begin{equation}
    E = o \Big (\max \Big \{M^{-3/2}, \delta M^{-1/2} \Big  \} \Big ),
\end{equation}
it holds that
    \begin{equation}\label{eq:informativeness_error_in_expectation}
       \e \|\widehat{\bm{D}}- D\|_\textup{F}^2 \leq \dfrac{1}{M} \textup{Tr} (\mathcal{R}) + o (1/M  ).
    \end{equation}
Next, recall \eqref{eq:lambda_m_infty_asl} which establishes log-belief ratios under interventions for ASL. For the first term in \eqref{eq:lambda_m_infty_asl}, the matrix estimation error $E$ implies errors for the matrix components $R$ and $r$ that are also proportional to $E$, which means
\begin{align}\label{eq:matrix_error_1}
    &(I-(1-\delta)(R+ O(E))^{\T})^{-1} (r+O(E)) \notag \\
    & \qquad \qquad = (I-(1-\delta)R^{\T})^{-1} \Big (I- \big (I-(1-\delta)R^{\T} \big )^{-1} (1-\delta) O(E) \Big )^{-1} \big (r + O(E) \big ) \notag \\
    & \qquad \qquad \stackrel{(a)}{=} (I-(1-\delta)R^{\T})^{-1} \Big (I + (1-\delta) O(E) \Big ) \big (r + O(E) \big ) \notag \\
    & \qquad \qquad = \Big ((I-(1-\delta)R^{\T})^{-1} + (1-\delta) O (E) \Big ) \big (r + O(E) \big ) \notag \\
    & \qquad \qquad = (I-(1-\delta)R^{\T})^{-1} r + O (E) 
\end{align}
where $(a)$ holds as long as 
\begin{align}\label{eq:e_o_align_exp}
    E &= o \big ((I-(1-\delta)R^{\T})^{-1} \big ) \notag \\ &= o \big (1/ (1-(1-\delta) \rho (R) ) \big ) \notag \\
    & = o \Big (1/ \big (1- \rho (R) + \delta \rho (R) \big ) \Big ) \notag \\ &= o \big (1/ (1-\rho(R)) \big )
\end{align}
In \eqref{eq:e_o_align_exp} we assume that the graph topology is fixed (i.e., it is not changing with decreasing $\delta$). By the same arguments, an error in the matrix expression of the second term of \eqref{eq:lambda_m_infty_asl} implies
\begin{align}\label{eq:matrix_error_2}
    \Big (\big (I-(1-\delta)(R+O(E))^{\T} \big )^{-1}\!-\!I \Big ) & = \Big ((I-(1-\delta)R^{\T})^{-1} + (1-\delta) O (E) - I\Big )  \notag \\
    & = \Big ((I-(1-\delta)R^{\T})^{-1} - I \Big ) + O (E),
\end{align}
and an error in the matrix expression of the pre-intervention case of \eqref{eq:expected_logbelief_asl_idle} implies
\begin{align}\label{eq:matrix_error_3}
    \Big ((I-(1-\delta)\widehat{A}^{\T})^{-1} - I \Big ) &= \Big ((I-(1-\delta)(A+E)^{\T})^{-1} - I \Big ) \notag \\ &= \Big ((I-(1-\delta)A^{\T})^{-1} - I \Big ) + O(E).
\end{align}
Therefore, considering the fact that true causal effect can be written as
\begin{align}\label{eq:appendix_asl_true_cmk}
   \wcasl &= \mu_{k,\infty}(\theta^\circ)-\widetilde{\mu}_{k,\infty}(\theta^\circ) \notag \\
   & \stackrel{\eqref{eq:expected_log_belief_trans}}{=} \dfrac{1}{1 + \!\sum_{\theta \in \Theta \setminus \{ \theta^\circ\}}   \exp \{ - \lambda_{k,\infty}(\theta) \}} - \dfrac{1}{1 + \!\sum_{\theta \in \Theta \setminus \{ \theta^\circ\}}   \exp \{ - \wS_{k,\infty}(\theta) \}} \notag \\
   & \!\!\!\!\stackrel{\eqref{eq:expected_logbelief_asl_idle},\eqref{eq:lambda_m_infty_asl}}{=} \!\!\dfrac{1}{1 + \!\sum_{\theta \in \Theta \setminus \{ \theta^\circ\}}   \exp \Bigg \{ -  \Big [\dfrac{\beta}{1-\delta} \Big ((I-(1-\delta)A^{\T})^{-1} - I \Big ) d(\theta) \Big ]_k \Bigg \}} \notag \\ & \qquad     - \dfrac{1}{1 + \!\!\!\!\!\!\!\sum\limits_{\theta \in \Theta \setminus \{ \theta^\circ\}} \!\!\! \Big (\dfrac{\mu_m(\theta)}{\mu_m(\theta^\circ)} \Big )^{[(I-(1-\delta)R^{\T})^{-1} r]_k }  \!\!\!\exp \Bigg \{\! -  \!\Big  [\dfrac{\beta}{1-\delta}  \Big ((I-(1-\delta)R^{\T})^{-1}\!-\!I \Big )  d_{-m} (\theta) \Big ]_k \Bigg \}},
\end{align}
by \eqref{eq:matrix_error_1}, \eqref{eq:matrix_error_2}, and \eqref{eq:matrix_error_3}, the causal effect estimate satisfies 
\begin{align}\label{eq:appendix_asl_cmk_estimate}
    &\wcasles \notag \\ &=  \!\!\dfrac{1}{1 + \!\sum_{\theta \in \Theta \setminus \{ \theta^\circ\}}   \exp \Bigg \{ -  \Big [\dfrac{\beta}{1-\delta} \Big ((I-(1-\delta)A^{\T})^{-1} - I + O(E) \Big ) \widehat{\bm{d}}(\theta) \Big ]_k \Bigg \}} \notag \\ & \:     - \dfrac{1}{1 + \!\!\!\!\!\!\!\sum\limits_{\theta \in \Theta \setminus \{ \theta^\circ\}} \!\!\! \Big (\dfrac{\mu_m(\theta)}{\mu_m(\theta^\circ)} \Big )^{[(I-(1-\delta)R^{\T})^{-1} r]_k + O(E) }  \!\!\!\exp \Bigg \{\! -  \!\Big  [\dfrac{\beta}{1-\delta}  \Big ((I-(1-\delta)R^{\T})^{-1}\!-\!I + O(E) \Big )  \widehat{\bm{d}}_{-m} (\theta) \Big ]_k \Bigg \}}.
\end{align}
Here, $\widehat{\bm{d}}(\theta)$ and $\widehat{\bm{d}}_{-m} (\theta)$ are the estimates for $d(\theta)$ and $d_{-m} (\theta)$ respectively, which can be formed from informativeness estimation matrix $\widehat{\bm{D}}$. In addition, the scalar function $g$ with vector argument $\lambda$ (over $\Theta \setminus \{ \theta^\circ\}$) defined as
\begin{equation}
    g (\lambda) \triangleq \dfrac{1}{1+ \sum\limits_{\theta \in \Theta \setminus \{ \theta^\circ\}} \exp \{ - \lambda(\theta) \}} 
\end{equation}
has the partial derivatives
\begin{equation}
    \frac{\partial g(\lambda)}{\partial \lambda(\theta)} = \frac{\exp \{ -\lambda(\theta) \}}{\Big (1+\sum\limits_{\theta \in \Theta \setminus \{ \theta^\circ\}} \exp \{ -\lambda(\theta) \} \Big )^2},
\end{equation}
which implies that
\begin{align}\label{eq:l1_norm_smaller}
    \|\nabla g(\lambda)\|_1 &= \sum_{\theta \in \Theta \setminus \{ \theta^\circ\}} \left| \frac{\partial g(\lambda)}{\partial \lambda(\theta)} \right| \notag \\ &= \sum_{\theta \in \Theta \setminus \{ \theta^\circ\}} \frac{\exp \{ -\lambda(\theta) \}}{\Big (1+\sum\limits_{\theta \in \Theta \setminus \{ \theta^\circ\}} \exp \{ -\lambda(\theta) \} \Big )^2} \: \: \leq 1 
\end{align}
By the mean-value theorem, this further implies that for any two vectors $\lambda_1$ and $\lambda_2$, there exists a vector $\lambda_3$ such that
\begin{equation}\label{eq:mvt_naive}
    g(\lambda_1) - g(\lambda_2) = \nabla g(\lambda_3)^{\T} (\lambda_1-\lambda_2).
\end{equation}
Then, using Hölder's inequality on \eqref{eq:mvt_naive} yields that 
\begin{align}\label{eq:contraction_final_result}
    \Big |g(\lambda_1) - g(\lambda_2) \Big | &\leq \big \|\nabla g(\lambda_3) \big \|_1 \big \|\lambda_1-\lambda_2 \big \|_{\infty} \notag \\ & \!\!\stackrel{\eqref{eq:l1_norm_smaller}}{\leq} \big \|\lambda_1-\lambda_2 \big \|_{\infty}.
\end{align}
Note that the norm $\| \cdot \|_{\infty}$ chooses the absolute maximum over all arguments induced from different hypotheses $\theta \in \Theta \setminus \{ \theta^\circ\}$. If we apply \eqref{eq:contraction_final_result} to the differences of first and second terms in \eqref{eq:appendix_asl_true_cmk} and \eqref{eq:appendix_asl_cmk_estimate}, we get
\begin{align}\label{eq:wo_expectation_causal_est_dif}
   &\Big |  \wcasl - \wcasles \Big | \notag \\ &\leq \dfrac{\beta}{1-\delta} \Bigg \|  \Big [ \Big ((I-(1-\delta)A^{\T})^{-1} - I \Big ) d(\theta) - \Big ((I-(1-\delta)A^{\T})^{-1} - I + O(E) \Big ) \widehat{\bm{d}}(\theta) \Big ]_k \Bigg \|_{\infty} \notag \\ &  + \Bigg \| O(E) \log \dfrac{\mu_m(\theta)}{\mu_m(\theta^\circ)} + \dfrac{\beta}{1-\delta}  \!\Big  [  \Big ((I-(1-\delta)R^{\T})^{-1}\!-\!I \Big )  d_{-m} (\theta) - \Big ((I-(1-\delta)R^{\T})^{-1}\!-\!I + O(E) \Big )  \widehat{\bm{d}}_{-m} (\theta) \Big ]_k  \Bigg \|_{\infty} \notag \\
   &\leq \dfrac{\beta}{1-\delta} \Bigg \|  \Big [ \Big ((I-(1-\delta)A^{\T})^{-1} - I \Big ) \Big ( d(\theta) - \widehat{\bm{d}}(\theta) \Big )  + O(E) \widehat{\bm{d}}(\theta) \Big ]_k \Bigg \|_{\infty} \notag \\
   & \quad + \Bigg \| O(E) + \dfrac{\beta}{1-\delta}  \!\Big  [  \Big ((I-(1-\delta)R^{\T})^{-1}\!-\!I \Big ) \! \Big ( d_{-m} (\theta) - \widehat{\bm{d}}_{-m} (\theta) \Big ) - O(E)  \widehat{\bm{d}}_{-m} (\theta) \Big ]_k  \Bigg \|_{\infty} \notag \\
   & \stackrel{(a)}{\leq} \dfrac{\beta}{1-\delta} \Bigg \|  \Big [ \Big ((I-(1-\delta)A^{\T})^{-1} - I \Big ) \Big ( d(\theta) - \widehat{\bm{d}}(\theta) \Big ) \Big ]_k \Bigg \|_{\infty} \notag \\ & \qquad +  \dfrac{\beta}{1-\delta} \Big \| \Big ( (I-(1-\delta)R^{\T})^{-1}-I \Big )  \Big ( d_{-m} (\theta) - \widehat{\bm{d}}_{-m} (\theta) \Big ) \Big  \|_{\infty} + O(E) \notag \\
   & \stackrel{(b)}{\leq} \dfrac{\beta}{1-\delta} \Big \| (I-(1-\delta)A^{\T})^{-1} - I \Big \|_{\infty} \Big \| d(\theta) - \widehat{\bm{d}}(\theta) \Big \|_{\infty} \notag \\ & \qquad +  \dfrac{\beta}{1-\delta} \Big \| (I-(1-\delta)R^{\T})^{-1}-I \Big \|_{\infty}  \Big \| d_{-m} (\theta) - \widehat{\bm{d}}_{-m} (\theta) \Big \|_{\infty} + O(E)
\end{align}
where $(a)$ follows from the triangle inequality, and $(b)$ follows from the sub-multiplicity of the matrix norms. Finally,
since \eqref{eq:informativeness_error_in_expectation} implies
\begin{equation}\label{eq:d_theta_dif_exp_1}
    \e \Big \| d(\theta) - \widehat{\bm{d}}(\theta) \Big \|_{\infty} = O\left(\dfrac{1}{\sqrt{M}} \right)
\end{equation}
and
\begin{equation}\label{eq:d_theta_dif_exp_2}
    \e \Big \| d_{-m} (\theta) - \widehat{\bm{d}}_{-m} (\theta) \Big \|_{\infty}  = O \left ( \dfrac{1}{\sqrt{M}} \right ),
\end{equation}
taking the expectation of both sides in \eqref{eq:wo_expectation_causal_est_dif} and incorporating \eqref{eq:d_theta_dif_exp_1} and \eqref{eq:d_theta_dif_exp_2} concludes that
\begin{equation}
    \e \: \Big |  \wcasl - \wcasles \Big | =  O(1/\sqrt{M}).
\end{equation}
In the case of NBSL, the only source of disagreement between $\wcnb$ and $\wcnbsles$ is at post-intervention, since pre-intervention steady state beliefs are fixed and identical (see \eqref{eq:cmknb_def}), thus their difference equals 0. Also, the proof for the post-intervention difference proceeds along the same lines of the ASL case above.

\section{Discussion of Computational Complexity}\label{appendix:complexity}

First, assuming that the matrix \(A\) and the informativeness vector \(d(\theta)\) for each hypothesis are known, the computational tasks for calculating the derived causal effect expressions \eqref{eq:general_lambda} and \eqref{eq:lambda_m_infty_asl} involve the following:
\begin{itemize}
    \item Calculating \((I - (1-\delta) R^{\T})^{-1}\) for one agent can be achieved in \(O(K^3)\) time with naive matrix operations, where $K$ is the size of the network. Performing this operation for all agents over the network \textit{naively} results in \(O(K^4)\) worst case complexity. However, as we show in the sequel, doing such operations independently for each agent is highly inefficient, and in fact, all calculations can be done with \(O(K^3)\) instead of \(O(K^4)\). 
    \item For the matrix-vector multiplications, the complexity for each agent is \(O(K^2)\), which leads to a complexity of \(O(K^3)\) when we consider all \(K\) agents.
    \item The computation to find the eigenvectors of the $K \times K$ causal influence matrix also involves a computation time of \(O(K^3)\). 
\end{itemize}
Therefore, the total computational complexity with known $A$ and $d(\theta)$ is \(O(K^3)\). 
If, however, \(A\) and \(d(\theta)\) need to be estimated first as in Alg.~\ref{alg:social_causal_learning}, we need to perform the following additional operations:
\begin{itemize}
    \item Estimating \(A\), based on the averaging rule, is \(O(K)\) or \(O(K^2)\) depending on the utilized graph data structure.
    \item Estimating \(d(\theta)\) involves learning from the average over \(M\) observations. Here, each computation involves a matrix multiplication between \(A\) and \(\Lambda\) (i.e., the observational data) which has \(O(K^2)\) complexity.
\end{itemize}
Consequently, considering causal discovery from observational data phase, the total computational complexity for causal ranking of the network is \(O(K^3 + M \cdot K^2)\). 

Now, we return to the calculation of \((I - (1-\delta) R^{\T})^{-1}\) for each agent and discuss how $O(K^3)$ can be obtained by exploiting the fact that the $K$ inverse computations are not independent. 
For each agent $k=1,\dots,K$, let $R_k$ denote the $(K-1)\times (K-1)$ submatrix obtained by deleting the $k$-th row and the $k$-th column of $A$. We treat the ASL case $\delta>0$ and the NBSL case $\delta=0$ separately.

\paragraph{1) ASL:}  Since $\delta>0$, $I-(1-\delta)A^{\T}$ is invertible. Let $G_{\delta}\triangleq (I-(1-\delta)A^{\T})^{-1}$. The Schur complement identity hence implies that
\begin{equation}\label{eq:schurs_identity}
\big(I-(1-\delta)R_k^{\T}\big)^{-1}
=
(G_{\delta})_{-k,-k}
-
\frac{(G_{\delta})_{-k,k}(G_{\delta})_{k,-k}}{[G_{\delta}]_{kk}},
\end{equation}
where $(G_{\delta})_{-k,-k}$ denotes the submatrix of $G_{\delta}$ obtained by deleting the $k$-th row and column, and $(G_{\delta})_{-k,k}$ and $(G_{\delta})_{k,-k}$ denote the corresponding column and row blocks. Hence, it is sufficient to invert the full matrix $I-(1-\delta)A^{\T}$ once, which is $O(K^3)$. Recovering each individual inverse from the above formula costs $O(K^2)$ (the complexity of outer product) and therefore recovering all $K$ inverses costs $O(K^3)+K\cdot O(K^2)=O(K^3)$. 
\paragraph{2) NBSL:} In this case, $\delta=0$ and $I-A^{\T}$ is singular. 
Therefore, Schur complement argument for ASL cannot be applied directly. 
However, all required inverses can still be recovered in total $O(K^3)$ time as we show next. 

\noindent For each agent $k$, $e_k$ denoting the $k$-th canonical basis vector,
\[
B_k \triangleq I-(I-e_ke_k^{\T})A^{\T}
\]
replaces the $k$-th row of $I-A^{\T}$ by $e_k^{\T}$. Furthermore, let $P_k$ be a permutation matrix that moves the $k$-th coordinate to the first position. Then, it holds that
\[
P_k^{\T}B_kP_k
=
\begin{bmatrix}
1 & 0 \\
-r_k^{\T} & I-R_k^{\T}
\end{bmatrix},
\]
where $r_k$ is the $k$-th row of $A$ with its $k$-th entry removed, written as a $(K-1)\times 1$ column vector. Therefore,
\[
(P_k^{\T}B_kP_k)^{-1}
=
\begin{bmatrix}
1 & 0 \\
(I-R_k^{\T})^{-1}r_k^{\T} & (I-R_k^{\T})^{-1}
\end{bmatrix},
\]
and thus $(I-R_k^{\T})^{-1}$ is exactly the lower-right $(K-1)\times (K-1)$ block of $(P_k^{\T}B_kP_k)^{-1}$. Now, if we fix a reference agent, say $k=1$ as before, computing $B_1^{-1}$ once is $O(K^3)$. For any other agent $k$, by definition, we have
\[
B_k
=
B_1 + e_k(Ae_k)^{\T}-e_1(Ae_1)^{\T}
=
B_1 + U_kV_k^{\T},
\]
where
\[
U_k \triangleq \big[\, e_k \;\; -e_1 \,\big],
\qquad
V_k \triangleq \big[\, Ae_k \;\; Ae_1 \,\big].
\]
Thus, $B_k$ differs from $B_1$ by a rank-$2$ update. By the Woodbury identity,
\[
B_k^{-1}
=
B_1^{-1}
-
B_1^{-1}U_k
\big(I+V_k^{\T}B_1^{-1}U_k\big)^{-1}
V_k^{\T}B_1^{-1}.
\]
Since $U_k$ and $V_k$ each have only two columns, computing $B_k^{-1}$ from $B_1^{-1}$ costs $O(K^2)$ per agent. Hence, after one initial $O(K^3)$ inversion for the reference agent, all $K$ matrices $\{(I-R_k^{\T})^{-1}\}_{k=1}^K$ can be recovered in total
\[
O(K^3)+K\cdot O(K^2)=O(K^3)
\]
time. 

\section{Additional Experiments}
\label{appendix:additional_experiments}
In this section, we report additional experiments that complement the results in Section~\ref{sec:computer_sims}.

\paragraph{Robustness to follower-inflation attacks} 
To test whether the robustness observed in the main text persists beyond the already considered setting, we repeat the follower-inflation experiment on Erd\H{o}s--Renyi networks of different sizes. For each $K \in \{20,40,60,80\}$, we generate $50$ independent Erd\H{o}s--R\'enyi networks with edge probability $0.3$. In each network, one agent is selected uniformly at random as the attacked user and we progressively add bot accounts that act as artificial followers of this agent. 

The results are shown in Figs.~\ref{fig:bot_attack_raw_rank_20}--\ref{fig:bot_attack_raw_rank_80}. Across all considered network sizes, the behavior is consistent with that reported in the main text. In contrast to the other methods, {\sf \small CausalRank} remains stable as bots are added. 
These experiments therefore prove that the robustness of {\sf \small CausalRank} is not specific to a single network size, but persists across substantially larger Erd\H{o}s--R\'enyi graphs as well.

\begin{figure}
  \centering
  \includegraphics[width=.63\linewidth]{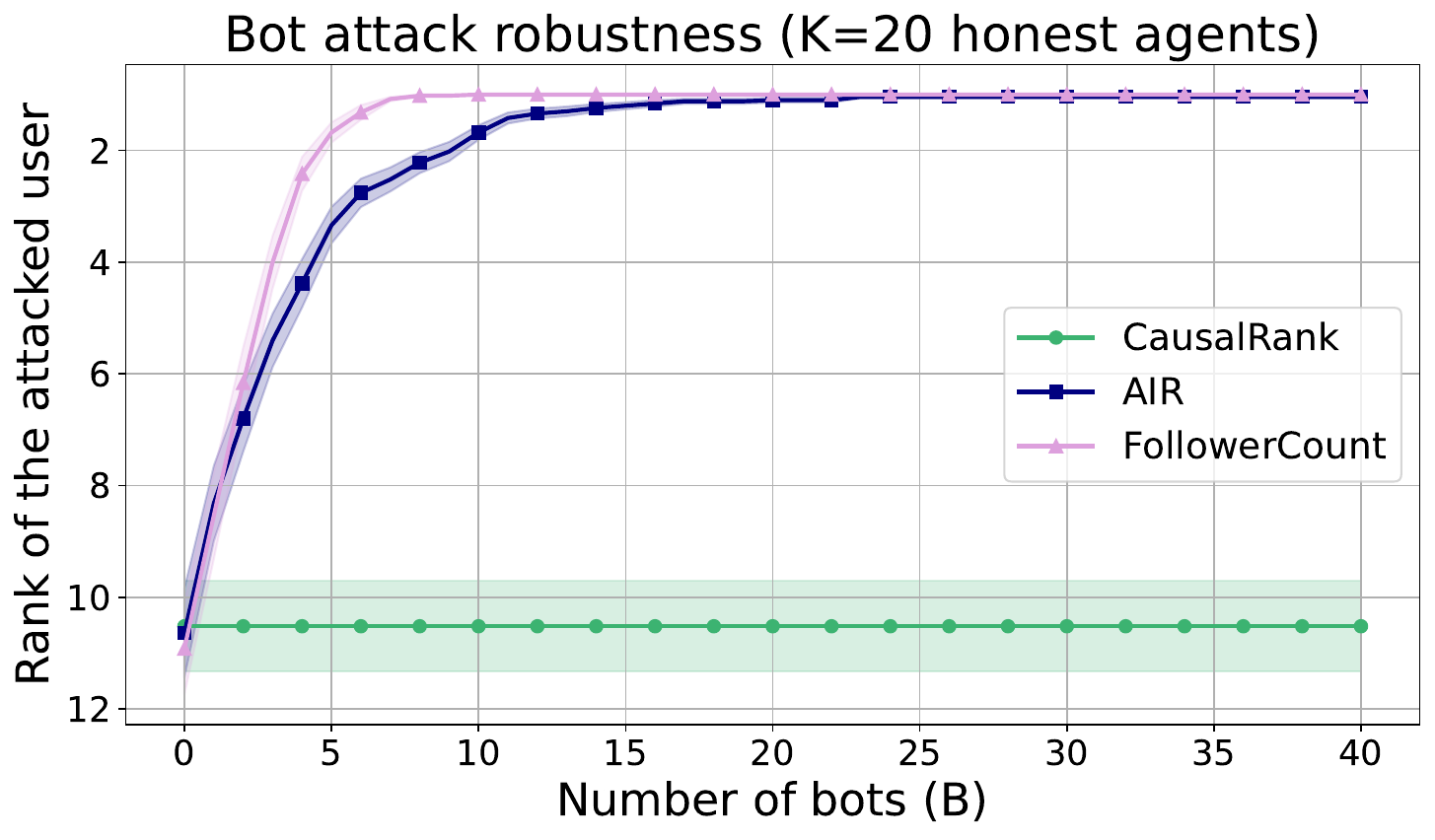}
  \caption{Average rank of the attacked agent versus the number of added bots for Erd\H{o}s--R\'enyi networks with $K=20$ and edge probability $0.3$, averaged over $50$ independent realizations.}
  \label{fig:bot_attack_raw_rank_20}
\end{figure}

\begin{figure}
  \centering
  \includegraphics[width=.63\linewidth]{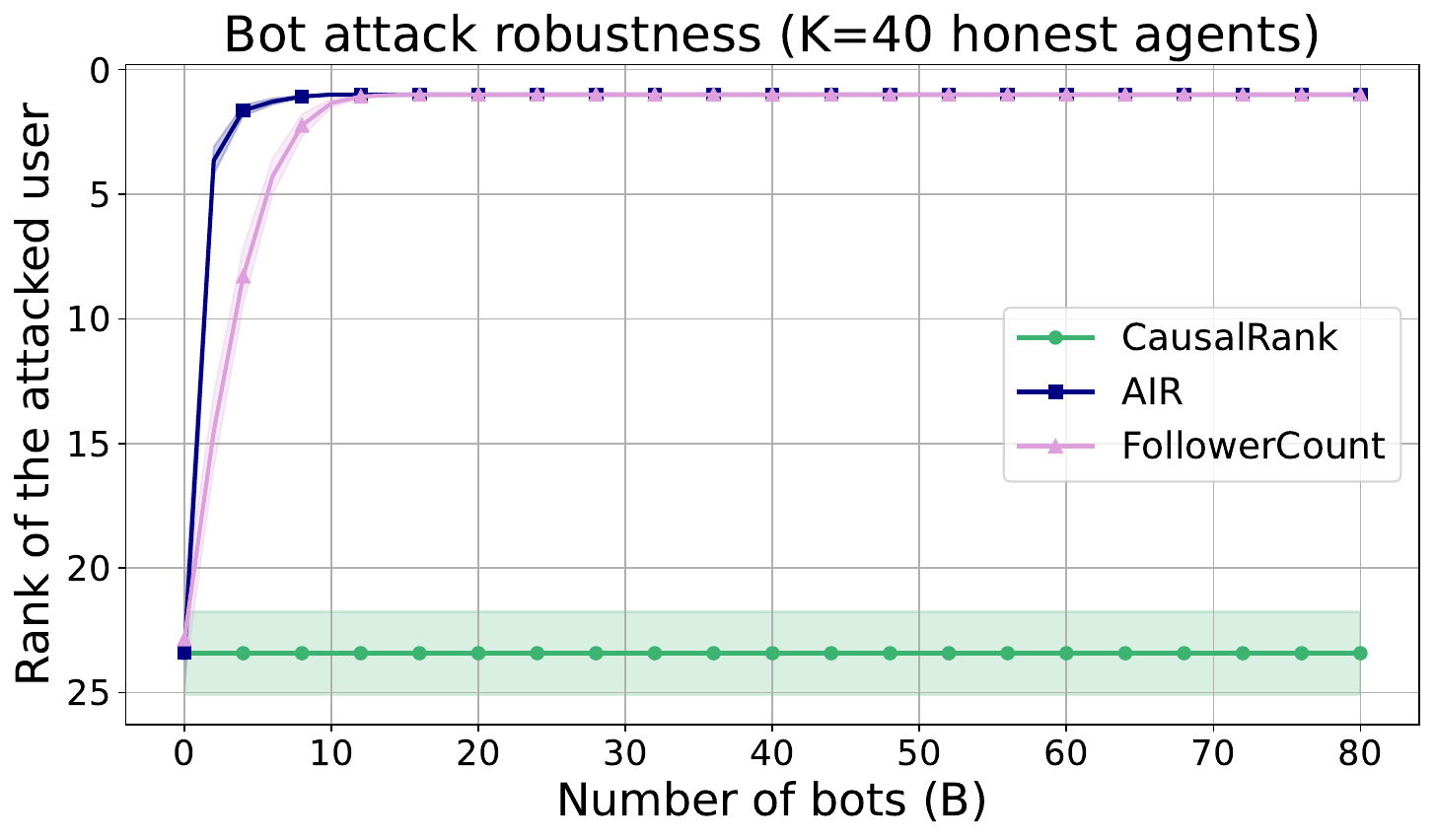}
  \caption{Average rank of the attacked agent versus the number of added bots for Erd\H{o}s--R\'enyi networks with $K=40$ and edge probability $0.3$, averaged over $50$ independent realizations.}
  \label{fig:bot_attack_raw_rank_40}
\end{figure}

\begin{figure}
  \centering
  \includegraphics[width=.63\linewidth]{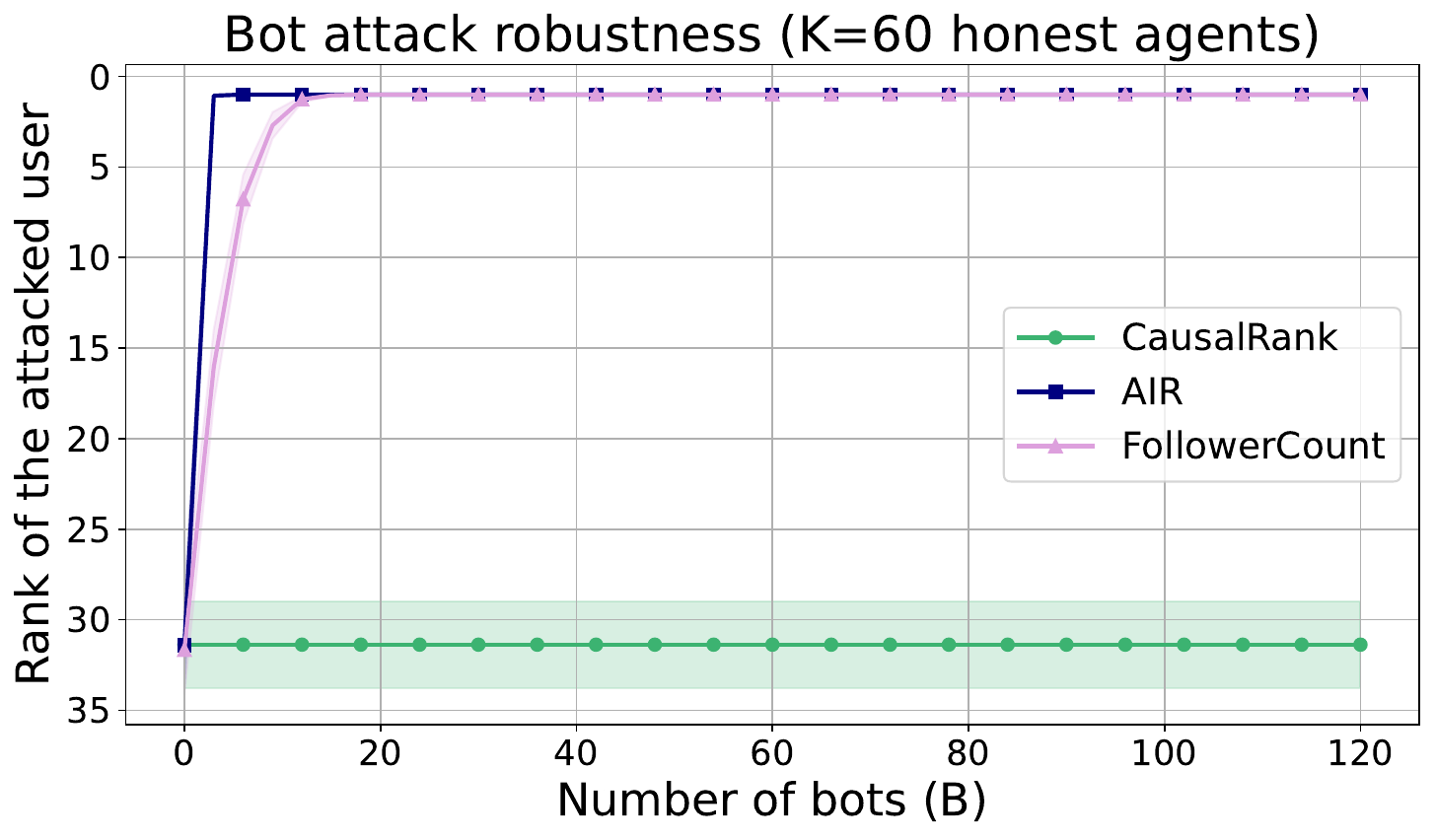}
  \caption{Average rank of the attacked agent versus the number of added bots for Erd\H{o}s--R\'enyi networks with $K=60$ and edge probability $0.3$, averaged over $50$ independent realizations.}
  \label{fig:bot_attack_raw_rank_60}
\end{figure}

\begin{figure}
  \centering
  \includegraphics[width=.63\linewidth]{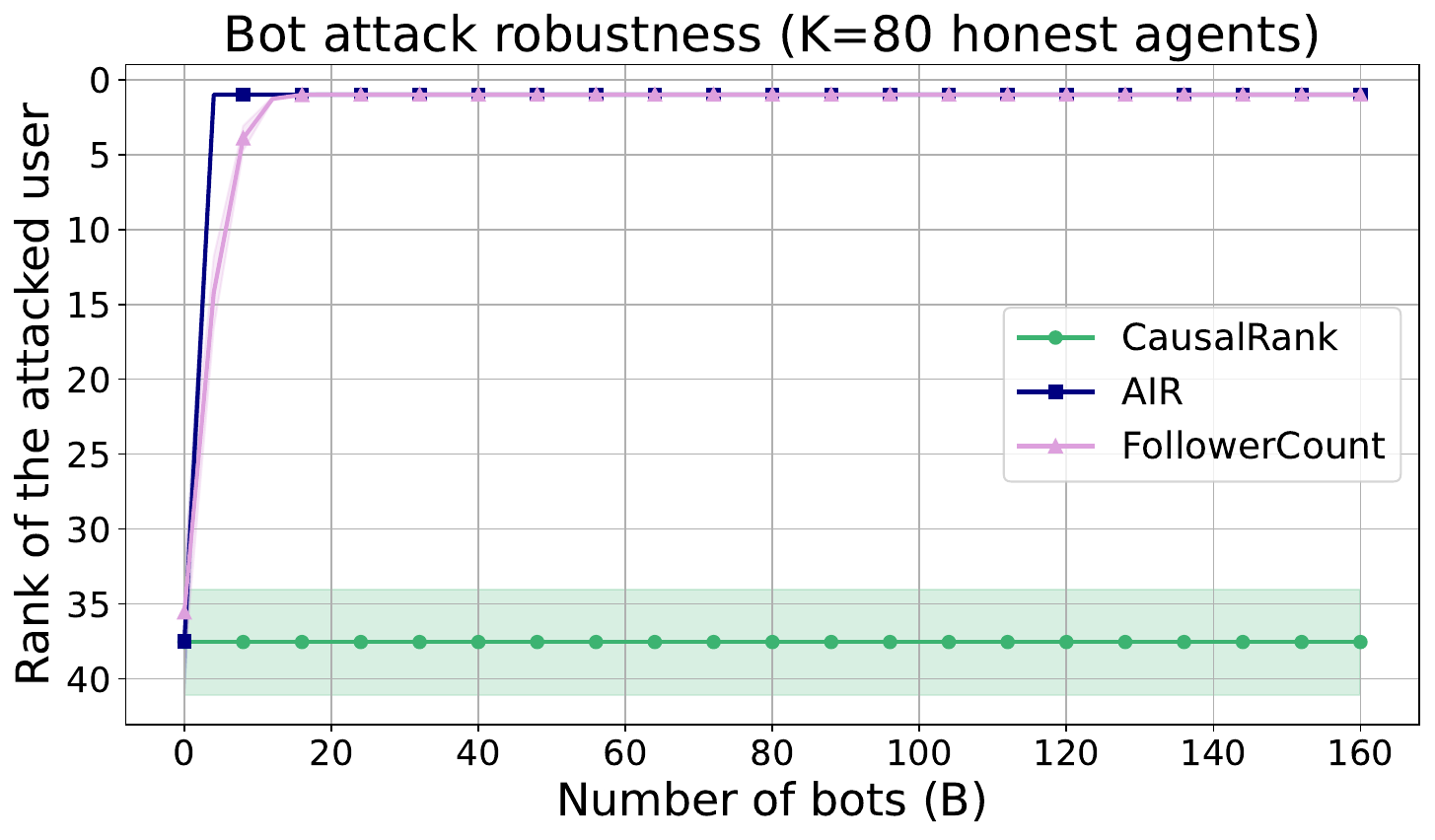}
  \caption{Average rank of the attacked agent versus the number of added bots for Erd\H{o}s--R\'enyi networks with $K=80$ and edge probability $0.3$, averaged over $50$ independent realizations.}
  \label{fig:bot_attack_raw_rank_80}
\end{figure}

\paragraph{CPU-time comparisons} 
We next report additional CPU-timing results for the computation of the causal influence matrix and the rankings. 
In Fig.~\ref{fig:cpu_time_asl}, we show the total CPU time for ASL with $\delta = 0.1$ as a function of network size $K$, which shows that the behavior with NBSL is present also with ASL. 

In Figs.~\ref{fig:cpu_comparison_nbsl} and \ref{fig:cpu_comparison_asl}, we compare the naive $O(K^4)$ implementation with the accelerated $O(K^3)$ formulation derived in Appendix~\ref{appendix:complexity}.
As we discussed in Appendix~\ref{appendix:complexity}, the naive implementation computes $K$ independent matrix inversions, whereas the accelerated version recovers all $K$ required inverses from a single $K \times K$ inversion by exploiting the Woodbury identity in the NBSL case ($\delta = 0$) and the Schur's complement in the ASL case ($\delta > 0$). The empirical results verify that the speedup relative to the naive approach can be substantial (reaching to approximately $10\times$ speedup by $K=80$).

\begin{figure}
  \centering
  \includegraphics[width=.62\linewidth]{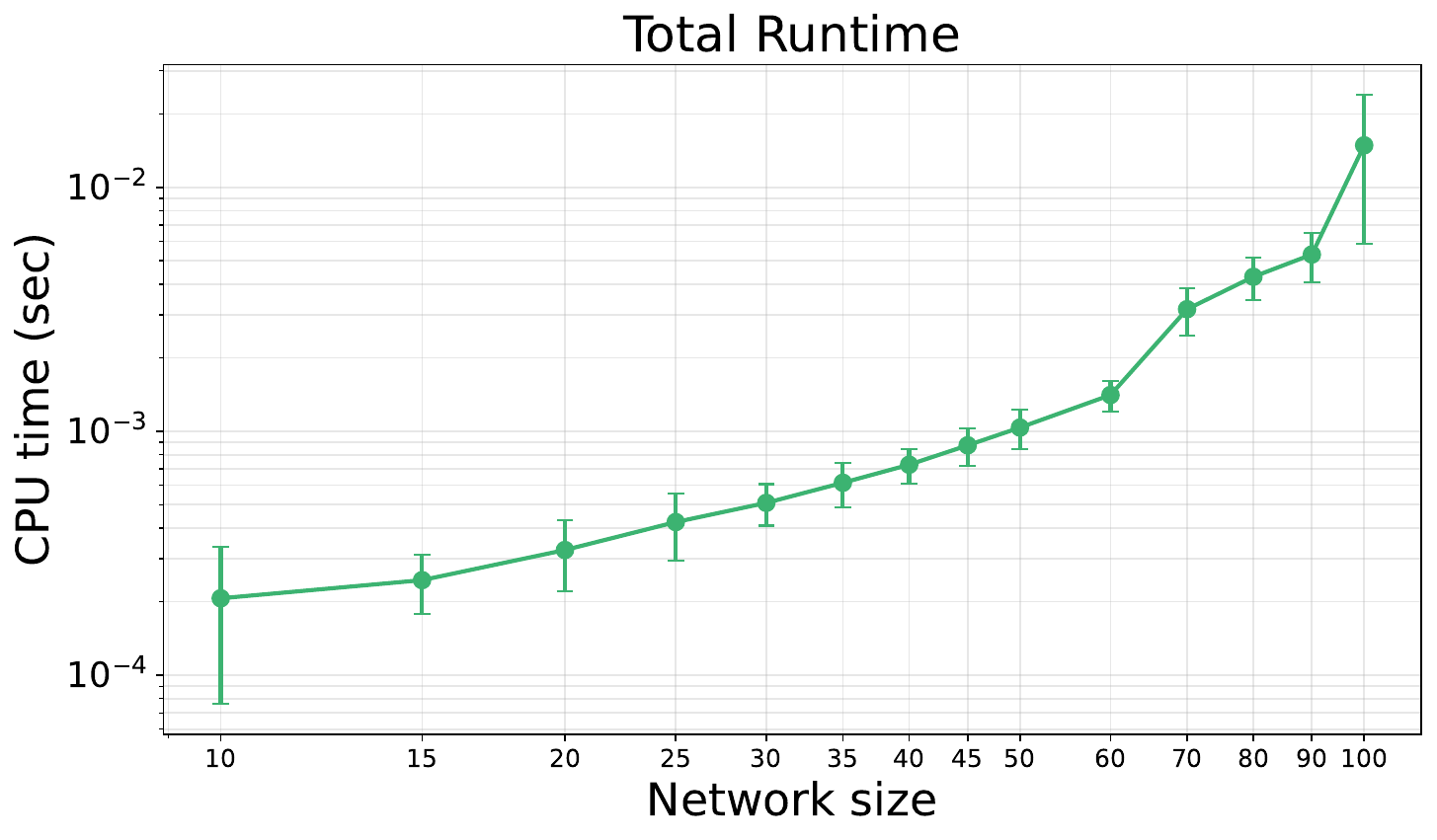}
  \caption{Total CPU time for ASL with $\delta=0.1$ as a function of network size $K$.}
  \label{fig:cpu_time_asl}
\end{figure}

\begin{figure}
  \centering
  \includegraphics[width=.62\linewidth]{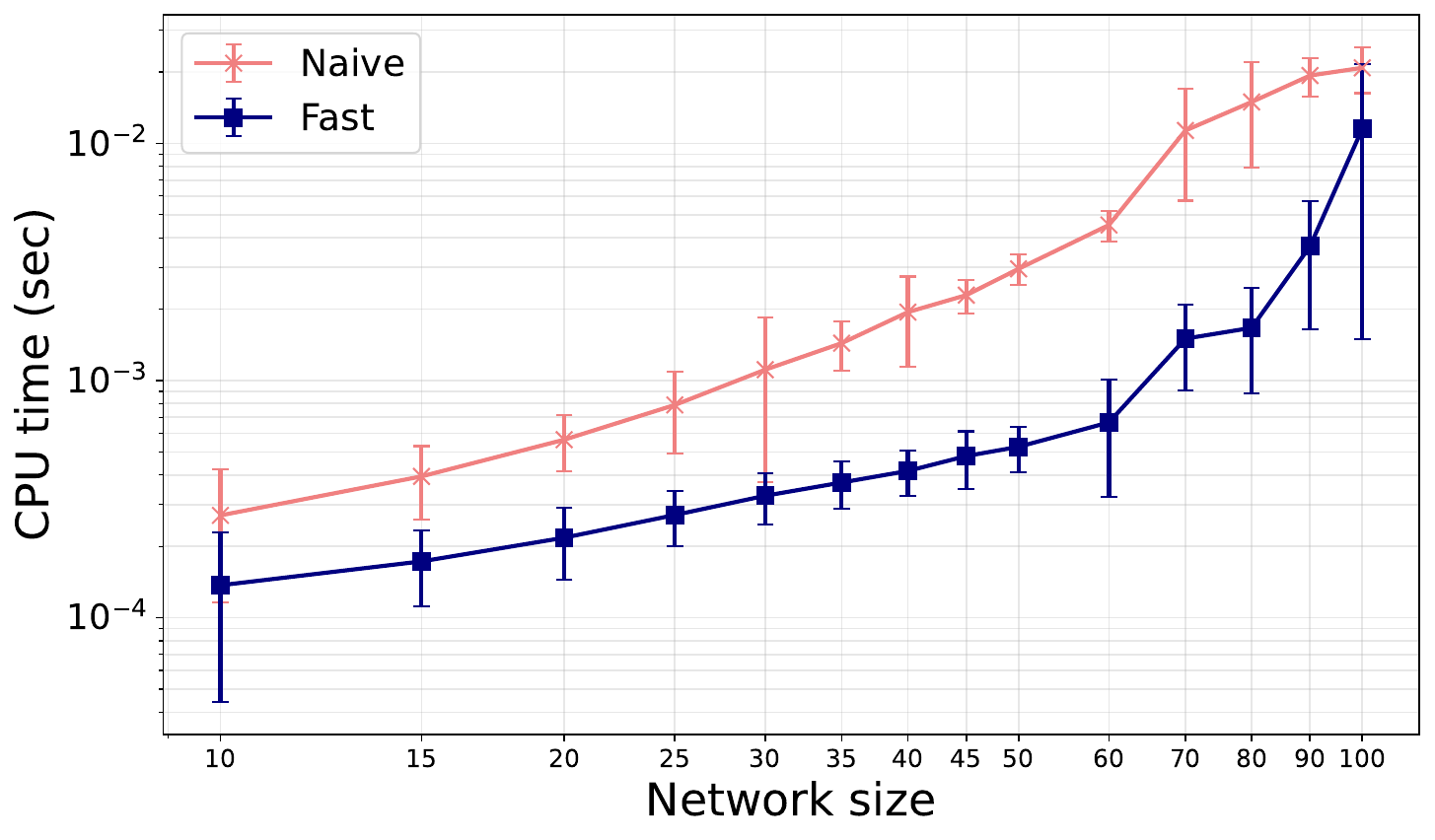}
  \caption{CPU-time comparison between the naive $O(K^4)$ and accelerated $O(K^3)$ implementations for NBSL.}
  \label{fig:cpu_comparison_nbsl}
\end{figure}

\begin{figure}
  \centering
  \includegraphics[width=.62\linewidth]{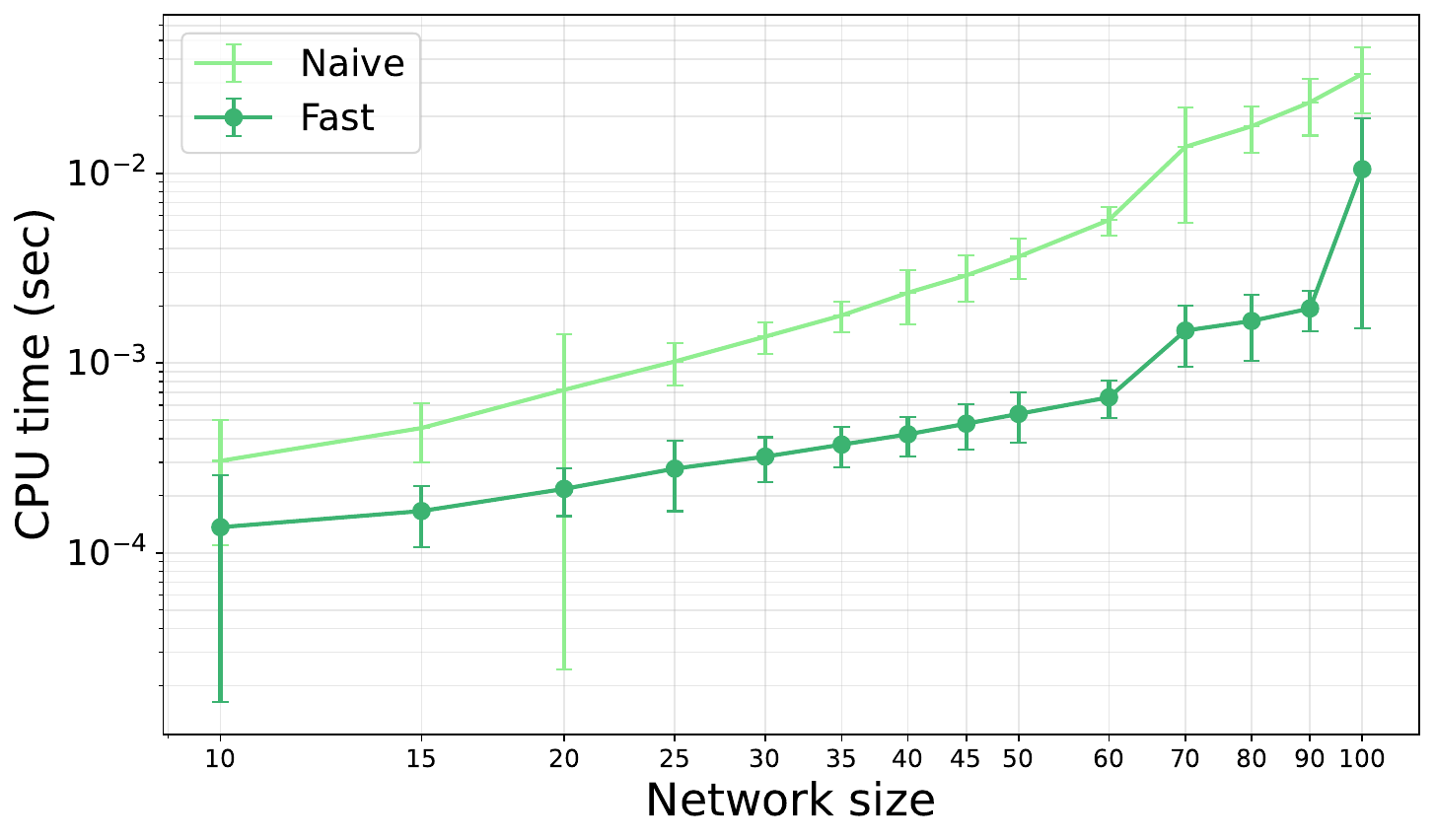}
  \caption{CPU-time comparison between the naive $O(K^4)$ and accelerated $O(K^3)$ implementations for ASL.}
  \label{fig:cpu_comparison_asl}
\end{figure}

\newpage

\bibliography{sample}

\end{document}